\newcommand{\ket}[1]{{\left\vert{#1}\right\rangle}}
\newcommand{\qw}[1][-1]{\ar @{-} [0,#1]}
\newcommand{\qwx}[1][-1]{\ar @{-} [#1,0]}
\newcommand{\gate}[1]{*+<.6em>{#1} \POS ="i","i"+UR;"i"+UL **\dir{-};"i"+DL **\dir{-};"i"+DR **\dir{-};"i"+UR **\dir{-},"i" \qw}
\newcommand{\control}{*!<0em,.025em>-=-<.2em>{\bullet}}
\newcommand{\ctrl}[1]{\control \qwx[#1] \qw}
\newcommand{\targ}{*+<.02em,.02em>{\xy ="i","i"-<.39em,0em>;"i"+<.39em,0em> **\dir{-}, "i"-<0em,.39em>;"i"+<0em,.39em> **\dir{-},"i"*\xycircle<.4em>{} \endxy} \qw}
\newcommand{\qswap}{*=<0em>{\times} \qw}
\newcommand{\multigate}[2]{*+<1em,.9em>{\hphantom{#2}} \POS [0,0]="i",[0,0].[#1,0]="e",!C *{#2},"e"+UR;"e"+UL **\dir{-};"e"+DL **\dir{-};"e"+DR **\dir{-};"e"+UR **\dir{-},"i" \qw}
\newcommand{\ghost}[1]{*+<1em,.9em>{\hphantom{#1}} \qw}
\newcommand{\Qcircuit}{\xymatrix @*=<0em>}
\theoremstyle{plain}
\newtheorem{theorem}{Theorem}
\newtheorem{proposition}[theorem]{Proposition}
\newtheorem{lemma}[theorem]{Lemma}
\newtheorem{corollary}[theorem]{Corollary}
\newtheorem{conjecture}[theorem]{Conjecture}
\newcommand{\horizontally}{\hspace{5pt}}
\newcommand{\abs}[1]{\left\lvert#1\right\rvert}
\let\P\relax
\let\R\relax
\let\class\relax
\newcommand{\CC}{\mathcal{C}}
\newcommand{\P}{\mathcal{P}}
\newcommand{\clifford}{\mathcal{C}}
\newcommand{\NOT}{\operatorname{NOT}}
\newcommand{\CNOT}{\operatorname{CNOT}}
\newcommand{\CSIGN}{\operatorname{CZ}}
\newcommand{\SWAP}{\operatorname{SWAP}}
\newcommand{\R}{\operatorname{R}}
\newcommand{\tgate}{\operatorname{T}}
\newcommand{\tfour}{\tgate_4}
\newcommand{\CNOTNOT}{\operatorname{CNOTNOT}}
\newcommand{\Fredkin}{\operatorname{Fredkin}}
\newcommand{\class}[1]{ \left< #1 \right>}
\newcommand{\ALL}{\mathsf{ALL}}
\newcommand{\ftwo}{\mathbb{F}_2}
\newcommand{\subring}{\mathsf{S}}
\newcommand{\ring}{\mathsf{R}}
\newcommand{\Tstar}{\mathcal{M}}
\let\G\relax
\newcommand{\G}{\mathcal{G}}
\title{The Classification of Clifford Gates over Qubits}
\author{Daniel Grier}
\affiliation{University of Waterloo, Cheriton School of Computer Science}
\email{daniel.grier@uwaterloo.ca}
\author{Luke Schaeffer}
\affiliation{University of Waterloo, Department of Combinatorics and Optimization}
\email{lrschaeffer@gmail.com}
\begin{document}


\begin{abstract}

We examine the following problem:  given a collection of Clifford gates, describe the set of unitaries generated by circuits composed of those gates. Specifically, we allow the standard circuit operations of composition and tensor product, as well as ancillary workspace qubits as long as they start and end in states uncorrelated with the input, which rule out common ``magic state injection'' techniques that make Clifford circuits universal.  We show that there are exactly 57 classes of Clifford unitaries and present a full classification characterizing the gate sets which generate them.  This is the first attempt at a quantum extension of the classification of reversible classical gates introduced by Aaronson et al., another part of an ambitious program to classify all quantum gate sets.

The classification uses, at its center, a reinterpretation of the tableau representation of Clifford gates to give circuit decompositions, from which elementary generators can easily be extracted.  The 57 different classes are generated in this way, 30 of which arise from the single-qubit subgroups of the Clifford group.  At a high level, the remaining classes are arranged according to the bases they preserve.  For instance, the CNOT gate preserves the $X$ and $Z$ bases because it maps $X$-basis elements to $X$-basis elements and $Z$-basis elements to $Z$-basis elements.  The remaining classes are characterized by more subtle tableau invariants; for instance, the $\tfour$ and phase gate generate a proper subclass of $Z$-preserving gates.
\end{abstract}

\section{Introduction}
A common thread throughout quantum computing is the manner in which a few elementary gates often suffice for universal computation.  This ``pervasiveness of universality'' is explored in recent work of Aaronson, Grier, and Schaeffer \cite{ags:2015}.  There, the authors give a complete classification of \emph{classical} reversible gates in terms of the functions over bits they generate and find that a rich structure emerges.  

Of course, the ultimate goal would be a complete classification of \emph{quantum} gate sets based on the functions over qubits they generate.  Unfortunately, not even a full classification of the subgroups of a three-qubit system is known.\footnote{The difficulty in classifying the subgroups of $\mathrm{SU}(N)$ arises not from the infinite classes but from the finite ones.  In fact, even the finite subgroups of $\mathrm{SU}(5)$  remain unclassified. This motivates our focus on finite, discrete classes such as the Clifford group.}  Since each class of gates is a subgroup, this suggests that a complete classification remains out of reach. This might be surprising given how well we understand random gate sets, and even those that contain particular gates such as CNOT \cite{barenco, shi:gate}.  However, a full classification begets a complete understanding of \emph{all} possible behaviors, despite their strangeness or rarity (see, for example, the sporadic gate sets in the lattice of Aaronson et al.\ \cite{ags:2015}).  Nevertheless, there has been some encouraging progress on classification problems:  for Hamiltonians, Bouland, Man\u{c}inska, and Zhang \cite{bmz:2015} classified all 2-qubit commuting Hamiltonians, while Childs et al.\ \cite{childs:2010} characterized all 2-qubit Hamiltonians when restricted to circuits over two qubits; for linear optics, Aaronson and Bouland \cite{aarbouland} completed a classification for linear optics of $2$-mode beamsplitters; and for Clifford+$T$ circuits, Amy, Glaudell, and Ross \cite{amyglaudellross:2020} give a type of classification based on the elements appearing in their representations as unitary matrices.

This paper contributes a new classification of quantum gate sets by giving a complete classification of the Clifford gates, the set of unitaries normalizing the Pauli group.  To provide some context, the Clifford gates are generated by the CNOT gate, the Hadamard gate, and the $\frac{\pi}{4}$-phase gate.  It is not hard to see that the Clifford operations on $n$ qubits are a discrete, finite set, so it has always been widely assumed that they do not suffice for universal quantum computation.  Indeed, Gottesman and Knill \cite{gottesman:1998} showed that they could be efficiently simulated with a classical computer, using a binary matrix representation of states called a \emph{stabilizer tableau}.  Using a slightly larger version of these tableaux, Aaronson and Gottesman \cite{ag} were able to further improved the efficiency of measurements in this algorithm.\footnote{In terms of complexity classes, Aaronson and Gottesman \cite{ag} show that Clifford circuits can be simulated in the class $\oplus \mathsf{L}$ (pronounced ``parity ell''). This class is most easily understood as capturing those problems reducible to solving linear equations mod 2.  To be clear, we have that $\oplus \mathsf{L} \subseteq \mathsf{P}$.}  In fact, a reinterpretation of their tableau representation is integral to our classification.


Clifford circuits are somewhat remarkable in that they may in fact be integral to our eventual development of a general-purpose quantum computer.  For instance, the stabilizer formalism, which tracks state evolution through conjugated Pauli elements, underlies many of the important quantum error correcting codes \cite{gottesman:thesis}.  In fact, the Clifford operations are exactly those operations which can be easily computed transversally in many fault-tolerant schemes of quantum computing (e.g., the Shor code \cite{shor:1996} or the [[7,1,3]] Steane code \cite{steane}).

Our model is motivated in part by the use of Clifford circuits as subroutines of a general quantum computation, much like the transversal gates in a fault-tolerant scheme.  We regard the creation of complicated ancilla states as an inherently difficult task, and therefore require that all ancillary qubits used during the computation be returned to their initial state at the end of the computation.  This restriction eliminates schemes in which much of the difficulty of the computation is offloaded to the creation of ``magic states'' which are subsequently consumed by the computation \cite{bravyi:2005, rbb, shor:1996}.  Unlike these schemes in which the ancillas boost weak gate sets to computational universality, Clifford operations cannot be boosted in our model to generate anything outside the Clifford group.\footnote{This was first noticed by Anderson \cite{anderson:2012}.}

We also regard the classification of Clifford gates as an important step towards a full classification of quantum gate classes.  Although the complete inclusion lattice for general quantum gates will be significantly more complicated than the one we present for the Clifford gates, the classes described here provide a testbed for the techniques used for general quantum gates.  This is due to the fact that our lattice for Clifford gates must appear as a sublattice in the complete quantum gate classification.  This situation contrasts with the reversible gate classification of Aaronson et al., in which much of the complexity of the lattice is due to the fact that only $\ket{0}$ and $\ket{1}$ ancillas were allowed.  Indeed, we show in Appendix~\ref{app:classical_quantum} that the classical reversible classification collapses significantly under quantum ancillas.  Because we allow for arbitrary ancillas in our model, our classification does not suffer from the same issue. 

\subsection{Results}
We wish to determine the set of Clifford operations that can be realized as circuits consisting of gates from a given gate set.  Let us briefly explain the circuit building operations we allow (full details and justification in Section~\ref{sec:classes}).  First, we can combine gates in series or parallel, i.e., their composition or tensor product.  We also assume for simplicity that swapping qubits is allowed at any point in the circuit. Each circuit also has access to arbitrary quantum ancillas provided that they are returned to their initial states by the end of the computation. Finally, we adopt the standard practice of ignoring global phase in circuits. Under this model, our main result is the classification of Clifford gates below:
\begin{theorem}
Any set of Clifford gates generate one of 57 distinct classes of Clifford operations.  There are 30 classes (depicted in Figure~\ref{fig:degeneratelattice}) generated by single-qubit gates.  The remaining 27 non-degenerate classes are shown in Figure~\ref{fig:nondegeneratelattice}.  Notation for the generators of the classes depicted in those diagrams is given in Section~\ref{sec:gates}. 
\end{theorem}

We list some consequences and highlights of the classification below:
{\renewcommand\labelenumi{(\theenumi)}
\begin{enumerate}
\item {\bf Invariants.} Every class can be defined by a collection of invariants, i.e., properties of the Clifford gates which are preserved under our circuit building operations. Formally, we define each invariant based on the tableau representation of the Clifford gate (see sections \ref{sec:tableaux} and \ref{sec:invariants}). We now describe the broad themes behind the main invariants of the classification. First, there is a three-fold symmetry in the classification, corresponding to the symmetry of the $X$, $Y$, and $Z$ elements of the Pauli group. For example, the CNOT gate behaves classically in the $X$-basis and $Z$-basis (i.e., it permutes the four $2$-qubit $X$-basis vectors, and likewise in the $Z$-basis), but not the $Y$-basis. By symmetry, there are gates like CNOT which are classical in each pair of bases. There is even a nontrivial class of gates which act classical in all three bases (up to sign). 

When two classes cannot be distinguished by their high-level basis behavior, we need more refined invariants to separate them. For example, some invariants correspond to the specific action the gate has in some basis: the CNOT gate can generate any reversible linear transformation in the $Z$-basis; there is another gate which can only perform \emph{orthogonal} linear transformations; and we also have gates like $\CSIGN$, which can only change the sign of the basis element.\footnote{The CNOT (controlled-$X$) gate is sometimes written as CX, and we similarly write the controlled-$Z$ gate as $\CSIGN$.}  In fact, these three cases and the available collection of single-qubit gates are enough to determine the class.


\item {\bf Finite Generation.} Every class can be generated by a single gate on at most four qubits. Also, given a set of gates generating some class, there always exist three gates from that set that generate the same class.  Moreover, the classification implies that the canonical set of Clifford generators---CNOT, Hadamard, and phase---is \emph{not} a minimal set of generators in our model.  It turns out that with the aid of ancilla qubits, CNOT and Hadamard generate a phase gate.  This is well-known \cite{aliferis:2007, jones:2012}, but comes as a simple consequence of our classification theorem.
\item {\bf Ancillas.} In general, giving a Clifford gate access to ancillary qubits often increases the set of functions it can compute.  A priori, one might suspect that extracting all functionality from a large entangling Clifford gate would require large highly-entangled ancilla states.  Nevertheless, our classification shows that only a constant number of one- and two-qubit ancillary states are ever needed.  In fact, an even stronger result is true.  Namely, our classification holds even when we allow the ancillas to change in an input-independent manner,\footnote{Aaronson et al.\ called this the ``Loose Ancilla Rule,'' and it \emph{does} affect their classification of classical reversible circuits.} as would be natural for a Clifford subroutine in a general quantum computation. See Section~\ref{sec:classes} for further discussion.
\item {\bf Canonical Forms.} It has long been known that there exists canonical forms for Clifford circuits \cite{ag, bravyi2021hadamard,maslov2018shorter, selinger2015generators}. Our classification theorem also reveals explicit canonical forms for most \emph{subclasses} of Clifford circuits (see Section~\ref{sec:equivalence}).  Furthermore, we give an explicit canonical form for 2-qubit Clifford circuits using the generators from our classification (see Appendix~\ref{app:canonical_form}).
\item {\bf Enumeration.}  For each class $\CC$ and for all $n$, we give explicit formulas for the number of gates in $\CC$ on $n$-qubits.  This enumeration is often derived from our explicit canonical forms discussed above.  One consequence is that every class is exponentially smaller than any class strictly containing it.  See Appendix~\ref{app:enumeration} for details.
\item {\bf Algorithms.}  Our classification implies a linear time algorithm which, given the tableau of a gate $G$, identifies which class $G$ belongs to. 
In fact, to witness that $G$ generates some class in the classification, one only needs to view a constant number of bits of the tableau.  These details are discussed in sections \ref{sec:universal_construction} and \ref{sec:completing}.
\item {\bf Sporadic Gates.}  The process of classification unearthed certain strange classes, which arise from the interaction of the various invariants.  For example, four of the classes containing the $\tfour$ gate require a generator on at least four qubits.  Surprisingly, the class $\class{\tfour, \Gamma, \P}$ has a three-qubit generator.\footnote{Interestingly, there are no affine gate sets in the classification of \emph{classical} reversible gates which admit a generator over three bits and no smaller.}  We investigate such a gate in Appendix~\ref{app:gta}.  
\end{enumerate}}

\begin{landscape}
\begin{figure}

\begin{center}
\begin{tikzpicture}[>=latex,scale=0.25]
\tikzstyle{class}=[align=center, font=\small]
\tikzstyle{xpres}=[class]
\tikzstyle{ypres}=[class]
\tikzstyle{zpres}=[class] 
\tikzstyle{nonpres}=[class]
\tikzstyle{xyzpres}=[class]

\def\lbot{0}
\def\lvertex{10}
\def\lface{14}
\def\lpauli{24}
\def\ldih{34}
\def\la4{38}
\def\ltop{48}

\node[nonpres] (Bottom) at (0,\lbot) {$\bot$};
\node[nonpres] (Top) at (0,\ltop) {$\top$};
\node[xyzpres] (X) at (-8,\lface) {$X$};
\node[xyzpres] (Y) at ( 0,\lface) {$Y$};
\node[xyzpres] (Z) at ( 8,\lface) {$Z$};

\node[xpres] (Rx) at (-8,\lpauli) {$\R_X$};
\node[ypres] (Ry) at ( 0,\lpauli) {$\R_Y$};
\node[zpres] (Rz) at ( 8,\lpauli) {$\R_Z$};

\node[xpres] (P_Rx) at (-8,\ldih) {$\P + \R_X$};
\node[ypres] (P_Ry) at ( 0,\ldih) {$\P + \R_Y$};
\node[zpres] (P_Rz) at ( 8,\ldih) {$\P + \R_Z$};

\node[nonpres] (Gamma_mmp) at (-19,\lvertex) {$\Gamma_{--+}$};
\node[nonpres] (Gamma_mpm) at (-25,\lvertex) {$\Gamma_{-+-}$};
\node[nonpres] (Gamma_pmm) at (-31,\lvertex) {$\Gamma_{+--}$};
\node[nonpres] (Gamma_ppp) at (-37,\lvertex) {$\Gamma_{+++}$};

\node[xpres] (theta_ypz) at ( 18,\lvertex) {$\theta_{Y+Z}$};
\node[xpres] (theta_ymz) at ( 22,\lvertex) {$\theta_{Y-Z}$};
\node[ypres] (theta_xpz) at ( 26,\lvertex) {$\theta_{X+Z}$};
\node[ypres] (theta_xmz) at ( 30,\lvertex) {$\theta_{X-Z}$};
\node[zpres] (theta_xpy) at ( 34,\lvertex) {$\theta_{X+Y}$};
\node[zpres] (theta_xmy) at ( 38,\lvertex) {$\theta_{X-Y}$};

\node[nonpres] (P) at (-18,\lpauli) {$\P$};
\node[nonpres] (Gamma_P) at (-28,\la4) {$\P + \Gamma$};

\node[xpres] (X_theta_yz) at (20,\lpauli) {$X + \theta_{YZ}$};
\node[ypres] (Y_theta_xz) at (28,\lpauli) {$Y + \theta_{XZ}$};
\node[zpres] (Z_theta_xy) at (36,\lpauli) {$Z + \theta_{XY}$};

\node[nonpres] (Gamma_ppp_theta) at (19,\la4) {$\theta_{+++}$};
\node[nonpres] (Gamma_pmm_theta) at (25,\la4) {$\theta_{+--}$};
\node[nonpres] (Gamma_mpm_theta) at (31,\la4) {$\theta_{-+-}$};
\node[nonpres] (Gamma_mmp_theta) at (37,\la4) {$\theta_{--+}$};

\draw (Bottom) -- (X) -- (Rx) -- (P_Rx) -- (Top)
			(Bottom) -- (Y) -- (Ry) -- (P_Ry) -- (Top)
			(Bottom) -- (Z) -- (Rz) -- (P_Rz) -- (Top)
			(X) -- (P) -- (P_Rx)
			(Y) -- (P) -- (P_Ry)
			(Z) -- (P) -- (P_Rz)
			(X) -- (X_theta_yz) -- (P_Rx)
			(Y) -- (Y_theta_xz) -- (P_Ry)
			(Z) -- (Z_theta_xy) -- (P_Rz)
			(Bottom) -- (theta_ypz)
			(Bottom) -- (theta_ymz)
			(Bottom) -- (theta_xpz)
			(Bottom) -- (theta_xmz)
			(Bottom) -- (theta_xpy)
			(Bottom) -- (theta_xmy)
			(theta_ypz) -- (X_theta_yz)
			(theta_ymz) -- (X_theta_yz)
			(theta_xpz) -- (Y_theta_xz)
			(theta_xmz) -- (Y_theta_xz)
			(theta_xpy) -- (Z_theta_xy)
			(theta_xmy) -- (Z_theta_xy)
			(Bottom) -- (Gamma_ppp) -- (Gamma_P)
			(Bottom) -- (Gamma_pmm) -- (Gamma_P)
			(Bottom) -- (Gamma_mpm) -- (Gamma_P)
			(Bottom) -- (Gamma_mmp) -- (Gamma_P) -- (Top)
			(P) -- (Gamma_P)
			(Gamma_ppp_theta) -- (Top)
			(Gamma_pmm_theta) -- (Top)
			(Gamma_mpm_theta) -- (Top)
			(Gamma_mmp_theta) -- (Top);

\draw	[gray, dotted]		
			(theta_ypz) -- (Gamma_ppp_theta)
			(theta_xpz) -- (Gamma_ppp_theta)
			(theta_xpy) -- (Gamma_ppp_theta) -- (Top)
			(theta_ypz) -- (Gamma_pmm_theta)
			(theta_xmz) -- (Gamma_pmm_theta)
			(theta_xmy) -- (Gamma_pmm_theta) -- (Top)
			(theta_ymz) -- (Gamma_mpm_theta)
			(theta_xpz) -- (Gamma_mpm_theta)
			(theta_xmy) -- (Gamma_mpm_theta) -- (Top)
			(theta_ymz) -- (Gamma_mmp_theta)
			(theta_xmz) -- (Gamma_mmp_theta)
			(theta_xpy) -- (Gamma_mmp_theta) -- (Top)
			(Gamma_ppp) -- (Gamma_ppp_theta)
			(Gamma_pmm) -- (Gamma_pmm_theta)
			(Gamma_mpm) -- (Gamma_mpm_theta)
			(Gamma_mmp) -- (Gamma_mmp_theta);
\end{tikzpicture}
\end{center}
\caption{The inclusion lattice of degenerate Clifford gate classes. To improve readability, some lines are gray and dotted, but they have the same meaning as their counterparts. The gates in this diagram are listed in Section~\ref{sec:single_qubit_gates} and some class names have been abbreviated as listed in Section~\ref{sec:classes}.}
\label{fig:degeneratelattice}
\end{figure}
\end{landscape}

\input{figures/multi_qubit_lattice}

\subsection{Proof Outline}
We can divide the proof into a few major steps.   First, we introduce the notion of a tableau, a binary matrix representation of a Clifford circuit.  We then present all the classes in the classification and designate them by their generators.  An examination of the tableaux of the gates in these classes reveals candidate invariants.  We then prove that these candidate invariants are indeed invariant under the circuit building operations in our model.  That is to say, if we have two gates whose tableaux satisfy the invariant, then the tableau of their composition satisfies the invariant, and so on for all the other ways to build circuits from gates---tensor products, ancillas, swapping.  

At this point, we will have shown that each class has a corresponding invariant, which implies that each class in our lattice is distinct. That is, for any two classes, there is a generator of one that fails to satisfy the invariant of the other.  Next, we will show that this correspondence is complete.  The generators of a class can construct \emph{any} gate which satisfies the invariant for that class.

The challenge remains to show that our list of classes is exhaustive.  Suppose we are given some gate set $G$, and we wish to identify the class it generates.  Clearly, the class generated by $G$ is contained in some class in the lattice, and let $\CC$ be the smallest such class. The hope is to show that $G$ generates all of $\CC$. To do this, we use the minimality of $\CC$. That is, for each class $\mathcal{S} \subset \CC$, there must be some gate $g \in G$ which is not in $\mathcal{S}$, otherwise $\mathcal{S}$ would be a smaller class containing $G$. We now wish to use $g$ to generate a simpler gate, also violating some invariant of $\mathcal{S}$. This is accomplished via the ``universal construction,'' which is a particular circuit built from $g$ and SWAP gates. Finally, we combine the simpler gates to construct the canonical generators for the class $\CC$ itself.


\section{Stabilizer Formalism}
\label{sec:stabilizer_formalism}
The one-qubit unitary operations 
\begin{align*}
X &= \begin{pmatrix} 0 & 1 \\ 1 & 0 \end{pmatrix} & Y &= \begin{pmatrix} 0 & -i \\ i & 0 \end{pmatrix} & Z &= \begin{pmatrix} 1 & 0 \\ 0 & -1 \end{pmatrix}
\end{align*}
are known as \emph{Pauli matrices}. The Pauli matrices are all involutions ($X^2 = Y^2 = Z^2 = I$), and have the following relations between them
\begin{align*}
XY &= iZ & YZ &= iX & ZX &= iY \\
YX &= -iZ & ZY &= -iX & XZ &= -iY.
\end{align*}
It follows that the Pauli matrices generate a discrete group (under multiplication), called the \emph{Pauli group} $\P$, which consists of sixteen elements: $\{ I, X, Y, Z \}$ with phases $\pm 1$ or $\pm i$. The \emph{Pauli group on $n$ qubits}, $\P_n$, is the set of all $n$-qubit tensor products of elements from $\P$. We define a \emph{Pauli string} as any element of $\P_n$ with positive phase (i.e., a tensor product of the matrices $I, X, Y, Z$). We frequently omit the tensor product symbol from Pauli strings and write, e.g., $P_1 \cdots P_n$ where we mean $P_1 \otimes \cdots \otimes P_n$. 

The \emph{Clifford group on $n$ qubits}, $\clifford_n$, is the set of unitary operations which \emph{normalize} $\P_n$ in the group-theoretic sense. That is, $U \in \clifford_n$ if $U p U^{\dagger} \in \P_n$ for all $p \in \P_n$. We leave it as a simple exercise to check that $\clifford_n$ is indeed a group. 

A \emph{Clifford gate} is any unitary in $\bigcup_{n \geq 1} \clifford_n$. A \emph{Clifford circuit} is a quantum circuit of Clifford gates implementing a unitary transformation on some set of qubits, designated the \emph{input/output qubits}, while preserving the state of the remaining \emph{ancilla qubits}.  We say that a state $\ket{\psi}$ is \emph{stabilized} by an operation $U$ if and only if $U\ket{\psi} = \ket{\psi}$. In other words, $\ket{\psi}$ is in the $+1$ eigenspace of $U$.  The Pauli elements and their corresponding stabilized states are below:

\begin{center}
\begin{tabular}{l l}
${X} : \ket{+} = \frac{\ket{0} + \ket{1}}{\sqrt{2}}$ & $-{X} : \ket{-} = \frac{\ket{0} - \ket{1}}{\sqrt{2}}$ \\
${Y} : \ket{i} = \frac{\ket{0} + i \ket{1}}{\sqrt{2}}$ & $-{Y} : \ket{-i} = \frac{\ket{0} - i \ket{1}}{\sqrt{2}}$ \\
${Z} : \ket{0}$ & $-{Z} : \ket{1}$ 
\end{tabular}
\end{center}
We call the vectors stabilized by non-identity Pauli elements $P$ and $-P$ the \emph{$P$-basis}. A \emph{stabilizer state} is any state $U \ket{0 \ldots 0}$ where $U$ is a Clifford gate.  For example, $\ket{0}$, $\ket{1}$, $\ket{+}$, $\ket{-}$, $\ket{i}$, and $\ket{-i}$ are the 6 stabilizer states on one qubit. Multi-qubit stabilizer states include $\frac{1}{\sqrt{2}} (\ket{0^n} + \ket{1^n})$ and $\sum_{x \in \{ 0, 1 \}^{n}} \ket{x}$.  In general, stabilizer states are of the form (unnormalized) $\sum_{x \in A} (-1)^{q(x)}i^{\ell(x)}\ket{x}$ where $A$ is an affine space over $\ftwo$, $q(x)$ is a quadratic form, and $\ell(x)$ is a linear form \cite{dehaene:2003,nest:2008}.



\section{Gates}
\label{sec:gates}

Let us introduce some common Clifford gates used throughout the classification. 

\subsection{Single-qubit Gates}
\label{sec:single_qubit_gates}

We start with the single-qubit Clifford gates, which by definition permute (up to phases) the $X$, $Y$, and $Z$ bases.  In fact, the single-qubit Clifford gates correspond to symmetries of the cube (see Figure~\ref{fig:cube}).\footnote{Or equivalently, symmetries of the octahedron, which is dual to the cube.} We group the gates by the type of rotation to emphasize this geometric intuition.

\begin{description}
\item[Face rotations:] The Pauli matrices $X$, $Y$, and $Z$ (as gates) correspond to $180^{\circ}$ rotations about the $X$, $Y$, and $Z$ axes respectively. Similarly, we define $\R_X$, $\R_Y$, and $\R_Z$ to be $90^{\circ}$ rotations (in the counterclockwise direction) about their respective axes. Formally, 
\begin{align*}
\R_X &= \frac{I - iX}{\sqrt{2}}, & \R_Y &= \frac{I - iY}{\sqrt{2}}, & \R_Z &= \frac{I - iZ}{\sqrt{2}},
\end{align*}
although in the case of $\R_Z$ (also known as the \emph{phase gate} and often denoted by $S$ or $P$), a different choice of phase is more conventional. The clockwise rotations are then $\R_X^{\dagger}$, $\R_Y^{\dagger}$, and $\R_Z^{\dagger}$. 
\item[Edge rotations:] Another symmetry of the cube is to rotate one of the edges $180^{\circ}$. Opposing edges produce the same rotation, so we have six gates: $\theta_{X+Y}$, $\theta_{X-Y}$, $\theta_{X+Z}$, $\theta_{X-Z}$, $\theta_{Y+Z}$, $\theta_{Y-Z}$. We define 
\begin{align*}
\theta_{P+Q} &= \frac{P+Q}{\sqrt{2}}, & \theta_{P-Q} &= \frac{P-Q}{\sqrt{2}},
\end{align*}
for all Pauli matrices $P \neq Q$. Note that $\theta_{X+Z}$ is the well-known \emph{Hadamard gate}, usually denoted by $H$.
\item[Vertex rotations:] The final symmetry is a $120^{\circ}$ counterclockwise rotation around one of the diagonals passing through opposite vertices of the cube. The cube has eight vertices, $(\pm 1, \pm 1, \pm 1)$, and we denote the corresponding single-qubit gates $\Gamma_{+++}$, $\Gamma_{++-}$, $\ldots$, $\Gamma_{---}$. Algebraically, we define 
\begin{align*}
\Gamma_{+++} &= \frac{I - iX - iY - iZ}{2}, \\
\Gamma_{++-} &= \frac{I - iX - iY + iZ}{2}, \\
&\vdots \\
\Gamma_{---} &= \frac{I + iX + iY + iZ}{2}.
\end{align*}
We also define $\Gamma$ (without subscripts) to be the first gate, $\Gamma_{+++}$, since it is the most convenient; conjugation by $\Gamma$ maps $X$ to $Y$, $Y$ to $Z$, and $Z$ to $X$.
\end{description}

\tdplotsetmaincoords{60}{110}
\begin{figure}
\begin{center}
\begin{tikzpicture}[scale=1.4,tdplot_main_coords,>=stealth,show background rectangle]

\draw[gray] (-2,-2, 2) -- (-2,-2,-2);
\draw[gray] ( 2,-2,-2) -- (-2,-2,-2);
\draw[gray] (-2, 2,-2) -- (-2,-2,-2);
\draw[thick] ( 2, 2, 2) -- (-2, 2, 2);
\draw[thick] ( 2, 2, 2) -- ( 2,-2, 2);
\draw[thick] ( 2, 2, 2) -- ( 2, 2,-2);
\draw[thick] (-2, 2, 2) -- (-2,-2, 2);
\draw[thick] (-2, 2, 2) -- (-2, 2,-2);
\draw[thick] ( 2,-2, 2) -- (-2,-2, 2);
\draw[thick] ( 2,-2, 2) -- ( 2,-2,-2);
\draw[thick] ( 2, 2,-2) -- (-2, 2,-2);
\draw[thick] ( 2, 2,-2) -- ( 2,-2,-2);

\draw[very thick,->,dashed] (2,0,0) -- (5,0,0) node[right] {$\R_X$};
\draw[very thick,->,dashed] (0,2,0) -- (0,5,0) node[above] {$\R_Y$};
\draw[very thick,->,dashed] (0,0,2) -- (0,0,5) node[left] {$\R_Z$};

\draw[->] ( 2, 2, 2) -- ( 3, 3, 3) node[right] {$\Gamma_{+++}$};
\draw[->] ( 2,-2, 2) -- ( 3,-3, 3) node[left] {$\Gamma_{+-+}$};
\draw[->] (-2, 2, 2) -- (-3, 3, 3) node[above] {$\Gamma_{-++}$};
\draw[->] (-2,-2, 2) -- (-3,-3, 3) node[above] {$\Gamma_{--+}$};
\draw[->] ( 2, 2,-2) -- ( 3, 3,-3) node[below] {$\Gamma_{++-}$};
\draw[->] ( 2,-2,-2) -- ( 3,-3,-3) node[below] {$\Gamma_{+--}$};
\draw[->] (-2, 2,-2) -- (-3, 3,-3) node[right] {$\Gamma_{-+-}$};
\draw[->] (-2,-2,-2) -- (-3,-3,-3) node[above] {$\Gamma_{---}$};

\draw[->,dotted] ( 2, 2, 0) -- ( 3, 3, 0) node[right] {$\theta_{X+Y}$};
\draw[->,dotted] ( 2,-2, 0) -- ( 3,-3, 0) node[left] {$\theta_{X-Y}$};
\draw[->,dotted] ( 2, 0, 2) -- ( 3, 0, 3) node[above right] {$\theta_{X+Z}$};
\draw[->,dotted] ( 2, 0,-2) -- ( 3, 0,-3) node[below] {$\theta_{X-Z}$};
\draw[->,dotted] ( 0, 2, 2) -- ( 0, 3, 3) node[above right] {$\theta_{Y+Z}$};
\draw[->,dotted] ( 0, 2,-2) -- ( 0, 3,-3) node[below] {$\theta_{Y-Z}$};

\end{tikzpicture}
\end{center}
\caption{Single-qubit gates as symmetries of the cube.}
\label{fig:cube}
\end{figure}

\begin{table}
\centering
\newcommand{\tablespacer}{10pt}
\begin{tabular}{c | c | c}
Gate & Tableau & Unitary Matrix \\ \hline
$X$ & $\left(\begin{array}{cc|c} 1 & 0 &0\\  0 & 1 &1\end{array}\right)$ & $\left(\begin{array}{cc} 0 & 1 \\ 1 & 0 \end{array}\right)$  \\ [\tablespacer]
$Y$ & $\left(\begin{array}{cc|c} 1 & 0 &1\\  0 & 1 &1\end{array}\right)$ & $\left(\begin{array}{cc} 0 & -i \\ i & 0 \end{array}\right)$  \\ [\tablespacer]
$Z$ & $\left(\begin{array}{cc|c} 1 & 0 &1\\  0 & 1 &0\end{array}\right)$ & $\left(\begin{array}{cc} 1 & 0 \\ 0 & -1 \end{array}\right)$  \\ [\tablespacer]
$\R_X$ & $\left(\begin{array}{cc|c} 1 & 0 &0\\  1 & 1 & 1\end{array}\right)$ & $\frac{1}{\sqrt{2}} \left(\begin{array}{cc} 1 & -i \\ -i & 1 \\\end{array}\right)$  \\ [\tablespacer]
$\R_Y$ & $\left(\begin{array}{cc|c} 0 & 1 &1\\  1 & 0 & 0\end{array}\right)$ & $\frac{1}{\sqrt{2}} \left(\begin{array}{cc} 1 & -1 \\ 1 & 1 \\\end{array}\right)$  \\ [\tablespacer]
$\R_Z = S$ & $\left(\begin{array}{cc|c} 1 & 1 &0\\  0 & 1 &0\end{array}\right)$ & $\frac{1-i}{\sqrt{2}} \left(\begin{array}{cc} 1 & 0 \\ 0 & i \end{array}\right)$  \\ [\tablespacer]
$\theta_{XZ} = \theta_{X+Z} = H$ & $\left(\begin{array}{cc|c} 0 & 1 &0\\  1 & 0 &0\end{array}\right)$ & $\frac{1}{\sqrt{2}}\left(\begin{array}{cc} 1 & 1 \\ 1 & -1 \end{array}\right)$  \\ [\tablespacer]
$\theta_{X-Z}$ & $\left(\begin{array}{cc|c} 0 & 1 &1\\  1 & 0 &1\end{array}\right)$ & $\frac{1}{\sqrt{2}}\left(\begin{array}{cc} -1 & 1 \\ 1 & 1 \end{array}\right)$  \\ [\tablespacer]
$\Gamma_{+++} = \Gamma$ & $\left(\begin{array}{cc|c} 1 & 1 &0 \\  1 & 0 &0\end{array}\right)$ & $\frac{1-i}{2}\left(\begin{array}{cc} 1 & -i \\ 1 & i \\\end{array}\right)$  \\ [\tablespacer]
\end{tabular}
\caption{Single-qubit gates represented as tableaux (Section~\ref{sec:tableaux}) and as complex unitary matrices.}
\label{table:single_qubit_gates}
\end{table}

\subsection{Multi-qubit Gates}
\label{sec:multi_gates}

We now introduce the multi-qubit Clifford gates relevant to the classification.\footnote{Like the single-qubit gates, it turns out that the two-qubit Clifford gates can also be interpreted as symmetries of a polyhedron, in particular, the six-dimensional hyperoctahedron \cite{glaudell2021optimal, wiki:Hyperoctahedral_group}.  However, we find it easier to reason about the group from a set of simple generators.} The \emph{SWAP gate}, for instance, simply exchanges two qubits. A more interesting example is the \emph{controlled-NOT} or \emph{CNOT gate}, and the \emph{generalized CNOT gates}. 

A \emph{generalized CNOT gate} is a two-qubit Clifford gate of the form 
$$
C(P, Q) := \frac{I \otimes I + P \otimes I + I \otimes Q - P \otimes Q}{2},
$$
where $P$ and $Q$ are Pauli matrices. If the first qubit is in the $+1$ eigenspace of $P$ then $C(P,Q)$ does nothing, but if it is in the $-1$ eigenspace of $P$ then $C(P,Q)$ applies $Q$ to the second qubit. Of course, the definition is completely symmetric, so you can also view it as applying $P$ to the first qubit when the second qubit is in the $-1$ eigenspace of $Q$. 

Observe that $C(Z, X)$ is actually the $\operatorname{CNOT}$ gate; it applies a $\operatorname{NOT}$ gate to the second qubit when the first qubit is $\ket{1}$ and does nothing when the first qubit is $\ket{0}$. Figure~\ref{fig:cnot_as_czx} shows this equivalence, and illustrates our circuit diagram notation for generalized CNOT gates. Also note that $C(X, Z)$ is a $\operatorname{CNOT}$, but with the opposite orientation (i.e., the second bit controls the first). The rest of the \emph{heterogeneous generalized CNOT gates} (i.e., $C(P,Q)$ where $P \neq Q$) are the natural equivalents of $\operatorname{CNOT}$ in different bases. 

\begin{figure}
\begin{align*}
\Qcircuit @C=2em @R=1.5em {
& \gate{Z}  & \qw  \\
& \gate{X} \qwx & \qw}
\horizontally
\raisebox{-15pt}{=}
\horizontally
\Qcircuit @C=2em @R=2.3em {
& \ctrl{1} & \qw \\
& \targ & \qw}
\end{align*}
\caption{$\CNOT$ expressed as a $C(Z,X)$ gate.}
\label{fig:cnot_as_czx}
\end{figure}

Similarly, $C(Z,Z)$ sometimes known as the \emph{controlled-sign gate} or $\CSIGN$, which flips the sign on input $\ket{11}$, but does nothing otherwise. The \emph{homogeneous generalized CNOT gates} (i.e., $C(P, P)$ for some $P$) are quite different from heterogeneous CNOT gates. For instance, when one $\operatorname{CNOT}$ targets the control qubit of another $\operatorname{CNOT}$ then it matters which gate is applied first. On the other hand, two $\CSIGN$ gates will always commute, whether or not they have a qubit in common. 

It turns out that every two-qubit Clifford gate is equivalent (up to a SWAP) to some combination of one qubit Clifford gates and at most one generalized CNOT gate (see Appendix~\ref{app:canonical_form}). Although most classes of Clifford gates can be specified by such two-qubit generators, there are five classes which require a larger generator such as the $\tfour$ gate \cite{ags:2015, gottesman:1998theory}.  

For all $k \geq 1$, let $\tgate_{2k}$ be a $2k$-qubit gate such that for all $x =(x_1, \ldots, x_{2k}) \in \{0,1\}^{2k}$
$$\tgate_{2k}\ket{x_1, \ldots, x_{2k}} = \ket{x_1 \oplus b_x, x_2 \oplus b_x, \ldots, x_{2k} \oplus b_x}$$
where $b_x = x_1 \oplus x_2 \oplus \ldots \oplus x_{2k}$. Intuitively, $\tgate_{2k}$ outputs the complement of the input when the parity of the input is odd and does nothing when the parity of the input is even.  In particular, $\tgate_2$ is the lowly SWAP gate.   Notice that this is an orthogonal linear function of the input bits, and therefore has a $2k \times 2k$ matrix over $\ftwo$ with orthogonal rows and columns:

$$
\tgate_{2k} =
\begin{pmatrix}
0 & 1 & 1 & \cdots & 1 \\
1 & 0 & 1 & \cdots & 1 \\
1 & 1 & 0 & \cdots & 1 \\
\vdots & \vdots & \vdots & \ddots & \vdots \\
1 & 1 & 1 & \cdots & 0 \end{pmatrix}
$$


\begin{table}
\centering
\newcommand{\tablespacer}{25pt}
\begin{tabular}{c | c | c}
Gate & Tableau & Unitary Matrix on computational basis \\ \hline
$C(X,X)$ & $\left(\begin{array}{cc|cc} 1 & 0 & 0 & 0 \\ 0 & 1 & 1 & 0 \\ \hline 0 & 0 & 1 & 0 \\ 1 & 0 & 0 & 1 \\\end{array}\right)$  & 
$\frac{1}{2}\left(\begin{array}{cccc} 1 & 1 & 1 & -1 \\ 1 & 1 & -1 & 1 \\ 1 & -1 & 1 & 1 \\ -1 & 1 & 1 & 1 \end{array}\right)$ \\ [\tablespacer]
$C(Y,Y)$ & $\left(\begin{array}{cc|cc} 1 & 0 & 1 & 1 \\ 0 & 1 & 1 & 1 \\ \hline 1 & 1 & 1 & 0 \\ 1 & 1 & 0 & 1 \\\end{array}\right)$ & 
$\frac{1}{2}\left(\begin{array}{cccc} 1 & -i & -i & 1 \\ i & 1 & -1 & -i \\ i & -1 & 1 & -i \\ 1 & i & i & 1 \\\end{array}\right)$ \\ [\tablespacer]
$C(Z,Z)$ & $\left(\begin{array}{cc|cc} 1 & 0 & 0 & 1 \\ 0 & 1 & 0 & 0 \\ \hline 0 & 1 & 1 & 0 \\ 0 & 0 & 0 & 1 \\\end{array}\right)$ & 
$\left(\begin{array}{cccc} 1 & 0 & 0 & 0 \\ 0 & 1 & 0 & 0 \\ 0 & 0 & 1 & 0 \\ 0 & 0 & 0 & -1 \\\end{array}\right)$\\ [\tablespacer]
$C(Y,X)$ & $\left(\begin{array}{cc|cc} 1 & 0 & 1 & 0 \\ 0 & 1 & 1 & 0 \\ \hline 0 & 0 & 1 & 0 \\ 1 & 1 & 0 & 1 \\\end{array}\right)$ & 
$\frac{1}{2} \left(\begin{array}{cccc} 1 & 1 & -i & i \\ 1 & 1 & i & -i \\ i & -i & 1 & 1 \\ -i & i & 1 & 1 \\\end{array}\right)$\\ [\tablespacer]
$C(Z,Y)$ & $\left(\begin{array}{cc|cc} 1 & 0 & 1 & 1 \\ 0 & 1 & 0 & 0 \\ \hline 0 & 1 & 1 & 0 \\ 0 & 1 & 0 & 1 \\\end{array}\right)$ & 
$\left(\begin{array}{cccc} 1 & 0 & 0 & 0 \\ 0 & 1 & 0 & 0 \\ 0 & 0 & 0 & -i \\ 0 & 0 & i & 0 \\\end{array}\right)$\\ [\tablespacer]
$C(X,Z) = \CNOT$ & $\left(\begin{array}{cc|cc} 1 & 0 & 0 & 0 \\ 0 & 1 & 0 & 1 \\ \hline 1 & 0 & 1 & 0 \\ 0 & 0 & 0 & 1 \\\end{array}\right)$ & 
$\left(\begin{array}{cccc} 1 & 0 & 0 & 0 \\ 0 & 0 & 0 & 1 \\ 0 & 0 & 1 & 0 \\ 0 & 1 & 0 & 0 \\\end{array}\right)$\\ [\tablespacer]
\end{tabular}
\caption{Two qubit gates.  The sign bits are all 0 in the above tableaux so they are omitted.}
\label{table:2_qubit_gates}
\end{table}


\section{Tableaux}
\label{sec:tableaux}

Observe that the matrices $I, X, Y, Z$ are linearly independent, and therefore form a basis for the $2 \times 2$ complex matrices. It follows that $\P_n$ spans all $2^n \times 2^n$ complex matrices. Hence, any unitary operation on $n$ qubits can be characterized by its action on $\P_n$. In particular, any gate is characterized by how it acts on 
$$
p_1 = XI \cdots I, \; p_2 = ZI \cdots I, \; \ldots, \;p_{2n-1} = I\cdots IX, \; p_{2n} = I\cdots IZ.
$$
We call this list the \emph{Pauli basis on $n$ qubits}, since one can write any element of $\P_n$ as a product of basis elements times a phase ($\pm 1$ or $\pm i$).

Now suppose we are given a Clifford gate, $U \in \clifford_n$. By definition, Clifford gates map each Pauli basis element to something in $\P_n$, which can be written as a product of basis elements times a phase. That is, 
$$
U p_j U^{\dagger} = \alpha_j \prod_{k=1}^{2n} p_k^{M_{jk}}
$$
for some bits $M_{j1}, \ldots, M_{j(2n)} \in \{ 0, 1 \}$ and some phase\footnote{Note that the order of the terms in the product matters, since the Pauli basis elements do not necessarily commute, so we assume the terms are in the natural order from $p_1^{M_{j1}}$ up to $p_{2n}^{M_{j(2n)}}$.} $\alpha_j \in \{ \pm 1, \pm i \}$. The \emph{tableau} for $U$ is a succinct representation for $U$ consisting of the binary matrix $M = [M_{jk}]$, and some representation of the phases $\alpha_1, \ldots, \alpha_{2n}$. 

It turns out that $U$ maps $p_j$ (or any Pauli string) to $\pm 1$ times a Pauli string. This follows from the fact that each Pauli string has $\pm 1$ eigenvalues, which are preserved under conjugation by a unitary. However, $\alpha_j$ may still be any one of $\{ \pm 1, \pm i \}$. This is because the product of $p_{2k-1} p_{2k}$ is $I \cdots I (XZ) I \cdots I = -i I \cdots IYI \cdots I$, with an awkward $-i$ phase. Once we cancel the extra factors of $i$ from $\alpha_j$, we are left with
$$
(-1)^{v_j} := \alpha_j \prod_{k=1}^{n} (-i)^{M_{j(2k-1)} M_{j(2k)}},
$$
where $v_j \in \{ 0, 1 \}$ is the \emph{phase bit for row $j$}. For example, if $U p_1 U^{\dagger}$ is $YI \cdots I$ then we have $M_{11} = M_{12} = 1$ and $v_1 = 0$. Then the complete tableau for $U$ is the matrix $M = [M_{jk}]$ and the vector of bits $v = [v_j]$, which we typically write as 
$$
\left(
\begin{array}{ccc|c}
M_{11} & \cdots & M_{1(2n)} & v_1 \\
\vdots & \ddots & \vdots & \vdots \\
M_{(2n)1} & \cdots & M_{(2n)(2n)} & v_{2n}
\end{array}
\right).
$$

Our ordering of the basis elements (which differs from other presentations \cite{ag}) puts Pauli strings on the same qubit (e.g., $XI\cdots I$ and $ZI \cdots I$) side-by-side in the matrix, so the $2 \times 2$ submatrix 
$$
\begin{pmatrix}
M_{2i-1,2j-1} & M_{2i-1,2j} \\
M_{2i,2j-1} & M_{2i,2j}
\end{pmatrix}
$$
completely describes how the $i$th qubit of the input affects the $j$th qubit of the output. In fact, it will be fruitful to think of the tableau as an $n \times n$ matrix of $2 \times 2$ blocks, along with a vector of $2n$ \emph{phase bits}. To be clear, the blocks come from $\ring := \ftwo^{2 \times 2}$, the ring of $2 \times 2$ matrices over the field of two elements, $\ftwo$. Then the tableau is a matrix in $\ring^{n \times n}$ (the $n \times n$ matrices over the ring $\ring$), combined with a vector of phase bits in $\ftwo^{2n}$. Each row of the matrix is associated with a pair of phase bits from the vector. 

However, not every matrix in  $\ring^{n \times n}$ corresponds to a unitary operation and therefore to a Clifford gate. To help express valid tableau, we define a unary operation $^*$ on $\ring$ such that
$$
\begin{pmatrix}
a & b \\
c & d
\end{pmatrix}^{*} 
= 
\begin{pmatrix}
d & b \\
c & a
\end{pmatrix}.
$$
The $^*$ operator has the property that
$$
\begin{pmatrix}
a & b \\
c & d
\end{pmatrix}
\begin{pmatrix}
a & b \\
c & d
\end{pmatrix}^{*} 
= 
\begin{pmatrix}
a & b \\
c & d
\end{pmatrix}
\begin{pmatrix}
d & b \\
c & a
\end{pmatrix}
=
\begin{pmatrix}
ad+bc & 0 \\
0 & ad+bc
\end{pmatrix}
=
I \begin{vmatrix}
a & b \\
c & d
\end{vmatrix}.
$$
Additionally, 
\begin{align*}
I^* &= I, \\
(M+N)^* &= M^* + N^*, \\
(MN)^* &= N^* M^*, \\
(M^*)^* &= M,
\end{align*}
so $^*$ makes $\ring$ a \emph{$^*$-ring} or \emph{ring with involution}. We also extend $^*$ to an operation on matrices (over $\ring$) which applies $^*$ to each entry and then transposes the matrix. It turns out that a tableau represents a unitary operation if and only if the matrix $M \in \ring^{n \times n}$ satisfies $MM^{*} = M^* M = I$. This (intentionally) resembles the definition of a unitary matrix ($U U^{\dagger} = U^{\dagger} U = I$), but we will call this the \emph{symplectic condition}.  This follows the traditional presentation of the tableau as a symplectic matrix over $\ftwo$. 

\subsection{Correspondence between Gates and Tableaux}
We will find it useful to switch between gates and tableaux, as one notion often informs the other.  Since the phase bits will often be irrelevant, let us denote the matrix part of the tableau for $g$ as $\Tstar(g)$. Most non-degenerate gate sets generate the Pauli group, which alone suffices to set the phase bits of the tableau arbitrarily by applying gates at the beginning of the circuit as follows: applying $X$ to qubit $j$ negates $v_{2j}$ and applying $Z$ to qubit $j$ negates $v_{2j-1}$. Furthermore, there is a surprising connection between individual entries of tableaux and elementary Clifford operations that can be extracted from them (see Section~\ref{sec:universal_construction}).

If $a \in \ring$ is invertible, then let $\G(a)$ be the single-qubit gate with $\Tstar(\G(a)) = a$ and zeros for phase bits.  These gates are well-defined since every $2 \times 2$ invertible matrix over $\mathbb F_2$ is itself a symplectic matrix over $\mathbb F_2$.  They are shown in the first row of Table~\ref{table:invert}.  Let $\G(a,i)$ be the gate $\G(a)$ applied to the $i$th qubit. 

\begin{table}
\centering
\setlength\extrarowheight{2.5pt}
\begin{tabular}{| c | c | c | c | c | c | c |} \hline
& $(\begin{smallmatrix} 1 & 0 \\ 0 & 1 \end{smallmatrix})$ & 
$(\begin{smallmatrix} 1 & 0 \\ 1 & 1 \end{smallmatrix})$ & 
$(\begin{smallmatrix} 0 & 1 \\ 1 & 0 \end{smallmatrix})$ &
$(\begin{smallmatrix} 1 & 1 \\ 0 & 1 \end{smallmatrix})$ & 
$(\begin{smallmatrix} 1 & 1 \\ 1 & 0 \end{smallmatrix})$ &
$(\begin{smallmatrix} 0 & 1 \\ 1 & 1 \end{smallmatrix})$ \\[2.5pt] \hline
$(\begin{smallmatrix} 0  \\ 0 \end{smallmatrix})$ & $I$ & $\R_X^\dag$ & $\theta_{X+Z}$ & $\R_Z$ & $\Gamma_{+++}$ & $\Gamma_{---}$\\[2.5pt] \hline
$(\begin{smallmatrix} 0  \\ 1 \end{smallmatrix})$ & $X$ & $\R_X$ & $\R_Y^\dag$ & $\theta_{X+Y}$ &  $\Gamma_{--+}$ & $\Gamma_{+-+}$\\[2.5pt] \hline
$(\begin{smallmatrix} 1  \\ 0 \end{smallmatrix})$ & $Z$ & $\theta_{Y+Z}$ & $\R_Y$ & $\R_Z^\dag$ & $\Gamma_{-+-}$ & $\Gamma_{-++}$\\[2.5pt] \hline
$(\begin{smallmatrix} 1  \\ 1 \end{smallmatrix})$ & $Y$ & $\theta_{Y-Z}$ & $\theta_{X-Z}$ & $\theta_{X-Y}$ & $\Gamma_{+--}$ & $\Gamma_{++-}$\\[2.5pt] \hline
\end{tabular}
\caption{Invertible tableau elements and the corresponding single-qubit gates produced by the universal construction.  Row of the table corresponds to the sign bit of the row of the tableau in which the element occurs.}
\label{table:invert}
\end{table}

We also define gates from the noninvertible elements of $\ring$.  We see that the tableau of each generalized CNOT gate consists of unique noninvertible elements $b \in \ring$ and $b^* \in \ring$ along the off-diagonal, as shown in Table~\ref{table:2_qubit_gates}.  We summarize this correspondence more succinctly in Table~\ref{table:noninvert}. Therefore, if $b \in \ring$ is noninvertible, define $\CNOT(b, i, j)$ to be the generalized CNOT on qubits $i$ and $j$ corresponding to the noninvertible matrix $b$. The tableau for $\CNOT(b, i, j)$ is the identity tableau except for $b^*$ and $b$ in positions $(i,j)$ and $(j,i)$, respectively. We use the circuit in Figure~\ref{fig:b_CNOT} to designate such a gate.

\begin{table}[ht!]
\centering
\setlength\extrarowheight{2.5pt}
\begin{tabular}{| c | c | c | c | c | c | c |} \hline
Element &  
$(\begin{smallmatrix} 0 & 0 \\ 1 & 0 \end{smallmatrix})$ &
$(\begin{smallmatrix} 1 & 1 \\ 1 & 1 \end{smallmatrix})$ & 
$(\begin{smallmatrix} 0 & 1 \\ 0 & 0 \end{smallmatrix})$ & 
$(\begin{smallmatrix} 1 & 0 \\ 0 & 0 \end{smallmatrix})$ / $(\begin{smallmatrix} 0 & 0 \\ 0 & 1 \end{smallmatrix})$ & 
$(\begin{smallmatrix} 1 & 0 \\ 1 & 0 \end{smallmatrix})$ / $(\begin{smallmatrix} 0 & 0 \\ 1& 1 \end{smallmatrix})$ & 
$(\begin{smallmatrix} 1 & 1 \\ 0 & 0 \end{smallmatrix})$ / $(\begin{smallmatrix} 0 & 1 \\ 0 & 1 \end{smallmatrix})$ \\[2.5pt] \hline
Gen.\ CNOT & $C(X,X)$ & $C(Y,Y)$ & $C(Z,Z)$ & $C(X,Z)$ & $C(Y,X)$ & $C(Z,Y)$ \\[2.5pt] \hline
\end{tabular}
\caption{Noninvertible tableau elements and the corresponding generalized CNOT gates produced by the universal construction.}
\label{table:noninvert}
\end{table}

\begin{figure}
\centering
\mbox{
\Qcircuit @C=2em @R=1.5em {
& \gate{\;b\;} & \qw  \\
& \gate{b^*} \qwx & \qw
}}
\caption{Circuit diagram for $\CNOT(b,1,2)$ gate.}
\label{fig:b_CNOT}
\end{figure}

Finally, we would like to have a direct way to compose two circuits by a simple operation on their tableaux.  Suppose we wish to compute the composition of circuits $C_1$ and $C_2$.  To compute the tableau of $C_2 \circ C_1$, we must compute, for each Pauli basis element $p_j$, the product $C_2 C_1 p_j C_1^\dag C_2^\dag$.  First consider the $j$th row of the tableau for $C_1$, which gives
$$
C_1 p_j C_1^{\dagger} = \alpha_j \prod_{k=1}^{2n} p_k^{M_{jk}^{(1)}},
$$
where $M^{(1)}$ is the binary representation of $\Tstar(C_1)$, and $\alpha_j$ is the phase.  Similarly,
$$
C_2 p_j C_2^{\dagger} = \beta_j \prod_{k=1}^{2n} p_k^{M_{jk}^{(2)}}.
$$
Therefore, 
\begin{align*}
C_2 C_1 p_j C_1^\dag C_2^\dag &= C_2\left( \alpha_j \prod_{k=1}^{2n} p_k^{M_{jk}^{(1)}} \right) C_2^\dag = \alpha_j \prod_{k=1}^{2n} \left(C_2 p_k^{M_{jk}^{(1)}} C_2^\dag \right)
=  \alpha_j \prod_{k=1}^{2n} \left(\beta_k \prod_{\ell=1}^{2n} p_\ell^{M_{k\ell}^{(2)}}\right)^{M_{jk}^{(1)}} \\
&= \alpha_j \prod_{k=1}^{2n} \left(\beta_k^{M_{jk}^{(1)}}\prod_{\ell=1}^{2n} p_\ell^{M_{jk}^{(1)} M_{k\ell}^{(2)} }\right) \propto \prod_{\ell=1}^{2n} \prod_{k=1}^{2n} p_\ell^{M_{jk}^{(1)} M_{k\ell}^{(2)} }
\propto \prod_{\ell=1}^{2n} p_\ell^{\bigoplus_{k=1}^{2n} M_{jk}^{(1)} M_{k\ell}^{(2)} } \\
&=  \prod_{\ell=1}^{2n} p_\ell^{[M^{(1)} M^{(2)}]_{j\ell} } 
\end{align*}

Notice that this implies $\Tstar(C_2 \circ C_1) = \Tstar(C_1) \Tstar(C_2)$.  Since it is cumbersome to write out explicitly, we did not include the exact phases in the above calculation.  Nevertheless, one can compute the phase bits by tracking the intermediate steps in the above calculation, which includes the multiplication of Pauli basis elements.


\section{Classes}
\label{sec:classes}

The class generated by a set of Clifford gates is the collection of Clifford operations which can be constructed from circuits built from those gates. Formally, define a \emph{class} $\mathcal{C}$ to be a set of Clifford gates satisfying the following four rules:
\begin{enumerate}
\item \textbf{Composition Rule} $\mathcal{C}$ is closed under composition of gates.  If $f, g \in \CC$ are gates on the same number of qubits, then $f \circ g \in \CC$.
\item \textbf{Tensor Rule} $\mathcal{C}$ is closed under tensor product of gates.  If $f, g \in \CC$, then $f \otimes g \in \CC$.
\item \textbf{Swap Rule} $\mathcal{C}$ contains the $\operatorname{SWAP}$ gate.
\item \textbf{Ancilla Rule} $\mathcal{C}$ is closed under ancillas. If $f \in \CC$ and there exists $g$ such that 
$$f(\ket{x} \otimes \ket{\psi}) = g(\ket{x}) \otimes \ket{\psi}$$
 for some $\ket{\psi}$ and for all inputs $\ket{x}$ (up to a global phase), then $g \in \CC$. 
\end{enumerate}

Given that each class is supposed to capture those gates which can be built from a circuit, the operations of composition and tensor product are completely natural. Let us now spend some time to justify the swap and ancilla rules.

First, the swap rule allows us to consider gates without needing to specify the qubits on which they must be applied.  Indeed, we can relabel the input wires however we like. There are natural alternatives to this rule, e.g., allow reordering the qubits on each gate. The classification will be more complicated under these definitions, and we feel that adding the SWAP gate is cleanest way to address the fact that a reordering of qubits is inherently uninteresting.\footnote{From a category-theoretic perspective, the addition of the SWAP gate implies that our classes correspond to dagger \emph{symmetric} monoidal groupoids, which may feel more natural to some readers \cite{selinger2007dagger}.}

Second, the ancilla rule allows us to have useful workspace for our computation. Notice that the other rules are ``additive'' in the sense that the number of qubits in the circuit never decreases---we need a rule to reduce the number of qubits, otherwise there is no way that, e.g., a $3$-qubit gate could ever generate a $2$-qubit gate. This would lead to hierarchies of classes that differ only on gates with a small number of qubits (e.g., classes that are identical except on $1$-qubit and $2$-qubit gates), solely because of the size of the available generators. 

Instead, we permit quantum ancillas which can be used by the circuit, but must lead to a unitary transformation on the remaining non-ancilla (input) qubits.  To guarantee unitarity, notice that the ancilla qubits must not be entangled with the input qubits at the end of the circuit.  Furthermore, the ancillas cannot change in an input-dependent manner (e.g., consider the circuit that simply swaps in the input qubits with the ancilla qubits).  The ancilla rule attempts to capture these restrictions:  any ancilla qubits used during the computation must be returned to their initial configuration at the end of the computation. Notice that this rule coincides with the notion of a \emph{catalyst} (see, e.g.,  \cite{beverland2020lower, jonathan1999entanglement}).

One might assume that we could increase the power of ancillas by letting them change over the course of the computation, as long as the change is independent of the input (that is, from some constant initial state to some possibly different constant final state). Indeed, in the classification of reversible gates \cite{ags:2015}, these ``loose ancillas'' collapse a few pairs of classes.  Nevertheless, we show that even with loose ancillas, our classification (as presented) still holds.\footnote{When we formally define class invariants in Section~\ref{sec:invariants} (see Theorem~\ref{thm:trickysigns} in particular), we will see that the invariants hold under the loose quantum ancilla model, and therefore hold for all weaker models.}


Next, we have the question of how the ancillas are initialized. The weakest assumption one could make is that we have no control: the ancillas are initialized to an unknown state, and must be returned to this state at the end of the circuit. 
This is unreasonably harsh, since it is almost always practical to initialize a qubits to the $\ket{0}$ state, and there is reason to believe the classification is \emph{dramatically} more difficult without some control over the state of the ancillas.\footnote{For example, the lattice of classical many-to-one functions over bits is finite when we allow 0/1 inputs to any function, but infinite when we do not allow such freedom  \cite{post, ags:2015}.} A slightly stronger assumption would be to allow ancillas initialized to computational basis states, but this would break symmetry by introducing a bias towards the $Z$-basis.\footnote{We can fix the bias by allowing all single-qubit ancillary states: $\ket{0},\ket{1},\ket{+},\ket{-1},\ket{i},\ket{-i}$.  This introduces new classes such as $\class{\theta_{X+Z} \otimes \theta_{X+Z}}$, but we leave the classification under these assumptions as an open problem.}  

A next natural step would be to permit ancillas initialized to arbitrary stabilizer states. Although this would appear to be circular (i.e., Clifford gates are necessary to implement stabilizer states, which we then use as ancillas in Clifford circuits), the reusability of the ancilla states implies that even if the states are difficult to construct, at least we only have to construct them once. Unfortunately, we are unable to complete the classification in its entirety under this ancilla model. However, we have reduced the problem to finding a single stabilizer state which is stabilized by $\Gamma$ and a permutation, and we conjecture that such a state exists (see Section~\ref{sec:open_problems}). Moreover, the conjectured classification matches the one we will present (under a stronger ancilla model).

Finally, we arrive at our chosen model, that is, ancillas initialized to arbitrary quantum states. A priori, these states could be arbitrarily large, and arbitrarily complicated to construct, which is clearly undesirable. It turns out, however, the classification only requires finitely many one or two qubit states, in particular, the eigenstates of the single-qubit Clifford gates and states that are locally equivalent to the Bell state. For comparison, we have determined the reversible gate lattice under quantum ancillas in Appendix~\ref{app:classical_quantum}, and observe that, in some cases, arbitrarily large, entangled states are necessary.

It is worth noting that there is a long line of work showing that weak gate sets, including Clifford gates, are universal for quantum computation when given access to magic states \cite{bravyi:2005, rbb, shor:1996}. Importantly, these magic states do \emph{not} need to be preserved after the computation. Conversely, under our model, we show that arbitrary quantum ancillas cannot boost the power of Clifford gates beyond the Clifford group.

Let's now see how the rules justify natural properties one would want from a class:
%

\begin{proposition}
Let $\mathcal{C}$ be a class of Clifford gates. Then $\mathcal{C}$ contains the $n$-qubit identity gate for all $n \geq 1$, and $\mathcal{C}$ is closed under inverses.
\end{proposition}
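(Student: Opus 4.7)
The plan is to chain together the four closure rules, using SWAP as the seed element. Since we are explicitly guaranteed $\operatorname{SWAP} \in \mathcal{C}$ by the Swap Rule, the natural first move is to produce a two-qubit identity by squaring it: $\operatorname{SWAP} \circ \operatorname{SWAP} = I_2$, which is in $\mathcal{C}$ by the Composition Rule.

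Once $I_2 \in \mathcal{C}$, I would apply the Ancilla Rule to shrink it to a one-qubit identity. Pick any single-qubit state $\ket{\psi}$ (say $\ket{0}$). Then for every $\ket{x}$,
\[
I_2(\ket{x} \otimes \ket{\psi}) = \ket{x} \otimes \ket{\psi},
\]
so the extracted gate $g'$ satisfying $g'(\ket{x}) \otimes \ket{\psi} = I_2(\ket{x} \otimes \ket{\psi})$ is precisely $I_1$. Hence $I_1 \in \mathcal{C}$.

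Finally, the Tensor Rule gives $I_n = I_1^{\otimes n} \in \mathcal{C}$ for every $n \geq 1$ by induction. (The $n = 0$ case is vacuous, or can be handled by noting that the empty-qubit identity is the trivial scalar $1$, which is implicit in any class.) I do not anticipate any obstacle here; each of the three steps is a one-line invocation of the corresponding closure rule, and the argument goes through unchanged under either the strict or the loose ancilla model, since $I_2$ leaves every ancilla state literally unchanged.
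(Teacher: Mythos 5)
Your proposal is correct and follows exactly the paper's own argument: $\operatorname{SWAP}\circ\operatorname{SWAP}=I_2$ by the Composition Rule, the Ancilla Rule strips one qubit to obtain $I_1$, and the Tensor Rule builds $I_n$. No differences worth noting.
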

\begin{proof}
    All classes contain $\operatorname{SWAP}$, and by composition it contains $\operatorname{SWAP} \circ \operatorname{SWAP}$, which is the two-qubit identity. By the ancilla rule, we can remove a qubit to get the one-qubit identity, and then by the tensor rule we get the $n$-qubit identity.
    
    Now suppose $g \in \mathcal{C}$ is an $n$-qubit Clifford gate. Since the $n$-qubit Clifford group is finite, $g$ has finite order: $g^{r} = I$. If $r > 1$ then we can construct the $(r-1)$-fold composition, $g^{r-1}$, which is the inverse since $g \circ g^{r-1} = g^{r} = I$. Otherwise, $g = g^{-1} = I$.  
\end{proof}

The most practical way to talk about a class $\mathcal{C}$ is by its \emph{generators}. We say a set of gates $G$ \emph{generates} a class $\mathcal{C}$ if $G \subseteq \mathcal{C}$ and every class containing $G$ also contains $\mathcal{C}$. We introduce the notation $\class{ \cdot }$ for the class generated by a set of gates.  Similarly, we say that $G$ generates a specific gate $g$ if $g \in \class{G}$.

Our goal is therefore to identify all Clifford gate classes, determine their generators, and diagram the relationships between classes. As it turns out, there are $57$ different classes, which we have split across Figure~\ref{fig:degeneratelattice} (which contains the classes with single-qubit generators) and Figure~\ref{fig:nondegeneratelattice} (which contains the multi-qubit classes). Each class is labelled by a set of generators for that class, except for $\ALL$, the class of all Clifford gates; $\top$, the class of all single-qubit Clifford gates; and $\bot$, the class generated by the empty set.  Additionally, we abbreviate some class names in Figure~\ref{fig:degeneratelattice}: 
\begin{itemize}
\item $\theta_{+++}$, $\theta_{+--}$, $\theta_{-+-}$, $\theta_{--+}$ denote the single-qubit classes containing $\Gamma_{+++}$, $\Gamma_{+--}$, $\Gamma_{-+-}$, and $\Gamma_{--+}$ respectively, and three $\theta$ gates each, as indicated by the gray lines. 
\item $\theta_{xy}$ abbreviates $\theta_{x+y}$ or $\theta_{x-y}$ (it contains both) and similarly for $\theta_{xz}$ and $\theta_{yz}$.
\end{itemize}

In Figure~\ref{fig:nondegeneratelattice} each class includes the label of the single-qubit subgroup, even when not all of the single-qubit generators are necessary to generate the class. This is intended to make the relationship between the degenerate and non-degenerate lattices clearer. For example, $\tfour$ generates the Pauli group, $\P$, on its own (Lemma~\ref{lem:t4_generates_paulis}), but we label the class $\class{ \tfour, \P }$ to make it clear that the class $\class{\tfour}$ is above $\class{\P}$ in the lattice.

\newcommand{\subinv}{\mathscr{I}}
\newcommand{\perinv}{\mathscr{P}}

\section{Invariants}
\label{sec:invariants}

Until now, we have defined each class in terms of the generators for that class. It turns out that each class can also be characterized as the set of all gates satisfying a collection of invariants. Section~\ref{sec:equivalence} formalizes this equivalence.  This section focuses on the form of the invariants themselves.

Informally, an \emph{invariant} is a property of gates, readily apparent from their tableaux, which is preserved by the circuit building operations. In other words, if a collection of gates all satisfy a particular invariant then any circuit constructed from those gates must also satisfy the invariant. All our invariants are formally defined from the tableaux, but for now we give the following informal descriptions to make the intuition for each invariant clear.

\begin{description}
\item[$X$-, $Y$-, or $Z$-preserving:] We say a Clifford gate is \emph{$Z$-preserving} if it maps $Z$-basis states to $Z$-basis states, possibly with a change of phase. The $Z$-preserving gates include all (classical) reversible gates (e.g., $X$, $\operatorname{CNOT}$, and $\tfour$), gates which only manipulate the sign (e.g., $\R_Z$ and $\CSIGN$), and combinations of the two.

Symmetrically, there are $X$-preserving gates mapping $X$-basis states to $X$-basis states, and $Y$-preserving gates gates mapping $Y$-basis states to $Y$-basis states.  Our definitions of classes, gates, invariants, etc., are completely symmetric with respect to $X$, $Y$ and $Z$ basis, so if some gate or class is $X$-preserving (resp. $Y$-preserving or $Z$-preserving), then there must be a corresponding gate or class which is $Y$-preserving (resp. $Z$-preserving or $X$-preserving). We will often appeal to this symmetry to simplify proofs.

Note that a gate can be any combination of $X$-, $Y$-, and $Z$-preserving. For instance, $\tfour$ is $X$-, $Y$-, and $Z$-preserving; $\operatorname{CNOT}$ is $X$-preserving and $Z$-preserving but not $Y$-preserving (similarly $C(X,Y)$ and $C(Y,Z)$ fail to be $Z$-preserving and $X$-preserving, respectively); $\R_Z$ is $Z$-preserving only (similarly $\R_X$ is $X$-preserving and $\R_Y$ is $Y$-preserving); and $\Gamma$ is not $X$-, $Y$-, or $Z$-preserving. 
\item[Egalitarian] We say an $n$-qubit gate $U$ is \emph{egalitarian} if 
$$
\Tstar(\Gamma^{\otimes n} U (\Gamma^\dagger)^{\otimes n}) = \Tstar(U).
$$
Egalitarian gates have no preferred basis, since conjugation by $\Gamma$ cycles $X$ to $Y$, $Y$ to $Z$, and $Z$ to $X$. More concretely, if an egalitarian gate $U$ maps Pauli string $P$ to $Q = U P U^{\dag}$ under conjugation, then $U$ maps $\Gamma^{\otimes n} P (\Gamma^{\dag})^{\otimes n}$ to $\pm \Gamma^{\otimes n} Q (\Gamma^{\dag})^{\otimes n}$. 
The Pauli matrices, $\Gamma$, and $\tfour$ are examples of egalitarian gates. 
\item[Degenerate:] We say a gate is \emph{degenerate} if each input affects only one output. More precisely, when applying the gate to a string of Paulis, changing one Pauli in the input will change exactly one Pauli in the output. All single-qubit gates are degenerate, and all degenerate gates can be composed of single-qubit gates and SWAP gates.
\item[$X$-, $Y$-, or $Z$-degenerate:] A gate is \emph{$Z$-degenerate} if it is $Z$-preserving and flipping any bit of a classical ($Z$-basis) input to the gate causes exactly one bit of the output to flip. The gate may or may not affect the phase. This class includes several $Z$-preserving single-qubit gates, like $\R_Z$, the Pauli operations, and $\theta_{X+Y}$. It also includes $\CSIGN$ because this gate \emph{only} affects phase, but $\operatorname{CNOT}$ is not $Z$-degenerate because flipping the control bit changes both outputs. Notice that $\CSIGN$ is $Z$-degenerate, but not degenerate. We define $X$-degenerate and $Y$-degenerate symmetrically. 
\item[$X$-, $Y$-, or $Z$-orthogonal:] A gate $G$ is \emph{$Z$-orthogonal} if it can be built from $\tfour$ and $Z$-preserving single-qubit gates. The term ``orthogonal'' comes from the fact that $\tfour$ is an orthogonal linear transformation in the $Z$-basis, but not all $Z$-orthogonal gates are literally orthogonal transformations in the $Z$-basis (see, for example, Lemma~\ref{lem:t4_generates_paulis}). Similarly for $X$-orthogonal and $Y$-orthogonal.
\item[Single-Qubit Gates:]
There are thirty different classes of single-qubit gates. All of these classes are degenerate, and some can be distinguished by the other invariants above. However, many single-qubit invariants depend on the phase bits of the tableau. For instance, the tableau of $\theta_{X+Y}$, $\theta_{X-Y}$, and $\R_Z$ all have the same matrix part, $(\begin{smallmatrix} 1 & 1 \\ 0 & 1 \end{smallmatrix})$, but generate three distinct classes. One can write down explicit invariants for these classes where the phase bits are correlated to the tableau entries, but in most cases we present a single-qubit class as a subgroup of the symmetries of the cube/octahedron, as shown in Figure~\ref{fig:cube}. 
\end{description}

\subsection{Formal invariants}

An \emph{invariant} is a property of tableaux which is preserved by the four circuit-building rules. 
\begin{description}
\item[Swap Rule:] Every class contains the $\operatorname{SWAP}$ gate, so every invariant we propose must be satisfied by the tableau for $\operatorname{SWAP}$. 
\item[Composition Rule:] If the invariant holds for two gates, then it must hold for their composition. We have seen that the tableau for the composition of two gates is essentially the matrix product of the two tableau, except for the phase bits (which are significantly more complicated to update).
\item[Tensor Rule:] The tensor product of two gates satisfying the invariant must also satisfy the invariant. Note that the tableau of the tensor product is the direct sum of the tableaux, and phase bits are inherited from the sub-tableaux in the natural way.
\item[Ancilla Rule:] The invariant must be preserved when some qubits are used as ancillas. It turns out the ancilla operation reduces the tableau to a submatrix (of non-ancilla rows and columns) and under certain conditions, the corresponding subset of the phase bits. This is somewhat technical, so we prove it in Theorem~\ref{thm:trickysigns} below.
\end{description}

\begin{theorem}
\label{thm:trickysigns}
Let $G$ be a Clifford gate on $n$ qubits, and suppose there exist states $\ket{\psi}$ and $\ket{\psi'}$ such that 
$$
G(\ket{x} \otimes \ket{\psi}) = H(\ket{x}) \otimes \ket{\psi'}
$$
for all $\ket{x}$, for some unitary $H$ on $m$-qubits. In particular, this is true if we use the ancilla rule to reduce $G$ to $H$, where $\ket{\psi} = \ket{\psi'}$ is the ancilla state. Then 
\begin{enumerate}
\item $H$ is a Clifford operation,
\item $\Tstar(H)$ is obtained by removing the rows and columns corresponding to the ancilla bits from $\Tstar(G)$,
\item If a row of $\Tstar(H)$ is obtained by removing only zeros, then the phase bit for that row is the same in $G$ and $H$.
\end{enumerate}
\end{theorem}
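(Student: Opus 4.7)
The plan is to extract the action of $H$ on Paulis from the action of $G$ on Paulis that act trivially on the ancilla. Fix a Pauli string $p\in\P_m$ and compute $G(p\otimes I)(\ket{x}\otimes\ket{\psi})$ in two ways: the hypothesis gives $G(p\ket{x}\otimes\ket{\psi})=(Hp\ket{x})\otimes\ket{\psi'}=(HpH^\dagger H\ket{x})\otimes\ket{\psi'}$, while inserting $G^\dagger G$ gives $G(p\otimes I)G^\dagger\cdot(H\ket{x}\otimes\ket{\psi'})$. Since $G$ is a stabilizer gate, $G(p\otimes I)G^\dagger\in\P_n$ is $\pm1$ times a Pauli string, and because Pauli strings factor across disjoint qubits this equals $\epsilon(Q\otimes R)$ with $Q\in\P_m$, $R\in\P_{n-m}$, and $\epsilon\in\{\pm1\}$. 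Equating the two expressions gives $\epsilon(QH\ket{x})\otimes(R\ket{\psi'})=(HpH^\dagger H\ket{x})\otimes\ket{\psi'}$. Both sides are nonzero product states, so by uniqueness of product decompositions $R\ket{\psi'}\propto\ket{\psi'}$; since $R$ is a Hermitian involution its eigenvalues are $\pm1$, and $R\ket{\psi'}=\beta\ket{\psi'}$ with $\beta\in\{\pm1\}$. Canceling yields $HpH^\dagger=\epsilon\beta Q\in\P_m$, proving claim (1).

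Claim (2) follows by applying the previous paragraph to each Pauli basis element $p_j$ on the input qubits. The product $\prod_k p_k^{M_{jk}^G}$ decomposes tensorially into a product over input-qubit basis elements (determined by the first $2m$ bits of row $j$ of $\Tstar(G)$) and a product over ancilla-qubit basis elements (determined by the remaining bits). The input-qubit factor is a scalar multiple of the Pauli string $Q$, and since $Hp_jH^\dagger$ is itself a scalar multiple of $Q$, row $j$ of $\Tstar(H)$ is encoded by exactly the first $2m$ bits of row $j$ of $\Tstar(G)$. Repeating this for each of the $2m$ input-qubit rows yields the desired submatrix.

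For claim (3), recall the phase-bit definition $(-1)^{v_j}=\alpha_j\prod_q(-i)^{M_{j(2q-1)}M_{j(2q)}}$, where each correction depends only on the two bits of row $j$ belonging to one qubit's block. Suppose all ancilla bits in row $j$ of $\Tstar(G)$ are zero. Then $R=I$, so $\beta=+1$ and $Hp_jH^\dagger=\epsilon Q$ has the same overall $\pm1$ coefficient as $Gp_jG^\dagger$. Moreover, every ancilla-qubit correction appearing in the formula for $v_j^G$ is trivially $1$, so the full product collapses to exactly the per-qubit product used to define $v_j^H$ (which by claim (2) uses the same input-qubit bits). These two facts together give $\alpha_j^H=\alpha_j^G$ and hence $v_j^H=v_j^G$.

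The main obstacle is claim (3): one must simultaneously verify that the eigenvalue $\beta$ introduced by $R\ket{\psi'}$ is trivial and that the $(-i)$ corrections in the two phase-bit formulas agree. The hypothesis that the removed bits of row $j$ vanish is precisely what guarantees both, since it forces $R=I$ while also zeroing out the ancilla-qubit contributions to the correction product. Claims (1) and (2), by contrast, are essentially direct consequences of the product-state structure forced by the hypothesized decomposition.
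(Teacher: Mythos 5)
Your proof is correct and follows essentially the same route as the paper: conjugate $p\otimes I^{n-m}$ by $G$, use the product structure of the output to force $R\ket{\psi'}=\pm\ket{\psi'}$ and conclude $HpH^\dagger\in\P_m$, then read off the submatrix and phase bits. If anything, your treatment of claim (3) is slightly more careful than the paper's, since you explicitly check that the per-qubit $(-i)$ corrections in the phase-bit formula vanish on the removed (all-zero) ancilla blocks rather than only tracking the sign of $R$.
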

\begin{proof}
Let $P \in \P_m$. Then for any $\ket{x}$, 
$$
G(P\ket{x} \otimes \ket{\psi}) = H(P \ket{x}) \otimes \ket{\psi'}.
$$
On the other hand, $G$ is a Clifford gate, so conjugating the Pauli string $P \otimes I^{n-m}$ by $G$ produces $\alpha Q \otimes R$ for Pauli strings $Q \in \P_m$ and $R \in \P_{n-m}$ and phase $\alpha \in {\pm 1}$. Equivalently, 
$$
G (P \otimes I^{n-m}) = \alpha (Q \otimes R) G.
$$
It follows that 
\begin{align*}
H(P \ket{x}) \otimes \ket{\psi'} &= G (P \ket{x} \otimes \ket{\psi}) \\
&= \alpha (Q \otimes R) G(\ket{x} \otimes \ket{\psi}) \\
&= \alpha (Q \otimes R) \left( H(\ket{x}) \otimes \ket{\psi'} \right) \\
&= \alpha_1 Q H(\ket{x}) \otimes \alpha_2 R \ket{\psi'},
\end{align*}
for some choice of $\alpha_1, \alpha_2 \in \mathbb C$ for which $\alpha_1 \alpha_2 = \alpha$, $H(P \ket{x}) = \alpha_1 Q H(\ket{x})$, and $\ket{\psi'} = \alpha_2 R \ket{\psi'}$. We see that $\ket{\psi'}$ is an eigenvector with eigenvalue $\alpha_2^{-1}$, so our choice of $\alpha_2$ (and hence $\alpha_1$) is unique. Since $R$ is a Pauli with eigenvalues $\pm 1$, it follows that $\alpha_2 = \pm 1$. 

Therefore, we have $\alpha_1 \in \{\pm 1\}$ and $H P = \alpha_1 Q H$, which implies $H P H^{\dagger} = \alpha_1 Q$.  That is, since $P$ was arbitrary, the conjugation of a Pauli element by $H$ is always another Pauli element, so $H$ is a Clifford gate. In the special case that $P$ (and therefore $P \otimes I^{n-m}$) is a Pauli basis element, then $\alpha Q \otimes R$ is represented in the row of the binary tableau of $G$. We keep the bits representing $Q$ in the tableau for $H$, which is why $\Tstar(H)$ is a submatrix of $\Tstar(G)$.  The phase in the new tableau is $\alpha_1$. In the special case $R = I^{n-m}$, the only eigenvalue is $1$ so $\alpha_2 = 1$ and hence $\alpha_1 = \alpha$. That is, the phase for the corresponding row of the tableau for $H$ is inherited from $G$. 
\end{proof}

As a direct consequence of these rules, our invariants take on a distinctly algebraic flavor. Let us consider, for the sake of illustration, invariants that depend only on the matrix part of the tableau and ignore the phase bits. Then an invariant is equivalent to a set of matrices closed under the four rules above. In particular, the matrices do form a group under multiplication as a consequence of the composition rule (and the fact that every gate has finite order). 

On the other hand, not every group of matrices will correspond to an invariant. For instance, due to the swap rule, the group of matrices must also be closed under arbitrary reordering of the rows and columns. This eliminates, e.g., the group of upper triangular matrices. Similarly, the ancilla rule excludes the special orthogonal group. In the end, we are left with just two kinds of matrix groups which lead to invariants: 
\begin{description}
\item[Subring Invariants] Matrices with elements restricted to a particular subring of $\ring$ (analogous to the real matrices, integer matrices, etc.)
\item[Permutation Invariants] Permutation matrices, except where each $1$ entry can be any one of a subset of invertible elements, and each $0$ entry comes from a collection of non-invertible elements. 
\end{description}

Now we are ready to present formal definitions for these invariants, and show that they really are preserved by the circuit-building rules. 

\subsection{Subring invariants}

The first kind of invariant restricts the entries of the tableau to a subring of $\ring$. That is, given a subring $\subring \subseteq \ring$, a gate satisfies the invariant $\subinv(\subring)$ if and only if all entries of the tableau are in $\subring$.\footnote{There are several interesting works that connect the elements of the matrix representation of a unitary, to the set of gates that generate it.  For example, every multi-qubit unitary with elements in $\mathbb Z[1/\sqrt 2, i]$ corresponds to a circuit built from Clifford + $T$ gates \cite{kliuchnikov2012fast, giles2013exact}. There is a classification of gates corresponding to subrings of $\mathbb Z[1/\sqrt 2, i]$ \cite{amyglaudellross:2020}.  We stress, however, that our classification depends on the the ring elements of the \emph{tableau}.} There are twelve classes, all near the top of the lattice, of the form 
$$
\mathcal{C} = \{ \text{All gates satisfying $\subinv(\subring)$} \},
$$
corresponding to all 12 subrings of $\ring$ listed below. 

\begin{itemize}
\item The entire ring, $\ring$, is technically a subring of itself, and $\subinv(\ring)$ is the trivial invariant satisfied by all Clifford gates. Notice that not \emph{every} matrix over $\ring$ gives a valid tableau because it must still be symplectic. 
\item There are four maximal proper subrings of $\ring$:
\begin{align*}
\ring_{X} &= \{ (\begin{smallmatrix} a & 0 \\ c & d \end{smallmatrix}) : \{ a,c,d \} \in \{ 0, 1 \} \}, \\
\ring_{Y} &= \{ (\begin{smallmatrix} a & b \\ c & d \end{smallmatrix}) : \{ a,b,c,d \} \in \{ 0, 1 \}, a+b+c+d=0 \}, \\
\ring_{Z} &= \{ (\begin{smallmatrix} a & b \\ 0 & d \end{smallmatrix}) : \{ a,b,d \} \in \{ 0, 1 \} \}, \\
\ring_{E} &= \{ (\begin{smallmatrix} a & b \\ b & a+b \end{smallmatrix}) : \{ a,b \} \in \{ 0, 1 \} \}.
\end{align*}
Our formal definition for $Z$-preserving gates is the invariant $\subinv(\ring_Z)$. The fact that the lower left entry is $0$ implies that the gate maps Pauli strings of $I$ and $Z$  to strings of $I$ and $Z$. Hence, $Z$-basis strings are mapped to $Z$-basis strings. Similarly, the $X$-preserving and $Y$-preserving invariants are $\subinv(\ring_X)$ and $\subinv(\ring_Y)$ respectively. The egalitarian invariant, $\subinv(\ring_E)$, comes from the subring $\ring_{E}$. 
\item The intersection of two subrings is itself a subring, giving us exactly four more subrings ($\ring_X \cap \ring_Y$, $\ring_X \cap \ring_Z$, $\ring_Y \cap \ring_Z$, and $\ring_X \cap \ring_Y \cap \ring_Z$) since the intersection of $\ring_E$ with any of the others is 
$$
\ring_X \cap \ring_Y \cap \ring_Z = \{ (\begin{smallmatrix} 0 & 0 \\ 0 & 0 \end{smallmatrix}), (\begin{smallmatrix} 1 & 0 \\ 0 & 1 \end{smallmatrix}) \},
$$
the trivial ring. 
\item Three more subrings are obtained by taking only self-conjugate elements of $\ring_X$, $\ring_Y$, and $\ring_Z$ respectively. An element $(\begin{smallmatrix} a & b \\ c & d \end{smallmatrix})$ is \emph{self-conjugate} if 
$$
(\begin{smallmatrix} a & b \\ c & d \end{smallmatrix}) = (\begin{smallmatrix} a & b \\ c & d \end{smallmatrix})^*,
$$
or equivalently, $a = d$. These invariants correspond to the $X$-orthogonal (i.e., $\class{\tfour, \P, \R_X}$), $Y$-orthogonal (i.e., $\class{\tfour, \P, \R_Y}$), and $Z$-orthogonal (i.e., $\class{\tfour, \P, \R_Z}$) classes respectively. 
\end{itemize}

\begin{theorem}
For any subring $\subring \subseteq \ring$, the property $\subinv(\subring)$ is an invariant. That is, the set of matrices over $\subring$ respect the circuit building operations. 
\end{theorem}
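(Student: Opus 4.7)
The plan is to verify each of the four circuit-building rules directly, exploiting the fact that $\subinv(\subring)$ is defined purely in terms of the matrix part of the tableau. This means all of the phase-bit bookkeeping (which is the delicate part of Theorem~\ref{thm:trickysigns} and of tableau composition in general) can simply be ignored.

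First I would dispatch the easy three rules. For the Swap Rule, the matrix part of $\T(\operatorname{SWAP})$ has entries drawn only from $\{0, I\} \subseteq \subring$ (here I use that each subring of $\ring$ under consideration contains $0$ and the multiplicative identity $I$). For the Tensor Rule, the tableau of $U \otimes V$ is the block diagonal matrix with $\Tstar(U)$ and $\Tstar(V)$ as its two diagonal blocks and $0 \in \subring$ in every other position, so if both factors have all entries in $\subring$ then so does their direct sum. For the Composition Rule, the derivation at the end of Section~\ref{sec:tableaux} gives
\[
\Tstar(C_2 \circ C_1) = \Tstar(C_1)\,\Tstar(C_2),
\]
and every entry of this product is a finite sum of products of entries of $\Tstar(C_1)$ and $\Tstar(C_2)$; since $\subring$ is closed under both addition and multiplication in $\ring$, these entries still lie in $\subring$.

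The Ancilla Rule is the only case requiring any real content, and that content has already been isolated in Theorem~\ref{thm:trickysigns}. That theorem says that when a gate $G$ is reduced by an ancilla to a gate $H$, the matrix $\Tstar(H)$ is obtained by deleting the rows and columns of $\Tstar(G)$ corresponding to the ancilla qubits. Since restriction to a submatrix trivially preserves the property ``every entry lies in $\subring$,'' the invariant $\subinv(\subring)$ passes from $G$ to $H$.

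The ``hard part,'' such as it is, is really just confirming the algebraic prerequisites ($0, I \in \subring$ plus closure under addition and multiplication) that are implicit in calling $\subring$ a subring; all of the genuinely subtle issues — the noncommutativity of $\ring$, the involution $*$, and the phase bits — are irrelevant to $\subinv(\subring)$ and never enter the argument.
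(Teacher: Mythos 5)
Your proposal is correct and follows essentially the same route as the paper: check that $\operatorname{SWAP}$'s matrix entries lie in $\{0, I\} \subseteq \subring$, note that composition corresponds to matrix multiplication over $\ring$ (so subring closure under sums and products suffices), and observe that tensor products and ancilla reduction only pad with zeros or pass to a submatrix. Your explicit appeal to Theorem~\ref{thm:trickysigns} for the ancilla case makes precise a step the paper states only informally, but the argument is the same.
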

\begin{proof}
Every subring contains $(\begin{smallmatrix} 0 & 0 \\ 0 & 0 \end{smallmatrix})$ and $(\begin{smallmatrix} 1 & 0 \\ 0 & 1 \end{smallmatrix})$ by definition, and therefore the tableau (phase bits omitted) of the $\operatorname{SWAP}$ gate, 
$$
\left(
\begin{array}{cc|cc} 
0 & 0 & 1 & 0 \\
0 & 0 & 0 & 1 \\
\hline
1 & 0 & 0 & 0 \\
0 & 1 & 0 & 0 \\
\end{array}
\right)
$$
satisfies $\subinv(\subring)$. 

Matrix multiplication is a polynomial in the entries of the two matrices, so composition cannot produce entries outside the subring. Similarly, combining tableaux with tensor products or reducing tableaux to submatrices via ancillas does not introduce any new ring elements (see Theorem~\ref{thm:trickysigns}); those operations only use elements already present in the tableau. We conclude that $\subinv(\subring)$ is an invariant for any subring $\subring$. 
\end{proof}

\subsection{Permutation invariants}

The \emph{permutation invariants} get their name from the matrix part of their tableaux, which is required to have the structure of a permutation matrix. That is, every row (or column) has exactly one element which is invertible, and the others are non-invertible. Permutation invariants are also sensitive to phase bits. It is natural to associate the unique invertible element in a row with the phase bits for that row, giving the tableau of a single-qubit gate. A permutation invariant $\perinv(G, S)$ is defined by the set of single-qubit gates $G$ which can be obtained in this way, and the set of non-invertible elements $S$ used to fill the rest of the tableau. In other words, a tableau satisfies $\perinv(G, S)$ if all entries are from $S$ except exactly one entry per row which, when combined with the phase bits for the row, is the tableau of some gate in $G$. 

Note that not all pairs of sets $(G, S)$ produce an invariant. For instance, circuit-building operations will fail to preserve $\perinv(G, S)$ if $G$ is not a group. The exact relationship between $G$ and $S$ required to produce an invariant is difficult to write down. Roughly speaking, products of elements in $S$ should be zero, products of elements in $G$ should remain in $G$, and products between $S$ and $\Tstar(G)$ should be manageable in some sense. Theorem~\ref{thm:permutationinvariant} gives a list of $\perinv(G, S)$ invariants, which will turn out to be exhaustive by Theorem~\ref{thm:final}, the culminating theorem of this paper.

\begin{theorem}
\label{thm:permutationinvariant}
We prove that the following permutation invariants are indeed invariant under the circuit-building operations. Let $G$ be a group of single-qubit gates.
\begin{enumerate}
\item Then 
$$\perinv(G, \{ (\begin{smallmatrix} 0 & 0 \\ 0 & 0 \end{smallmatrix}) \})$$
is an invariant for $\class{G}$. All thirty degenerate classes are characterized by invariants of this form.
\item If $\class{X} \subseteq \class{G} \subseteq \class{\P, \R_X}$ then 
$$
\perinv(G, \{ (\begin{smallmatrix} 0 & 0 \\ 0 & 0 \end{smallmatrix}), (\begin{smallmatrix} 0 & 0 \\ 1 & 0 \end{smallmatrix}) \})
$$
is an invariant for $\class{ C(X,X), G}$. These invariants characterize the five $X$-degenerate classes.
\item If $\class{Y} \subseteq \class{G} \subseteq \class{\P, \R_Y}$ then 
$$
\perinv(G, \{ (\begin{smallmatrix} 0 & 0 \\ 0 & 0 \end{smallmatrix}), (\begin{smallmatrix} 1 & 1 \\ 1 & 1 \end{smallmatrix}) \})
$$
is an invariant for $\class{C(Y,Y), G}$. These invariants characterize the five $Y$-degenerate classes.
\item If $\class{Z} \subseteq \class{G} \subseteq \class{\P, \R_Z}$ then 
$$
\perinv(G, \{ (\begin{smallmatrix} 0 & 0 \\ 0 & 0 \end{smallmatrix}), (\begin{smallmatrix} 0 & 1 \\ 0 & 0 \end{smallmatrix}) \})
$$
is an invariant for $\class{C(Z,Z), G}$. These invariants characterize the five $Z$-degenerate classes.
\end{enumerate}
\end{theorem}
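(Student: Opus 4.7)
The plan is to verify the four circuit-building rules (swap, tensor, composition, ancilla) for each of the four invariant families in turn, treating them uniformly with occasional case splits on $S$. In all four cases the claimed invariant $\perinv(G,S)$ asserts that the matrix part of the tableau is a block permutation matrix whose distinguished entries lie in $\Tstar(G)$ (with associated phase bits making up a single-qubit gate in $G$) and whose remaining entries lie in $S$.

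The swap and tensor rules are immediate: since $I \in G$ in all four cases, the SWAP tableau (a block permutation with identity blocks and zero phase bits) satisfies $\perinv(G,S)$; and the tensor product of two tableaux is a block direct sum in which each row and column inherits its permutation structure and phase bits from one of the factors, preserving the invariant.

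The composition rule is the main computation. Given $A, B$ satisfying $\perinv(G,S)$ with underlying permutations $\sigma_A, \sigma_B$, the entry $(AB)_{ij} = \sum_k A_{ik} B_{kj}$ splits into the term $A_{i,\sigma_A(i)} B_{\sigma_A(i), j}$ plus cross terms that are all products of two elements of $S$. Direct computation in $\ring$ confirms $(\begin{smallmatrix}0&0\\1&0\end{smallmatrix})^2$, $(\begin{smallmatrix}1&1\\1&1\end{smallmatrix})^2$, and $(\begin{smallmatrix}0&1\\0&0\end{smallmatrix})^2$ all vanish, so the cross terms contribute nothing. What remains splits into (i) invertible times invertible at $j = \sigma_B(\sigma_A(i))$, which by the group property of $G$ lies in $\Tstar(G)$, and (ii) invertible times noninvertible (or vice versa) at other $j$, which a finite check using the explicit enumeration of $\Tstar(\class{\P, \R_X})$ (and its $Y, Z$ analogues) shows remains in $S$. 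The sandwich condition $\class{X} \subseteq G \subseteq \class{\P, \R_X}$ is exactly what makes this closure possible. For the phase bits, I would invoke the composition formula from Section~\ref{sec:tableaux}: once the cross-term Pauli strings are shown to square to $I$ (or, when nonzero, produce phase adjustments realizable by Paulis inside $G$), the surviving bookkeeping reduces to composition of single-qubit gates in the group $G$ itself.

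Finally, for the ancilla rule I would apply Theorem~\ref{thm:trickysigns} to conclude that the reduced tableau $\Tstar(H)$ is obtained by deleting the rows and columns indexed by ancilla qubits. If the permutation $\sigma$ sent an ancilla qubit to a non-ancilla qubit, the resulting submatrix would contain a row of noninvertible entries only, contradicting unitarity of $H$; so $\sigma$ restricts to separate permutations on ancilla and non-ancilla qubits, and the surviving submatrix manifestly satisfies $\perinv(G,S)$. Phase bits on surviving rows are either preserved outright (when all dropped entries are zero, by condition 3 of Theorem~\ref{thm:trickysigns}) or, in parts 2–4 where a dropped $S$-entry may be nonzero, corrected by a Pauli; since $X$, $Y$, or $Z$ respectively lies in $G$, the adjusted single-qubit tableau remains in $G$. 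The main obstacle is precisely the phase-bit bookkeeping, both in tracking that the off-permutation $S$-contributions in composition collapse without producing extraneous phases, and in handling ancilla-induced sign flips in parts 2–4; both ultimately reduce to finite case checks enabled by the small size of $\Tstar(\class{\P, \R_X})$ and the two-element structure of each $S$.
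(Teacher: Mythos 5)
Your proposal is correct and follows essentially the same route as the paper's proof: swap and tensor are dispatched trivially, composition reduces to the algebra $S\cdot S = 0$, $SM + MS \subseteq S$, $M\cdot M \subseteq M$ with stray phase flips absorbed by the Pauli guaranteed to lie in $G$, and the ancilla rule is handled via Theorem~\ref{thm:trickysigns} together with the observation that unitarity forces the block permutation to restrict to the surviving qubits. The only difference is presentational: the paper works out the phase-bit bookkeeping for composition slightly more explicitly (noting that the even-indexed binary rows are unaffected by the nilpotent entries), whereas you defer this to a finite check, but the underlying argument is the same.
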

\begin{proof}
Let $\perinv(G,S)$ be one of the invariants above. Let $M = \{ \Tstar(g) : g \in G \}$ be the set of matrices from tableaux in $G$. 
In all cases, $S$ contains $(\begin{smallmatrix} 0 & 0\\ 0 & 0 \end{smallmatrix})$, and $G$ contains the single-qubit identity operation,
$$
\left(
\begin{array}{cc|c} 
1 & 0 & 0 \\ 0 & 1 & 0
\end{array} \right),
$$
so $\operatorname{SWAP}$ satisfies the invariant. And clearly the direct sum of two tableaux in $\perinv(G,S)$ is still in $\perinv(G,S)$ for any $G$ and $S$.

Now consider the composition of two gates. Each entry in the tableau is a dot product of some row from one tableau with some column from the other. Hence, the entry is a sum of $S \times S$, $S \times M$, $M \times S$, or $M \times M$ products. Observe that the $S \times S$ products are all zero (for the particular sets $S$ above), so we may ignore those products. Recall that the row and column each contain exactly one entry in $M$, so depending on whether those entries align, we get either $SM + MS \subseteq S$ or $M^2 = M$. Furthermore, for any row in one tableau there is exactly one column in the other such that the invertible entries line up.  Therefore, exactly one entry in any row (or column) of the composition is in $M$ and the rest are in $S$. Clearly the matrix part of the tableau has the correct form for the invariant. 

We must also consider phase bits under composition. Recall that the phase bits associate with the invertible entries of the matrix to produce single-qubit gates. When we multiply two tableau, these single-qubit gates multiply to produce elements in $G$ (since $G$ is a group), as you would expect. If the non-invertible elements are all zero, then this is the only factor in determining phase bits, so the invariant is preserved by composition.

Now consider the phase bits in the case where $S$ contains nonzero elements, for instance, 
$$S = \{ (\begin{smallmatrix} 0 & 0 \\ 0 & 0 \end{smallmatrix}), (\begin{smallmatrix} 0 & 1 \\ 0 & 0 \end{smallmatrix}) \}.$$
Notice that in this case, both matrices in $S$ have zeros in the bottom row, and the invertible matrices are of the form $(\begin{smallmatrix} 1 & a \\ 0 & 1 \end{smallmatrix})$. Hence, every even-indexed row of the tableau (as a binary matrix) is all zeros except for one entry. Using the method of tableau composition in Section~\ref{sec:tableaux}, one can easily show that for these even-indexed rows, the phase bits are exactly what one would get by composing the invertible elements as gates in $G$. For the other half of the rows, the non-invertible elements may flip the phase bits. But we assume $G$ contains the Pauli element (in this case $Z$) which flips that sign, so the invertible elements and associated phase bits are still in $G$, therefore the invariant is preserved. The $X$- and $Y$-degenerate cases are similar. 

Last, we show that $\perinv(G,S)$ is preserved under ancilla operations. Recall that when we use ancillas, we remove the rows and columns corresponding to those bits. Clearly the elements of the submatrix are still in $M$ and $S$. There is a risk that the invertible element for some row could be in one of the removed columns, but if the submatrix is missing an invertible element in some row then the submatrix violates the symplectic condition and the ancilla rule must have been misapplied. Hence, only elements in $S$ are removed in the non-ancilla part of the tableau, and each row still contains exactly one entry in $M$. 

We appeal to Theorem~\ref{thm:trickysigns} for the phase bits. The theorem says that removing the ancillas can only change the sign for a row if there is a nonzero entry in the non-ancilla bits of the row that are removed. For example, if $S = \{ (\begin{smallmatrix} 0 & 0 \\ 0 & 0 \end{smallmatrix}), (\begin{smallmatrix} 0 & 1 \\ 0 & 0 \end{smallmatrix}) \}$ then only the top phase bit can change. But changing the top phase bit is the same as applying a $Z$, and for this case $Z$ is assumed to be in $G$, so the combination of the element in $M$ and the phase bits is still in $G$. Therefore the $Z$-degenerate $\perinv(G,S)$ are invariants, and the $X$-degenerate and $Y$-degenerate invariants follow by symmetry.

\end{proof}


\section{Equivalence of Generator and Invariant Definitions}
\label{sec:equivalence}

We have now defined each class by a set of generators, and by an invariant, but have not yet shown that these definitions coincide. Below are a collection of lemmas which prove this for all classes in our lattice. Note that one direction is always trivial: it is easy to check that the generators defining a class satisfy a particular invariant, and therefore everything they generate (i.e., the class) must satisfy the invariant.  We encourage the reader to check these invariants against, say, the tableaux in Table~\ref{table:2_qubit_gates}.

For the other inclusion (i.e., every gate satisfying the invariant can be generated by the given generators), we start with an arbitrary gate $g$ satisfying the invariant, and apply gates in the class to $g$ to simplify its tableau step-by-step until it is the tableau of the identity operation. It follows that $A g B = I$ for circuits $A$ and $B$ in the class, which proves $g = A^{-1} B^{-1}$ is in the class. In many cases, the circuit derived this way is a \emph{canonical form} for the gate, and can be used to count the number of gates on $n$ qubits in a class.

Let us start with the degenerate classes. 
\begin{lemma}
\label{lem:degenerate}
Let $G$ be a group of single-qubit gates, and let $g$ be a gate satisfying the permutation invariant $\perinv(G, \{ (\begin{smallmatrix} 0 & 0 \\ 0 & 0 \end{smallmatrix}) \})$. Then there is a circuit for $g$ consisting of a permutation of the inputs followed by layer of single-qubit gates in $G$.
\end{lemma}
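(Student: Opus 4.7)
The plan is to read a permutation directly off the block structure of $\Tstar(g)$ and then interpret what remains as a tensor product of single-qubit gates in $G$.

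First, by the definition of the permutation invariant $\perinv(G, \{(\begin{smallmatrix} 0 & 0 \\ 0 & 0 \end{smallmatrix})\})$, each row of the $n \times n$ block matrix $\Tstar(g)$ contains exactly one invertible block, coming from $\Tstar(h)$ for some $h \in G$, while every other block is zero. I would first invoke the unitarity condition $\Tstar(g)\Tstar(g)^* = I$ from Section~\ref{sec:tableaux}: if two rows had their invertible entries in the same column, $\Tstar(g)^*$ would have a zero column and $\Tstar(g)\Tstar(g)^*$ could not equal $I$. So the columns of $\Tstar(g)$ also each contain exactly one invertible block, and the positions of these blocks define a permutation $\sigma \in S_n$, with block $(i,\sigma(i))$ invertible.

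Next, let $\pi_\sigma$ be the $n$-qubit permutation gate implementing $\sigma$, built from iterated $\SWAP$s (available by the Swap Rule). Its tableau has the identity block at position $(i,\sigma(i))$ and zeros elsewhere, with trivial phase bits. Using the composition formula $\Tstar(C_2 \circ C_1) = \Tstar(C_1)\Tstar(C_2)$, I would compute $\Tstar(g \circ \pi_\sigma^{-1}) = \Tstar(\pi_\sigma^{-1})\Tstar(g)$ and verify that the resulting tableau is block-diagonal, with the $i$-th diagonal block equal to the unique invertible block of row $i$ of $\Tstar(g)$ (and the other entries still zero). Call this tableau $D$; let $a_i$ denote its $i$-th diagonal block, and $(u_{2i-1}, u_{2i})$ the associated phase bits. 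By the definition of $\perinv(G, \{0\})$, the pair $(a_i, (u_{2i-1}, u_{2i}))$ is exactly $\T(h_i)$ for some $h_i \in G$.

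Finally, because the Tensor Rule says that tensoring gates corresponds to taking the block-direct-sum of their tableaux (with phase bits inherited in the natural way), the gate $h_1 \otimes \cdots \otimes h_n$ has tableau precisely $D$. Since stabilizer gates are determined by their tableaux (up to a global phase, which is irrelevant here), we conclude $g \circ \pi_\sigma^{-1} = h_1 \otimes \cdots \otimes h_n$, i.e.
\[
g = (h_1 \otimes \cdots \otimes h_n) \circ \pi_\sigma,
\]
which is exactly a permutation of the input qubits followed by a layer of single-qubit gates from $G$, as required.

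The only genuinely delicate step is the phase-bit bookkeeping: I need to verify that sandwiching by $\pi_\sigma^{-1}$ merely permutes the phase bits of $\Tstar(g)$ (attaching the phase bits of row $i$ of $g$ to diagonal block $i$ of $D$) rather than introducing spurious signs. This follows because $\pi_\sigma^{-1}$ is a permutation gate with trivial phase bits, so the explicit composition formula for phase bits from Section~\ref{sec:tableaux} reduces in this case to a pure relabelling. Given that, each $h_i$ is precisely the single-qubit gate guaranteed by the invariant, and the decomposition lies in $\class{G}$.
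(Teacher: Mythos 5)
Your proposal is correct and follows essentially the same route as the paper: read a permutation off the positions of the invertible blocks, strip it with SWAP gates, and identify each remaining diagonal block together with its row's phase bits as the tableau of a gate in $G$. The only differences are cosmetic — you make explicit the unitarity argument for why each column also has exactly one invertible block and the phase-bit bookkeeping under composition with a permutation gate, both of which the paper leaves implicit.
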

\begin{proof}
Consider the tableau for $g$. Each row or column has exactly one invertible element, so we can read off a permutation $\pi$ from the positions of those elements. Apply $\operatorname{SWAP}$ gates to $g$ to remove this permutation, and put the invertible elements on the diagonal. When we pair a diagonal element with the phase bits for that row, we get a single-qubit gate $g_i$ in $G$. Applying the inverse of this gate to qubit $i$ will zero the phase bits for that row, and make $(\begin{smallmatrix} 1 & 0 \\ 0 & 1 \end{smallmatrix})$ the diagonal entry. Once we do this for each row, we have the identity tableau, therefore $g$ is in $\class{G}$ and has a circuit of the desired form. 
\end{proof}

Next, we consider the $Z$-degenerate classes and, by symmetry, the $X$- and $Y$-degenerate classes.
\begin{lemma}
\label{lem:semidegenerate}
Let $G$ be a group of $Z$-preserving single-qubit gates such that $\class{Z} \subseteq \class{G} \subseteq \class{\P, \R_Z}$. Let $g$ be any gate satisfying the permutation invariant $\perinv(G, \{ (\begin{smallmatrix} 0 & 0 \\ 0 & 0 \end{smallmatrix}), (\begin{smallmatrix} 0 & 1 \\ 0 & 0 \end{smallmatrix}) \})$. Then there is a circuit for $g$ consisting of a layer of single-qubit gates (from $G$), a layer of $C(Z,Z)$ gates, and a permutation. 
\end{lemma}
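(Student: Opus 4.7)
The plan is to mirror Lemma~\ref{lem:degenerate} and peel off three successive layers of $g$: a permutation of SWAPs, a layer of $C(Z,Z)$ gates, and a layer of single-qubit gates from $G$. Concretely, I would post-compose $g$ with generators of $\langle C(Z,Z), G, \operatorname{SWAP} \rangle$ in three stages to reduce its tableau to the identity, then invert to read off a three-layer circuit for $g$.

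The permutation step is inherited directly from Lemma~\ref{lem:degenerate}: the invariant $\perinv(G, \{0, b\})$, where $b = (\begin{smallmatrix} 0 & 1 \\ 0 & 0 \end{smallmatrix})$, guarantees exactly one invertible entry per row and column of $\Tstar(g)$, so the positions of these entries define a permutation $\pi$ which I undo with SWAPs. The result is a tableau $\Tstar(g_1) = D + O$, with $D$ block-diagonal and invertible and with $O$ off-diagonal having entries in $\{0, b\}$.

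The CSIGN step is the new ingredient. First I would observe that unitarity $MM^* = I$, combined with $b^* = b$ and $b \cdot b = 0$, forces the pattern of $b$'s in $O$ to be symmetric: $O_{ij} = b$ if and only if $O_{ji} = b$. For each unordered pair $(i,j)$ with $O_{ij} = b$, include one $C(Z,Z)$ between qubits $i$ and $j$. These gates commute, so they assemble into a single layer $C$ whose tableau matrix is $I + O$. The key calculation is then
\[
(D + O)(I + O) = D + DO + O + O^2,
\]
where $O^2 = 0$ since $b \cdot b = 0$, and where $D_{ii} b = b$ for every off-diagonal $(i,j)$ entry because the only invertible elements of $\ring_Z$ are $I$ and $(\begin{smallmatrix} 1 & 1 \\ 0 & 1 \end{smallmatrix})$. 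Hence $DO + O = 0$ off-diagonally, and the composed tableau matrix is exactly $D$.

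Finally, since $\perinv(G, \{0, b\})$ is preserved by composition (Theorem~\ref{thm:permutationinvariant}), the tableau at this point still satisfies the invariant. With a block-diagonal matrix part, this forces each $D_{ii}$ together with the phase bits of its rows to be some gate $h_i \in G$, and a single layer of $h_i^{-1}$ gates on each qubit reduces the tableau to the identity. The main obstacle will be the phase-bit bookkeeping during the CSIGN layer, since the composition formula of Section~\ref{sec:tableaux} introduces $(-i)$-factors from Pauli basis multiplications that are invisible at the level of the matrix part. Rather than tracking these by hand, I would appeal to the invariant-preservation half of Theorem~\ref{thm:permutationinvariant} to conclude that, however the phases move, the diagonal-plus-phase-bit pairs must still land in $G$, which is precisely what the final single-qubit layer requires.
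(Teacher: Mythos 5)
Your proposal is correct and follows essentially the same route as the paper's proof: read the permutation and the single-qubit gates off the invertible entries and phase bits, identify a $C(Z,Z)$ gate with each symmetric pair of nonzero off-diagonal entries (symmetry forced by unitarity), and note that the $C(Z,Z)$ gates commute. You merely peel the layers in a different order and supply the explicit verification $(D+O)(I+O)=D$ and the appeal to Theorem~\ref{thm:permutationinvariant} for the phase bits, details the paper leaves implicit.
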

\begin{proof}
Consider the tableau of $g$. We can read off a permutation $\pi$, and a single-qubit gate for each input. Assume we have removed those gates (i.e., we now consider the tableau of $g \pi^{-1} g_1^{-1} \cdots g_n^{-1}$), so the tableau has $(\begin{smallmatrix} 1 & 0 \\ 0 & 1 \end{smallmatrix})$ on the diagonal, all other entries are either $(\begin{smallmatrix} 0 & 1 \\ 0 & 0 \end{smallmatrix})$ or zero, and the phase bits are zero. 

The non-zero, off-diagonal entries in the matrix indicate the positions of $C(Z,Z)$ gates. Specifically, if the entry in row $i$ and column $j$ is nonzero then there is a $C(Z,Z)$ on qubits $i$ and $j$. Note that because the matrix part of the tableau is symplectic, the symmetric entry in row $j$ and column $i$ must also be non-zero. The remainder of the circuit consists of the set of $C(Z,Z)$ gates indicated by the non-zero, off-diagonal entries. Notice that $C(Z,Z)$ gates always commute, so their ordering does not matter.
\end{proof}

Now let us consider four $Z$-preserving classes which, when we consider symmetry (i.e., the $X$-preserving and $Y$-preserving equivalents) cover all but two of the remaining classes.
\begin{lemma}
\label{lem:eliminate}
Each of the classes $\class{\tfour, \P}$, $\class{\tfour, \P, \R_Z}$, $\class{C(Z,X), \P}$, and $\class{C(Z,X), \P, \R_Z}$ is the set of all gates corresponding to a subring invariant, where the subrings are 
\begin{align*}
\subring_1 &= \{ (\begin{smallmatrix} 0 & 0 \\ 0 & 0 \end{smallmatrix}), (\begin{smallmatrix} 1 & 0 \\ 0 & 1 \end{smallmatrix}) \}, \\
\subring_2 &= \{ (\begin{smallmatrix} 0 & 0 \\ 0 & 0 \end{smallmatrix}), (\begin{smallmatrix} 0 & 1 \\ 0 & 0 \end{smallmatrix}), (\begin{smallmatrix} 1 & 0 \\ 0 & 1 \end{smallmatrix}), (\begin{smallmatrix} 1 & 1 \\ 0 & 1 \end{smallmatrix}) \}, \\
\subring_3 &= \{ (\begin{smallmatrix} 0 & 0 \\ 0 & 0 \end{smallmatrix}), (\begin{smallmatrix} 1 & 0 \\ 0 & 0 \end{smallmatrix}), (\begin{smallmatrix} 0 & 0 \\ 0 & 1 \end{smallmatrix}), (\begin{smallmatrix} 1 & 0 \\ 0 & 1 \end{smallmatrix}) \}, \\
\subring_4 &= \{ (\begin{smallmatrix} 0 & 0 \\ 0 & 0 \end{smallmatrix}), (\begin{smallmatrix} 0 & 0 \\ 0 & 1 \end{smallmatrix}), (\begin{smallmatrix} 0 & 1 \\ 0 & 0 \end{smallmatrix}), (\begin{smallmatrix} 0 & 1 \\ 0 & 1 \end{smallmatrix}), (\begin{smallmatrix} 1 & 0 \\ 0 & 0 \end{smallmatrix}), (\begin{smallmatrix} 1 & 0 \\ 0 & 1 \end{smallmatrix}), (\begin{smallmatrix} 1 & 1 \\ 0 & 0 \end{smallmatrix}), (\begin{smallmatrix} 1 & 1 \\ 0 & 1 \end{smallmatrix}) \}.
\end{align*}
\end{lemma}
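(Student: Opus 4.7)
\smallskip
\noindent\textbf{Proof proposal.}

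The plan is to prove each of the four equalities by two inclusions. The easy direction is a direct verification: each generator has tableau blocks lying in the stated subring ($\T(\tfour)$ has only $0$ and $I$ blocks, $\T(\R_Z) = (\begin{smallmatrix}1&1\\0&1\end{smallmatrix})\in\subring_2$, $\T(C(Z,X))$ has diagonal blocks in $\subring_3$, and Paulis act only on phase bits), and each $\subring_i$ is closed under block multiplication, direct sum, reindexing, and submatrix extraction (the tableau-level analogues of the four circuit rules). Hence each class is contained in its invariant set. The substantive direction is the reverse: given $g$ with $\Tstar(g)\in\subring_i^{n\times n}$, one must build a circuit for $g$ over the listed generators, and the natural method is to reduce the tableau of $g$ to the identity using only those generators.

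For the full $Z$-preserving class $\class{C(Z,X),\P,\R_Z}$ (invariant $\subring_4=\ring_Z$), I would first zero out the upper-right bits of every $2\times 2$ block using $\R_Z$ and Paulis, combined with conjugation by $C(Z,X)$ so as to reach every block. What remains is a tableau with diagonal blocks only, and the unitarity condition $MM^*=I$ then forces the $X$-rows and $Z$-rows to encode an invertible binary $n\times n$ matrix and its transpose. This pair can be reduced to the identity by $C(Z,X)$-Gaussian elimination, and Paulis clean up the phase bits afterwards. For $\class{C(Z,X),\P}$ (invariant $\subring_3$) the diagonal-block form holds automatically, so only the Gaussian-elimination and Pauli steps are needed. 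For the two $\tfour$-based classes the same scheme applies with $\tfour$ playing the role of $C(Z,X)$: in the $\subring_1$ case the tableau is literally an orthogonal binary matrix $A\in O_n(\ftwo)$, and the $\subring_2$ case reduces to the $\subring_1$ case by first using $\R_Z$ to eliminate the upper-right bits of blocks (which preserves the self-conjugate $0/I$ pattern of the rest).

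I expect the main obstacle to be the claim that $\tfour$ together with SWAPs generates the full orthogonal group $O_n(\ftwo)$ on the matrix part. Unlike $GL_n(\ftwo)$, where $\CNOT$ realizes every transvection and textbook row reduction works, $O_n(\ftwo)$ has a finer structure (Dickson invariant splitting into $O_n^\pm$, small-$n$ exceptions, nontrivial involution classes), so a direct Gaussian-style argument is unlikely to work. I would attempt an induction on a complexity measure such as the number of off-diagonal $1$s in $A$: for any $A\neq I$ satisfying $AA^T=I$, locate indices $i,j,k,\ell$ on which applying $\tfour$ (after conjugation by a permutation, and possibly with ancillas) strictly decreases the measure, with a small-$n$ base case verified by direct computation. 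Phase bits are then handled using Lemma~\ref{lem:t4_generates_paulis}, which already puts the Pauli group inside $\class{\tfour}$. Carrying out this induction cleanly, and correctly interleaving phase-bit cleanup with matrix reduction (so that every intermediate reduction remains inside the claimed class), is where the genuine content of the proof lies.
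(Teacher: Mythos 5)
Your easy direction and your overall plan (reduce the tableau to the identity by Gaussian elimination with the class's own entangling gate, then clean up phases with Paulis) match the paper, but you have left the crux unproven, and the route you sketch for it is not the one that works. The paper's key observation is that for a tableau over $\subring_1$ or $\subring_2$, unitarity ($MM^*=I$) forces each row to contain an \emph{odd} number of invertible blocks, equivalently an odd number of $1$s among the bottom-right bits; a single $\tfour$ applied to three such $1$s and one $0$ (appending an ancilla qubit to supply a $0$ if the row has none) flips all four bits and so reduces the count by exactly two. Iterating leaves one invertible block per row, and unitarity then forces it to be the only one in its column, so one recurses on the remaining submatrix. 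This completely sidesteps the question you flag as the main obstacle --- whether $\tfour$ and SWAP generate all of $O_n(\ftwo)$, with its Dickson-invariant subtleties --- because no structural facts about the orthogonal group are needed, only the row-parity consequence of unitarity. Your proposed induction on the number of off-diagonal $1$s, with small-$n$ base cases "verified by direct computation," is exactly the part you admit you have not carried out, so as written the proof is incomplete at its central step.

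A second, smaller divergence: the paper does the elimination in the opposite order from you. It first reduces the bottom-right bits to a permutation pattern and only then observes that the residual tableau is automatically $Z$-degenerate (diagonal blocks $(\begin{smallmatrix}1&b\\0&1\end{smallmatrix})$, off-diagonal blocks $(\begin{smallmatrix}0&b\\0&0\end{smallmatrix})$), which is dispatched by Lemmas~\ref{lem:degenerate} and~\ref{lem:semidegenerate}. Your plan to "first zero out the upper-right bits using $\R_Z$" cannot work as stated for the off-diagonal blocks: $\R_Z$ is a single-qubit gate and only touches diagonal blocks, and removing an off-diagonal $(\begin{smallmatrix}0&1\\0&0\end{smallmatrix})$ requires a $C(Z,Z)$ gate, which you would first have to exhibit inside $\class{\tfour,\P,\R_Z}$ (the paper does this elsewhere, in Figure~\ref{fig:t4_plus_s_generates_czz}, but it is not needed for this lemma if the elimination is done in the paper's order). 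I recommend adopting the parity argument and the paper's ordering; both gaps then close.
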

\begin{proof}
In all four classes, elements of the tableau are of the form $(\begin{smallmatrix} a & b \\ 0 & d \end{smallmatrix})$. Suppose $(\begin{smallmatrix} ? & ? \\ 0 & d_1 \end{smallmatrix})$ is the $i$th entry of some row, and $(\begin{smallmatrix} ? & ? \\ 0 & d_2 \end{smallmatrix})$ is the $j$th entry in the same row, where entries labeled by ``?'' are unconstrained. If we apply a $\operatorname{CNOT}$ gate from qubit $i$ to $j$, these entries will be of the form $(\begin{smallmatrix} ? & ? \\ 0 & d_1 \end{smallmatrix})$ and $(\begin{smallmatrix} ? & ? \\ 0 & d_1 + d_2 \end{smallmatrix})$ respectively. That is, the bottom right bits change as though we applied the $\operatorname{CNOT}$ gate to those bits. Since a $\tfour$ gate can be built from $\operatorname{CNOT}$ gates, it will (similarly) affect the bottom right bits as though we are applying a $\tfour$. 

Our strategy is to use either $\operatorname{CNOT}$ or $\tfour$ gates (depending on the class) to perform Gaussian elimination on the bottom right entries of the matrix elements. If we have access to $\operatorname{CNOT}$ gates then we literally apply Gaussian elimination, using $\operatorname{CNOT}$ to add one column to another, and using $\operatorname{SWAP}$ to exchange columns. 

If we only have $\tfour$ gates then we are in subring $\subring_1 \subseteq \subring_2$ or $\subring_2$, so $(\begin{smallmatrix} 1 & 0 \\ 0 & 1 \end{smallmatrix})$ and $(\begin{smallmatrix} 1 & 1 \\ 0 & 1 \end{smallmatrix})$ are the only elements with a $1$ in the bottom right position, and also the only invertible elements. It follows that the number of bottom right bits set to $1$ in a row is the same as the number of invertible elements, which must be odd because the matrix is symplectic. To reduce the number, we apply a $\tfour$ to three $1$ bits and a $0$ bit (note: we may add a zero bit by adding an ancilla, if necessary), which changes the $0$ to a $1$ and the $1$'s to $0$'s, reducing the number of $1$'s (or invertible elements) in the row by two. When there is a single $1$ left in the row, symplectic conditions imply that it is also the only $1$ left in that column, so we may ignore that row and column for the moment and continue to eliminate the rest of the matrix. 

Now suppose we have row reduced the matrix, using either $\operatorname{CNOT}$ or $\tfour$, so that the bottom right entry of every element is $0$, except along the diagonal where that bit $1$. At this point, the diagonal element is the only element in a row that can possibly be invertible, therefore the diagonal elements are of the form $(\begin{smallmatrix} 1 & b \\ 0 & 1 \end{smallmatrix})$. Similarly, the symplectic conditions imply that the off-diagonal elements are of the form $(\begin{smallmatrix} 0 & b \\ 0 & 0 \end{smallmatrix})$. In other words, the remaining tableau is $Z$-degenerate, since there is only one invertible element per row or column, and the off-diagonal elements are in $I = \{ (\begin{smallmatrix} 0 & 0 \\ 0 & 0 \end{smallmatrix}), (\begin{smallmatrix} 0 & 1 \\ 0 & 0 \end{smallmatrix}) \}$. We can use either Lemma~\ref{lem:degenerate} or Lemma~\ref{lem:semidegenerate} to find a circuit from the remainder, which is in either $\class{\P}$ or $\class{C(Z,Z),\P,\R_Z}$, depending on the class.
\end{proof}

There are only two classes remaining, $\ALL$ and $\class{\tfour,\P,\Gamma}$, which we handle specially. For the first, we appeal to Aaronson and Gottesman \cite{ag} who give an explicit decomposition for any Clifford gate into layers of $\operatorname{CNOT}$, Hadamard ($\theta_{X+Z}$ in our notation), and phase ($\R_{Z}$) gates.
\begin{lemma}
Any egalitarian gate $g$ can be generated in $\class{\tfour, \P, \Gamma}$. 
\end{lemma}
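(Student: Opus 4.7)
The plan is to reduce the tableau $\Tstar(g)$ to the identity matrix using the available generators, and then fix the phase bits with Pauli gates. The key observation is that the subring $\ring_E$ is isomorphic to $\mathbb{F}_4$ (via $\omega = (\begin{smallmatrix} 0 & 1 \\ 1 & 1 \end{smallmatrix})$ and $\omega^2 = (\begin{smallmatrix} 1 & 1 \\ 1 & 0 \end{smallmatrix})$), with the $^*$-operation corresponding to the Frobenius $\lambda \mapsto \lambda^2$. The unitarity condition $MM^* = I$ makes $\Tstar(g)$ a unitary matrix in $\mathrm{U}_n(\mathbb{F}_4)$, and it suffices to show we can generate every such matrix. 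The available ``row operations'' on $\Tstar(g)$ come from pre-composing with our generators: SWAPs permute rows, a $\Gamma$ on qubit $i$ scales row $i$ by $\omega^2$, and a $\tfour$ applied to qubits $i_1, i_2, i_3, i_4$ left-multiplies those rows by $T = J - I$ over $\mathbb{F}_4$, which acts on a column as $(a_{i_1}, a_{i_2}, a_{i_3}, a_{i_4}) \mapsto (a_{i_2}+a_{i_3}+a_{i_4}, \ldots, a_{i_1}+a_{i_2}+a_{i_3})$. Symmetric column operations come from post-composition.

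The reduction proceeds column by column, as in Lemma~\ref{lem:eliminate}. For column $j$, first use SWAPs and a $\Gamma$-power on qubit $j$ to arrange $M_{jj} = I$. Then, for each remaining nonzero entry $M_{ij} = a$ with $i > j$, observe that applying $T$ to rows $i, i_2, i_3, i_4$ replaces $M_{ij}$ with $M_{i_2,j} + M_{i_3,j} + M_{i_4,j}$; hence $M_{ij}$ can be zeroed by choosing the other three rows so that their column-$j$ entries sum to $a$ in $\mathbb{F}_4$. When no such triple is directly available (for instance when most of column $j$ is already zero, or we lack enough rows), we first manipulate other rows with $\Gamma$-scalings and auxiliary $\tfour$ applications (and, if necessary, introduce one or two ancilla qubits whose corresponding tableau rows contribute extra usable entries) to manufacture a triple with sum $a$, and then apply $T$ to zero $M_{ij}$. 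Once the first column is reduced to $e_j$, the symmetric procedure zeroes the remaining entries of row $j$, and we recurse on the $(n-1) \times (n-1)$ trailing submatrix.

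Once $\Tstar(g) = I$, only the phase bits of the tableau can be nonzero. Since $X_i$ negates $v_{2i}$ and $Z_i$ negates $v_{2i-1}$, and the Paulis lie in our generator set, we can zero out all phase bits with a single layer of Paulis. This exhibits $g$ as the inverse of a product of generators from $\class{\tfour, \P, \Gamma}$, completing the proof.

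The main obstacle is the column-reduction step itself: over $\mathbb{F}_2$, Lemma~\ref{lem:eliminate} exploited the clean identity $T(0,1,1,1)^T = (1,0,0,0)^T$, but over $\mathbb{F}_4$ the target triple-sum $a$ varies across $\{I, \omega, \omega^2\}$, so the reduction requires a small case analysis and may need to first re-scale rows with $\Gamma$ (to put a useful value into the triple) or run an auxiliary $\tfour$ (to seed a nonzero entry in a currently-empty row). Establishing that such a sequence always exists---and that it is correctly unwound so that any ancillas return to their initial state---is where the bookkeeping concentrates. The phase-bit updates during this process are tracked by Theorem~\ref{thm:trickysigns}, and any sign discrepancies introduced along the way are absorbed into the final Pauli layer.
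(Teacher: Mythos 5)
Your overall strategy---identify $\ring_E$ with $\mathbb{F}_4$, view $\Tstar(g)$ as a unitary matrix over $\mathbb{F}_4$, perform Gaussian elimination using $\Gamma$ as scalar multiplication and $\tfour$ as the elimination tool, and finish with a Pauli layer for the phase bits---is exactly the paper's. However, there is a genuine gap at the one step that actually carries the argument: your column-wise elimination requires, for each target entry $M_{ij} = a$, a triple of other rows whose column-$j$ entries sum to $a$ in $\mathbb{F}_4$, and you concede that such a triple need not exist and that "manufacturing" one (via auxiliary $\Gamma$-scalings, auxiliary $\tfour$ applications, and ancilla rows) is unproven bookkeeping. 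This is not a minor loose end: ancilla rows contribute only zeros to column $j$, $\Gamma$-scaling a zero entry keeps it zero, and a short case analysis (e.g.\ a column whose only other nonzero entries are $I$ on the diagonal and one value $b$) shows the directly available triple-sums can miss the required target, so the fallback machinery really is needed and really is nontrivial to unwind correctly. As written, the proof does not establish that the reduction terminates.

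The paper closes this gap with a simple reordering of the two tools that you should adopt: work on one row at a time, and \emph{before} invoking $\tfour$, apply $\Gamma$ or $\Gamma^{-1}$ to each qubit whose entry in that row is a nonzero non-identity element, normalizing every nonzero entry of the row to $I$. Unitarity over $\mathbb{F}_4$ (each nonzero $\lambda$ satisfies $\lambda\lambda^* = 1$) then forces the row to contain an odd number of identity entries, and a $\tfour$ applied to three identity positions and one zero position sends $(I,I,I,0)$ to $(0,0,0,I)$, reducing the count by two exactly as in Lemma~\ref{lem:eliminate}---no case analysis on the target value, no seeding of empty rows, and at most one ancilla to supply a zero position. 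Once a single identity remains, unitarity clears the corresponding column and the recursion proceeds. With this modification the rest of your argument (the $\mathbb{F}_4$ identification, the use of Theorem~\ref{thm:trickysigns}, and the final Pauli layer) goes through as you describe.
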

\begin{proof}
Egalitarian gates satisfy the invariant that all elements are in the subring 
$$
\{ (\begin{smallmatrix} 0 & 0 \\ 0 & 0 \end{smallmatrix}), (\begin{smallmatrix} 1 & 0 \\ 0 & 1 \end{smallmatrix}), (\begin{smallmatrix} 1 & 1 \\ 1 & 0 \end{smallmatrix}), (\begin{smallmatrix} 0 & 1 \\ 1 & 1 \end{smallmatrix}). \}
$$
In fact, this subring is isomorphic to $\mathbb{F}_4$, so it is a field. In particular, we will use that the only noninvertible element of the subring is zero.  The symplectic conditions on the tableau translate to it being unitary as a matrix over $\mathbb{F}_4$. 

Like the other $\tfour$ classes, we use Gaussian elimination on the tableau of $g$. Consider a row of the tableau. If the entry in some column is not the identity, then apply $\Gamma$ or $\Gamma^{-1}$ to the corresponding qubit to make it the identity. By the symplectic condition, there are an odd number of identity elements in the row. We may remove pairs of identity elements with a $\tfour$, similar to Lemma~\ref{lem:eliminate}, until there is only one left and the rest of the row is zero. The symplectic condition implies the column below the identity element is also zero, and we proceed to eliminate the rest of the tableau. Once the matrix part of the tableau is a permutation, we apply SWAP gates so that it becomes the identity and apply Pauli matrices to zero out the phase bits. 

We conclude that all egalitarian gates are in $\class{\tfour, \P, \Gamma}$. 
\end{proof}

\newcommand{\circuitspacer}{3pt}

\section{Circuit Identities}
\label{sec:circuit_identities}
In this section, we give necessary tools to prove that a set of gates generates, in some sense, ``all that one could hope for.''  Formally, we wish to prove that the gate set generates a particular class in the classification lattice when it is contained in that class but fails to satisfy the invariants of all classes below it.  To this end, we give several useful circuit identities that will be used extensively in Section~\ref{sec:universal_construction}.  For instance, one can show that any circuit on two qubits can be reduced to an equivalent circuit containing at most one generalized CNOT gate (see Appendix~\ref{app:canonical_form}).   The following lemma gives only the aspect of that theorem that is necessary to the classification, that is, the ability to extract single-qubit Clifford operations from the composition of generalized CNOT gates.  

\begin{lemma}
\label{lem:coalesce_cnots}
Let $P, Q, R \in \P$, and let $\Gamma P \Gamma^\dag =  Q$ and $\Gamma Q \Gamma^\dag =  R$.  Then
\begin{itemize}
\item $C(P,Q)$ and $C(P,R)$ generate $\R_P$.
\item $C(P,P)$ and $C(P,Q)$ generate $\R_P$.
\item $C(P,P)$ and $C(Q,R)$ generate $\Gamma$.
\item $C(P,P)$ and $C(Q,Q)$ generate $\theta_{P+Q}$.
\end{itemize}
\end{lemma}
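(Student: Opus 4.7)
The plan is to prove each of the four claims by exhibiting an explicit circuit built from the given generators, combined with SWAPs and a stabilizer ancilla, that implements the desired single-qubit gate. For claim~1, I would compute the two-qubit composition $C(P,R)\circ C(P,Q) = \tfrac{I+P}{2}\otimes I + \tfrac{I-P}{2}\otimes RQ$ directly from the expansion $C(P,X) = \tfrac{I+P}{2}\otimes I + \tfrac{I-P}{2}\otimes X$. Because $P$, $Q$, $R$ are the three distinct non-identity Paulis, $RQ = \pm iP$, so the composition equals $\tfrac{I+P}{2}\otimes I \pm i\tfrac{I-P}{2}\otimes P$. Fixing the second qubit as an ancilla in $\ket{+P}$ makes that qubit drop out via the ancilla rule, leaving the first qubit under $\tfrac{I+P}{2} \pm i\tfrac{I-P}{2}$, a scalar multiple of $\R_P^{\pm 1}$. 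For claim~2 the same manipulation applies with one substitution: $C(P,R)\circ C(P,P) = \tfrac{I+P}{2}\otimes I \pm i\tfrac{I-P}{2}\otimes Q$ since $RP = \pm iQ$, and the ancilla $\ket{+Q}$ on the second qubit again extracts $\R_P^{\pm 1}$.

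Claims~3 and~4 are more delicate because the two generators have mismatched ``controls'' ($P$ versus $Q$), so no single-qubit ancilla simultaneously lies in the eigenbases required by both gates. The plan is to consider longer circuits, for instance a three-qubit composition of $C(P,P)$ and $C(Q,R)$ interleaved with SWAPs for claim~3, or $C(Q,Q)\circ \SWAP\circ C(P,P)$ augmented with an extra copy of one generator for claim~4, whose tableau can be reduced via a multi-qubit (possibly entangled) stabilizer ancilla that absorbs the entangling ``coupling'' terms between the two halves of the circuit, leaving a single-qubit $\Gamma$ or $\theta_{P+Q}$ on the remaining qubit. The SWAP gates play an essential role: conjugation by SWAP exchanges the roles of the two qubits in any generalized CNOT, which is needed to align the coupling terms so they cancel on the ancilla. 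In claim~4 the gate $\theta_{P+Q}$ is the reflection interchanging the $P$- and $Q$-axes, consistent with the symmetric action of $C(P,P)\circ C(Q,Q)$ on those two bases; in claim~3, the cyclic action $P\to Q\to R\to P$ of $\Gamma$ arises from the way $C(P,P)$ and $C(Q,R)$ together touch all three Pauli axes.

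The main obstacle is precisely this ancilla reduction for claims~3 and~4: the straightforward single-qubit ancilla trick fails, so one must identify the right entangled stabilizer ancilla (or a sufficiently long generator composition) and verify, via careful tableau computation with phase tracking, that the entangling corrections really are absorbed into it. The key algebraic inputs throughout are the Pauli product relations $PQ = \pm iR$, the involution property $C(P,Q)^2 = I$, and the symmetric structure of $C(P,P)$. Once the reduction produces the desired single-qubit gate up to a global Pauli, the proof concludes, since Pauli operations are freely available in any class containing a nontrivial entangling gate.
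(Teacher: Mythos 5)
Your handling of the first two items is correct and is essentially the paper's argument: composing two generalized CNOTs with the common control Pauli $P$ collapses, via the projector decomposition, to $\R_P$ on the first qubit times a residual coupling $C(P,RQ)$ (resp.\ $C(P,RP)$), which an eigenstate ancilla on the second qubit removes. The paper records this as the circuit identity $C(P,Q)\circ C(P,R) = (\R_P\otimes I)\circ C(P,RQ)$ followed by the ancilla rule, which is exactly your computation.

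For items 3 and 4, however, there is a genuine gap: you correctly sense that a longer composition is needed, but you never produce it, and you explicitly leave the ancilla reduction as an unresolved obstacle, proposing to search for a multi-qubit entangled stabilizer ancilla verified by tableau computation. The missing idea is the three-gate identity, the analogue of the classical $\CNOT$--$\CNOT$--$\CNOT = \SWAP$ trick:
\[
C(P,P)\circ C(Q,R)\circ C(P,P) = (\Gamma\otimes\Gamma^\dag)\circ\SWAP,
\qquad
C(P,P)\circ C(Q,Q)\circ C(P,P) = (\theta_{P+Q}\otimes\theta_{P+Q})\circ\SWAP .
\]
Given these, the swap rule cancels the $\SWAP$, and the ancilla rule applied to the second qubit with an eigenstate of $\Gamma^\dag$ (resp.\ $\theta_{P+Q}$) leaves exactly $\Gamma$ (resp.\ $\theta_{P+Q}$) on the first qubit; no entangled or multi-qubit ancilla is needed. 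Note also that your instinct to look for a \emph{stabilizer} ancilla points at a real difficulty but a different one: the eigenstates of $\Gamma$ and $\theta_{P+Q}$ are not stabilizer states, yet the Ancilla Rule in this model permits arbitrary quantum states, so a single-qubit eigenstate suffices here; replacing it by a stabilizer state is precisely the open conjecture of Section~\ref{sec:open_problems}. Without the displayed identities (or something equivalent), items 3 and 4 remain unproved.
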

\begin{proof}
The first inclusion comes from the following identity:
\newcommand{\raiserise}{-15pt}
\begin{align*}
\Qcircuit @C=1em @R=1.5em {
& \gate{P} & \gate{P} & \qw  \\
& \gate{R} \qwx & \gate{Q} \qwx & \qw}
\horizontally
\raisebox{\raiserise}{=}
\horizontally
\Qcircuit @C=1em @R=1.5em {
& \gate{\R_{P}} & \gate{P} & \qw \\
& \qw & \gate{P} \qwx & \qw }
\horizontally
\raisebox{\raiserise}{$\xrightarrow[\text{rule}]{\text{ancilla}}$}
\horizontally
\raisebox{\raiserise+2pt}{\Qcircuit @C=1.5em @R=1.5em {
& \gate{\R_{P}}& \qw}}
\horizontally
\end{align*}
where we have used that $QR = iP$ is a Pauli. Conjugating the second qubit by $\Gamma$ in the diagram above, gives the second identity. 
For the third identity, we have 
\begin{align*}
\Qcircuit @C=1em @R=1.5em {
& \gate{P} & \gate{Q} & \gate{P} & \qw  \\
& \gate{P} \qwx & \gate{R} \qwx & \gate{P} \qwx & \qw}
\horizontally
\raisebox{\raiserise}{=}
\horizontally
\Qcircuit @C=1em @R=1.5em {
& \qswap  & \gate{\Gamma} & \qw \\
& \qswap \qwx  & \gate{\Gamma^\dag} & \qw}
\horizontally
\raisebox{\raiserise}{$\xrightarrow[\text{rule}]{\text{swap}}$}
\horizontally
\Qcircuit @C=1em @R=1.5em {
& \gate{\Gamma} & \qw \\
& \gate{\Gamma^\dag} & \qw}
\horizontally
\raisebox{\raiserise}{$\xrightarrow[\text{rule}]{\text{ancilla}}$}
\horizontally
\raisebox{\raiserise+2pt}{\Qcircuit @C=1.5em @R=1.5em {
& \gate{\Gamma} & \qw}}
\end{align*}
and for the final identity
\begin{align*}
\Qcircuit @C=1em @R=1.5em {
& \gate{P}  & \gate{Q}  & \gate{P} & \qw  \\
& \gate{P} \qwx & \gate{Q} \qwx & \gate{P} \qwx & \qw}
\horizontally
\raisebox{\raiserise}{=}
\horizontally
\Qcircuit @C=1em @R=1.5em {
& \qswap  & \gate{\theta_{P+Q}}  & \qw  \\
& \qswap \qwx & \gate{\theta_{P+Q}} & \qw}
\horizontally
\raisebox{\raiserise}{$\xrightarrow[\text{rule}]{\text{swap}}$}
\horizontally
\Qcircuit @C=1em @R=1.5em {
& \gate{\theta_{P+Q}}  & \qw  \\
& \gate{\theta_{P+Q}}  & \qw}
\horizontally
\raisebox{\raiserise}{$\xrightarrow[\text{rule}]{\text{ancilla}}$}
\horizontally
\raisebox{\raiserise}{\Qcircuit @C=1.5em @R=1.5em {
& \gate{\theta_{P+Q}}  & \qw}\;\; .}
\end{align*}
\end{proof}

It might seem strange to reduce non-degenerate gates into less powerful single-qubit gates, but we will eventually see that the single-qubit generators are crucial.  Once we have shown that a particular set of gates generates all single-qubit operations, then that set of gates will generate the class of \emph{all} Clifford operations provided it contains any non-degenerate gate. All non-degenerate gates generate at least one Pauli, often the entire Pauli group, which is why some single-qubit classes do not appear as the single-qubit subgroup of a non-degenerate class.  For instance, consider the CNOT gate where the first qubit controls the second qubit.  If we let the first input be $\ket{1}$, then a Pauli $X$ operation is always applied to the second qubit.  Similarly, if we let the input to the second qubit be $\ket{-}$, then a Pauli $Z$ operation is always applied to the first qubit.  Under the ancilla rule, we now have Pauli $X$ and $Z$ operations, so we can generate $Y$ and the entire Pauli group.  Clearly, the same is true for any heterogeneous CNOT gate.  However, surprisingly, the following lemma shows that even the $\tfour$ gate suffices to generate the entire Pauli group.

\begin{lemma}
\label{lem:t4_generates_paulis}
$\tfour$ generates the Pauli group.
\end{lemma}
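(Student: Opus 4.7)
The strategy is to apply $\tfour$ with carefully chosen two-qubit Bell-state ancillas and invoke the ancilla rule to extract a two-qubit Pauli operation on the remaining pair of qubits, then apply the ancilla rule a second time with single-qubit eigenstate ancillas to isolate individual single-qubit Paulis. Since $\operatorname{SWAP}$ is always available and lets us relocate operations to any qubit pair, generating $X$ and $Z$ suffices: $Y$ is proportional to $XZ$ up to global phase, and tensor products over any qubits give all of $\P$.

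For $X$, I would prepare qubits $3$ and $4$ in the Bell state $\ket{\Psi^+} = \frac{1}{\sqrt{2}}(\ket{01} + \ket{10})$ and compute $\tfour(\ket{x_1 x_2} \otimes \ket{\Psi^+})$ by cases on the parity of $x_1 + x_2$. Because both computational-basis components of $\ket{\Psi^+}$ have odd weight, the total four-bit parity takes the same value for both terms of the superposition, so $\tfour$ acts uniformly---either flipping all four bits or doing nothing. When $x_1 + x_2$ is even the total parity is odd, so every bit flips; since $\ket{\Psi^+}$ is invariant under the simultaneous bit-flip of its two qubits, the ancilla returns to itself while $\ket{x_1 x_2} \mapsto \ket{\bar x_1 \bar x_2}$. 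When $x_1 + x_2$ is odd nothing happens. The ancilla rule then yields the effective two-qubit gate $G_1 = (X \otimes X)\cdot \operatorname{SWAP}$; composing with $\operatorname{SWAP}$ gives $X \otimes X$; and one final ancilla reduction with a $\ket{+}$ ancilla on the second qubit (which is fixed by $X$) extracts the single-qubit Pauli $X$.

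For $Z$, the analogous construction uses $\ket{\Phi^-} = \frac{1}{\sqrt{2}}(\ket{00} - \ket{11})$, whose components have even weight but opposite signs. When $x_1 + x_2$ is even the total parity is even and nothing happens. When $x_1 + x_2$ is odd the total parity is odd, $\tfour$ flips all four bits, and the resulting swap $\ket{00} \leftrightarrow \ket{11}$ in the ancilla turns $\ket{\Phi^-}$ into $-\ket{\Phi^-}$, depositing an overall minus sign on the computed qubits. The effective gate is therefore $G_2 = (Z \otimes Z)\cdot \operatorname{SWAP}$; composing with $\operatorname{SWAP}$ yields $Z \otimes Z$; and reducing by a $\ket{0}$ ancilla then extracts $Z$.

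The main subtlety is sign bookkeeping in the $\ket{\Phi^-}$ case: one has to verify that the minus sign lands on the \emph{output} computational-basis states (so that the gate really is $(Z \otimes Z)\operatorname{SWAP}$) rather than remaining tied to the ancilla with an input-dependent phase, which would violate the ancilla rule. The uniform-parity property of each Bell ancilla---both basis terms have the same weight parity---is exactly what guarantees that $\tfour$ acts identically across the superposition and that the ancilla returns cleanly to its initial state.
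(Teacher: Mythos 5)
Your proof is correct and follows essentially the same route as the paper's: both hinge on the Bell-state ancillas $\frac{\ket{01}+\ket{10}}{\sqrt{2}}$ and $\frac{\ket{00}-\ket{11}}{\sqrt{2}}$, whose computational-basis components share a weight parity and are invariant up to sign under the simultaneous bit flip, so that $\tfour$ acts uniformly across the superposition and returns the ancilla intact. The only difference is bookkeeping: the paper feeds a third ancilla qubit ($\ket{0}$ or $\ket{+}$) into $\tfour$ and swaps it with the input to obtain a single-qubit Pauli directly, whereas you first extract the two-qubit gate $(P \otimes P)\cdot\operatorname{SWAP}$ and then peel off one qubit with an eigenstate ancilla.
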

\begin{proof}
Consider the following two circuits:
\begin{center}
\begin{minipage}{.45\linewidth}
\centering
$\Qcircuit @C=2em @R=1.5em {
\frac{\ket{00} - \ket{11}}{\sqrt{2}} & & \multigate{2}{\tfour} & \qw & \qw \\
\ket{0} & & \ghost{\tfour} & \qswap & \qw  & \\
\ket{x} & & \ghost{\tfour} & \qswap \qwx & \qw & (-1)^x \ket{x}}$
\end{minipage}
\begin{minipage}{.45\linewidth}
\centering
$\Qcircuit @C=2em @R=1.5em {
\frac{\ket{01} + \ket{10}}{\sqrt{2}} & & \multigate{2}{\tfour} & \qw & \qw \\
\ket{+} & & \ghost{\tfour} & \qswap & \qw  & \\
\ket{x} & & \ghost{\tfour} & \qswap \qwx & \qw & \ket{x \oplus 1}}$
\end{minipage}
\end{center}
Under the ancilla rule, the first generates a Pauli $Z$ operation while the second generates a Pauli $X$, from which we can clearly generate the Pauli group.
\end{proof}


\begin{lemma}
\label{lem:tfour_cpp_to_rp}
$\tfour$ and $C(P,P)$ generate $\R_P$.
\end{lemma}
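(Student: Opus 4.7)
By $\Gamma$-symmetry---$\tfour$ is egalitarian and conjugating by $\Gamma^{\otimes 4}$ cycles the Paulis $X \to Y \to Z \to X$---it suffices to treat the case $P = Z$: show that $\tfour$ and $\CSIGN = C(Z,Z)$ together generate $\R_Z$. The plan is to imitate the construction of Lemma~\ref{lem:t4_generates_paulis}, but with an ancilla state carrying an $i$-phase rather than the $-1$-phase of the Bell state, so that the "readout" produced by $\tfour$ is $\R_Z^\dag$ rather than $Z$. The role of $\CSIGN$ is to correlate that $i$-phase with the input bit before $\tfour$ acts.

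Concretely, I would exhibit the following 4-qubit circuit. Take ancilla qubits $1,2,3$ in the state
$$\ket\psi \;=\; \frac{1}{\sqrt{2}}(\ket{00}+i\ket{11})_{12}\otimes \ket{0}_3,$$
which is locally equivalent to the Bell state (it is $(I\otimes\R_Z)\,\ket{\mathrm{Bell}^+}_{12}\otimes\ket 0_3$ up to global phase) and therefore a legitimate ancilla. Place the input $\ket x$ on qubit $4$ and apply $\CSIGN_{14}$, then $\tfour$, then $\SWAP_{34}$. The verification is a direct basis computation analogous to Lemma~\ref{lem:t4_generates_paulis}: for $x=0$ the term $\frac{1}{\sqrt{2}}(\ket{0000}+i\ket{1100})$ is fixed by every gate, while for $x=1$ the $\CSIGN_{14}$ introduces a compensating sign, $\tfour$ flips the odd-parity terms, and $\SWAP_{34}$ repositions the result, producing $\frac{-i}{\sqrt{2}}(\ket{00}+i\ket{11})_{12}\otimes\ket{0}_3\otimes\ket{1}_4$. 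By linearity, the overall effect on input $\alpha\ket 0 + \beta\ket 1$ is
$$\ket\psi \otimes (\alpha\ket 0 - i\beta\ket 1),$$
so the ancilla is restored and the induced gate on qubit $4$ is $\operatorname{diag}(1,-i) = \tfrac{1-i}{\sqrt{2}}\,\R_Z^\dag$, i.e.\ $\R_Z^\dag$ up to a global phase. By the ancilla rule this shows $\R_Z^\dag \in \class{\tfour, \CSIGN}$, and since every stabilizer gate has finite order (as used in Section~\ref{sec:classes}) this gives $\R_Z$ as well.

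Finally I would invoke the $\Gamma$-symmetry to transfer the construction. Conjugating the entire circuit by $\Gamma^{\otimes 4}$ (or $(\Gamma^\dag)^{\otimes 4}$) turns $\CSIGN$ into $C(X,X)$ or $C(Y,Y)$ and $\ket\psi$ into the corresponding $P$-basis ancilla, and leaves $\tfour$ invariant, so the same argument yields $\R_X$ from $\tfour+C(X,X)$ and $\R_Y$ from $\tfour+C(Y,Y)$.

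The main obstacle is spotting the right ancilla/circuit combination: the $-1$ of the Bell state in Lemma~\ref{lem:t4_generates_paulis} produced a Pauli, so one naturally tries an ancilla with an $i$ amplitude to extract a $\pi/2$ rotation, and the $\CSIGN_{14}$ is precisely what is needed to make the resulting entanglement factor out. Once the state and circuit are in hand, the proof reduces to the four-line basis computation above.
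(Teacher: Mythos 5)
Your proposal is correct and takes essentially the same route as the paper: the paper's Figure~\ref{fig:t4_plus_cpp_to_rp} uses the identical ancilla $\frac{\ket{00}+i\ket{11}}{\sqrt{2}}\otimes\ket{0}$ together with one $\tfour$, one $C(Z,Z)$, and a $\operatorname{SWAP}$, differing only in that the $C(Z,Z)$ is applied after the $\tfour$ (between the ancilla and the output-carrier qubit) so that the circuit yields $\R_Z$ directly rather than $\R_Z^\dagger$. The reduction from general $P$ to $P=Z$ by conjugating the whole circuit is likewise the paper's argument.
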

\begin{proof}
Figure \ref{fig:t4_plus_cpp_to_rp} shows how to generate $\R_Z$ with $C(Z,Z)$. To generate $\R_X$ with $C(X,X)$, we simply appeal to the symmetry of $\tfour$.  Concretely, consider conjugating the entire circuit in Figure~\ref{fig:t4_plus_cpp_to_rp} by $\Gamma^{\otimes 4}$. We must now check: the circuit generates $\R_X$; and it is in the class $\class{\tfour, C(X,X)}$.  For the former, note that we can still apply the ancilla rule on the first three qubits (i.e., by multiplying the original ancillas by $\Gamma$ on each qubit). Therefore, the transformation on the final qubit is $\Gamma \R_Z \Gamma^\dagger = \R_X$.

To check containment in the class, we simply conjugate all the gates in the circuit by $\Gamma^{\otimes 4}$.  SWAP is unchanged, and $C(Z,Z)$ is mapped to $C(X,X)$.  Finally, recall that $\tfour$ is egalitarian, so
$$
\Tstar(\Gamma^{\otimes 4} \tfour (\Gamma^\dagger)^{\otimes 4}) = \Tstar(\tfour).
$$
That, it is mapped to is itself followed by layer of Pauli operations (technically, we have the identity $\Gamma^{\otimes 4} \tfour (\Gamma^\dagger)^{\otimes 4} = \tfour X^{\otimes 4}$).  Since $\tfour$ generates the Pauli group by Lemma~\ref{lem:t4_generates_paulis}, the entire conjugated circuit is in $\class{\tfour, C(X,X)}$.  Repeating the entire argument for $C(Y,Y)$ and $\R_Y$ completes the proof.
\end{proof}

\begin{figure}
\centering
\mbox{
\Qcircuit @C=1.5em @R=1em {
\raisebox{-4.1em}{\begin{tikzpicture}\draw [decorate,decoration={brace}](0,0)--(0,.82) node[midway, xshift=-2.2em]{$\frac{\ket{00} + i\ket{11}}{\sqrt{2}}$};\end{tikzpicture}\hspace{2em}} & & \multigate{3}{\tfour} & \qw & \qw & \qw & \\
 & & \ghost{\tfour} & \gate{Z} & \qw & \qw & \\
\ket{0} & & \ghost{\tfour} &  \gate{Z} \qwx & \qswap & \qw  & \\
\ket{x} & & \ghost{\tfour} &  \qw & \qswap \qwx & \qw & i^x \ket{x}
}}
\caption{Generating $\R_Z$ with $\tfour$ and $C(Z,Z)$.}
\label{fig:t4_plus_cpp_to_rp}
\end{figure}
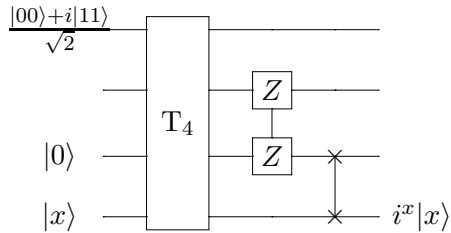

The following lemmas make precise our working assumption that single-qubit gates can significantly bolster the power of non-degenerate gate sets.
\begin{lemma}
\label{lem:cpq_plus_to_all}
Suppose we have any $C(P,Q)$ gate with any single-qubit gate $G$ that does not preserve the $P$-basis and any single-qubit gate $H$ that does not preserve the $Q$-basis.  Then $\class{C(P,Q), G, H} = \ALL$.
\end{lemma}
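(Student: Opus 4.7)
The plan is to produce the well-known generating set $\{C(P,Q), \R_P, \R_Q, \P\}$ inside $\class{C(P,Q), G, H}$; since this set is equivalent under a single-qubit basis change to $\CNOT$ + phase + Hadamard, it generates all of $\ALL$. The starting observation, noted in the paragraph preceding Lemma~\ref{lem:t4_generates_paulis}, is that any heterogeneous generalized CNOT already generates the Pauli group on its own, so $\P \subseteq \class{C(P,Q), G, H}$.

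To extract the two single-qubit rotations, I would conjugate $C(P,Q)$ by $G$ on the first qubit, and separately by $H$ on the second. Because $G$ fails to preserve the $P$-basis, $G P G^\dag = \pm R_1$ for some Pauli $R_1 \neq \pm P$, and a direct calculation gives $(G \otimes I)\, C(P,Q)\, (G^\dag \otimes I) = C(\pm R_1, Q)$; the sign is then absorbed into a Pauli from the previous step to leave $C(R_1, Q)$. Symmetrically, conjugating by $H$ on the second qubit produces some $C(P, R_2)$ with $R_2 \neq \pm Q$.

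Now feed these three generalized CNOTs into Lemma~\ref{lem:coalesce_cnots}. The pair $C(P,Q)$ and $C(P, R_2)$ share their first argument and have distinct second arguments, so either the first bullet of Lemma~\ref{lem:coalesce_cnots} (when $R_2$ is the third Pauli distinct from $P$ and $Q$) or the second bullet (when $R_2 = P$) yields $\R_P$. Applying the swap rule turns $C(P,Q)$ and $C(R_1, Q)$ into $C(Q,P)$ and $C(Q, R_1)$, which share first argument $Q$ with distinct second arguments; the same case split yields $\R_Q$. Since $\R_P$ and $\R_Q$ are order-four rotations about distinct axes of the cube, together with $\P$ they generate the full single-qubit Clifford group; combining this with $C(P,Q)$ then gives $\ALL$ by the standard stabilizer normal form of Aaronson and Gottesman~\cite{ag}.

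The main obstacle I anticipate is merely bookkeeping for the phase bits during the conjugation and swap steps, since those could in principle yield $C(-R_1, Q)$ or $C(P, -R_2)$. This is harmless here precisely because the first step gave us all of $\P$, which acts transitively on signs of generalized CNOTs and so renders every sign choice equivalent.
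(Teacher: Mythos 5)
Your treatment of the heterogeneous case $P \neq Q$ is sound and is essentially the paper's own argument for that case (extract conjugated generalized CNOTs from $G$ and $H$, coalesce pairs sharing an argument via Lemma~\ref{lem:coalesce_cnots} into quarter-rotations, and close up the single-qubit group), differing only in which pairs you coalesce: you obtain $\R_P$ and $\R_Q$, while the paper obtains $\R_Q$ and $\R_S$ for some $S \neq Q$. However, there is a genuine gap: the lemma also covers the homogeneous gates $C(P,P)$ (e.g.\ $C(Z,Z)=\CSIGN$), and your proof silently excludes them. Your very first step, that ``any heterogeneous generalized CNOT generates the Pauli group,'' is inapplicable when $P=Q$; as the paper notes in the proof of Theorem~\ref{thm:final}, $C(P,P)$ does \emph{not} generate the Pauli group on its own (it yields only $\langle P \rangle$). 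Worse, the rest of your construction degenerates there: conjugating $C(P,P)$ by $G$ on the first qubit and by $H$ on the second yields gates of the form $C(R_1,P)$ and $C(P,R_2)$, and both pairs you would feed to Lemma~\ref{lem:coalesce_cnots} share the argument $P$ with $C(P,P)$, so both coalescings return the \emph{same} rotation $\R_P$. You never obtain a rotation about a second axis, and $\class{C(P,P), \P, \R_P}$ is a proper class in the lattice (it is $P$-degenerate), so the argument as written terminates strictly below $\ALL$ in this case.

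The missing ingredient is exactly what the paper uses as its first case: when $P=Q$, conjugate \emph{both} qubits of $C(P,P)$ by $G$ to obtain $C(R,R)$ with $R \neq P$, apply the fourth bullet of Lemma~\ref{lem:coalesce_cnots} to extract the edge rotation $\theta_{P+R}$, then conjugate one qubit of $C(P,P)$ by $\theta_{P+R}$ to manufacture a heterogeneous $C(P,R)$. That gate supplies the Pauli group, the second bullet of Lemma~\ref{lem:coalesce_cnots} supplies $\R_P$, and $\class{\P, \theta_{P+R}, \R_P}$ is the full single-qubit group, after which your concluding appeal to the Aaronson--Gottesman normal form goes through. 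With this case added (and with your sign bookkeeping rearranged so that it does not presuppose having all of $\P$ before any heterogeneous gate has been produced), the proof would be complete.
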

\begin{proof}
We will prove that the class $\class{C(P,Q), G, H}$ contains all single-qubit gates.  Then, to prove that the class generates all Clifford operations, it is be sufficient to show that it contains a CNOT gate.  However, since all generalized CNOT gates are conjugates of each other, this is immediate.

First suppose $P=Q$.  Since $G$ does not preserve $P$-basis, we can use $G$ to create a $C(R,R)$ gate where $R \neq P$.  By Lemma~\ref{lem:coalesce_cnots}, we can generate a $\theta_{P+R}$ gate.  Conjugating $C(P,P)$ by $\theta_{P+R}$ on the second qubit yields a $C(P,R)$ gate.  Once again leveraging Lemma~\ref{lem:coalesce_cnots},  $C(P,R)$ and $C(P,P)$ generate an $\R_P$ gate.  Referring to the single-qubit lattice (see Figure~\ref{fig:degeneratelattice}), we see that the class $\class{\P, \theta_{P+R}, \R_P}$ contains all single-qubit gates.

Now suppose that $P \neq Q$.  Once again, since $G$ does not preserve $P$-basis, we can use $G$ to create a $C(R, Q)$ gate.  If $R = Q$, then by the logic above, we can use $H$ to generate all single-qubit gates, so suppose $R \neq Q$.  By Lemma~\ref{lem:coalesce_cnots}, we can use $C(P,Q)$ and $C(R,Q)$ to generate an $\R_Q$ gate.  Conjugating both $C(P,Q)$ and $C(R,Q)$ by $H$ appropriately, gives a $C(P,S)$ and $C(R,S)$ for some $S \neq Q$, which we can once again generate an $\R_S$ gate.  Referring to the single-qubit lattice, we see that the class $\class{\P, \R_S, \R_Q}$ contains all single-qubit gates.
\end{proof}

\begin{lemma}
\label{lem:t4_plus_single_qubits_ftw}
$\tfour$ with the class of all single-qubit gates generates $\ALL$.
\end{lemma}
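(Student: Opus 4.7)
The plan is to show that $\class{\tfour, \text{all single-qubit gates}}$ contains a generalized CNOT gate --- specifically $C(Z,Z)$ --- and then invoke Lemma~\ref{lem:cpq_plus_to_all} with Hadamard (which does not preserve the $Z$-basis) to conclude that the class equals $\ALL$.

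First I would form the four-qubit operation $D := S_1 \tfour S_1^{\dagger} \tfour$, where $S_1 = \R_Z \otimes I \otimes I \otimes I$ applies the phase gate to qubit~$1$ only. Since $\tfour$ acts as a classical permutation and $\R_Z$ is diagonal in the $Z$-basis, $D$ is itself diagonal in the computational basis. On even-parity inputs the inner $\tfour$ is trivial and $S_1^{\dagger} \cdot S_1$ cancels, so $D$ acts as the identity. On odd-parity inputs the inner $\tfour$ flips all four bits, swapping the roles of $S_1$ and $S_1^{\dagger}$ on qubit~$1$ and producing a nontrivial phase that depends on whether qubit~$1$ was originally $\ket{0}$ or $\ket{1}$. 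Tabulating $D\ket{z} = i^{a(z)}\ket{z}$ on all sixteen inputs yields the quadratic polynomial
$$a(z_1 z_2 z_3 z_4) \equiv z_1 + 3(z_2 + z_3 + z_4) + 2(z_2 z_3 + z_2 z_4 + z_3 z_4) \pmod{4},$$
which decomposes $D$ as a product of single-qubit Cliffords and three $C(Z,Z)$ gates:
$$D = \R_{Z,1} \cdot \R_{Z,2}^{\dagger} \cdot \R_{Z,3}^{\dagger} \cdot \R_{Z,4}^{\dagger} \cdot C(Z,Z)_{23} \cdot C(Z,Z)_{24} \cdot C(Z,Z)_{34}.$$

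Next, I would extract a clean $C(Z,Z)$ by using qubits~$1$ and $4$ as $\ket{0}$ ancillas. Every factor in the decomposition of $D$ preserves $\ket{0}$: the single-qubit phase gates act trivially on $\ket{0}$, and any $C(Z,Z)$ with one qubit in state $\ket{0}$ reduces to the identity on the other qubit. Hence the effective gate on qubits~$2$ and $3$ is $(\R_Z^{\dagger} \otimes \R_Z^{\dagger}) \cdot C(Z,Z)$, which lies in our class by the ancilla rule. Composing with $\R_Z$ on each of qubits~$2$ and $3$ (available in our single-qubit set) cancels the single-qubit factors and isolates $C(Z,Z)$ in our class. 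Lemma~\ref{lem:cpq_plus_to_all} with $P = Q = Z$ and $G = H = \theta_{X+Z}$ then yields $\ALL$.

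The main technical obstacle is the polynomial calculation for $a(z)$ and the verification that its quadratic coefficients are nonzero. The structural reason the construction succeeds is that applying $\R_Z$ to just one qubit breaks the permutation symmetry of $\tfour$ across its inputs; conjugating $\tfour$ by this asymmetric phase then forces the resulting diagonal Clifford to contain genuinely two-qubit interactions among the other three qubits, and these are exactly $C(Z,Z)$ gates --- the entangling ingredient needed to escape the subclass generated by $\tfour$ together with diagonal single-qubit gates alone.
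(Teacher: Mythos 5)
Your proposal is correct and follows essentially the same route as the paper: the paper's Figure~\ref{fig:t4_plus_s_generates_czz} is exactly your circuit $S_1\,\tfour\,S_1^{\dagger}\,\tfour$ (up to where the cleanup phase gates are placed), decomposed into the same three $C(Z,Z)$ gates, from which a single $C(Z,Z)$ is extracted with ancillas. The only cosmetic difference is the final step --- the paper conjugates $C(Z,Z)$ by $\theta_{X+Z}$ to get $\CNOT$ and cites the standard universality of $\{\CNOT, \theta_{X+Z}, \R_Z\}$, whereas you invoke Lemma~\ref{lem:cpq_plus_to_all}; both are valid.
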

\begin{proof}
It is well known that $\CNOT$, $\theta_{X+Z}$, and $\R_Z$ generate all Clifford circuits.  Therefore, it will be sufficient to show that $\tfour$ plus all single-qubit gates generate $\CNOT$.  Under the ancilla rule, it is clear by Figure~\ref{fig:t4_plus_s_generates_czz} that $\tfour$ and $\R_Z$ suffice to generate $C(Z,Z)$.  Conjugating one qubit of $C(Z,Z)$ by $\theta_{X+Z}$ yields a $C(Z,X) = \CNOT$ gate, completing the proof.
\end{proof}

\begin{figure}
\begin{align*}
\Qcircuit @C=1em @R=.7em @!R {
 & & \multigate{3}{\tfour} & \gate{\R_Z^\dag} & \multigate{3}{\tfour} & \qw & \qw &&&  \qw & \qw & \qw & \qw\\
 & & \ghost{\tfour} & \qw & \ghost{\tfour} & \gate{\R_Z} & \qw &&& \gate{Z} & \gate{Z} \qwx[2] & \qw & \qw \\
 & & \ghost{\tfour} & \qw & \ghost{\tfour} & \gate{\R_Z} & \qw & \raisebox{2.3em}{$=$} && \gate{Z} \qwx & \qw & \gate{Z} & \qw \\
 & & \ghost{\tfour} & \qw & \ghost{\tfour} & \gate{\R_Z} & \qw &&& \qw & \gate{Z} & \gate{Z} \qwx & \qw}
\end{align*}
\caption{Generating $C(Z,Z)$ with $\tfour$ and $\R_Z$.}
\label{fig:t4_plus_s_generates_czz}
\end{figure}
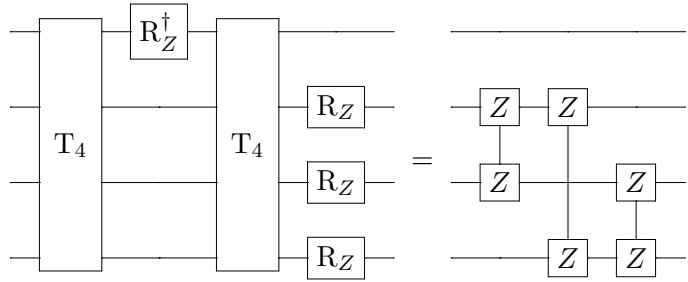


\section{Universal Construction}
\label{sec:universal_construction}

Suppose $G$ is an $n$-qubit Clifford gate. It turns out there is a single circuit $\mathfrak{C}(G)$, the \emph{universal construction}, which can help us extract useful generators (e.g., single-qubit gates, generalized CNOTs, etc.) from $G$. Specifically, the circuit $\mathfrak{C}(G)$ (shown in Figure~\ref{fig:universal_construction}) applies $G$ to qubits $2$ through $n+1$, swaps qubits $1$ and $2$, then applies $G^\dag$ to qubits $2$ through $n+1$.

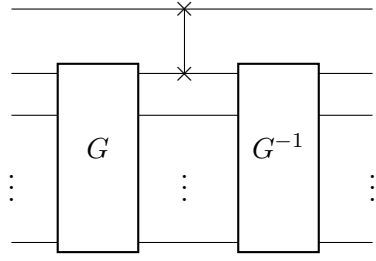
\begin{figure}
\begin{center}
\begin{tikzpicture}[>=latex]
\tikzstyle{gate}=[rectangle, thick, fill=white, minimum size=0.5cm, draw, align=center]
\tikzstyle{dots}=[rectangle, fill=white, minimum size=0.5cm, align=center]
\newcommand{\gatewidth}{.8cm}
\newcommand{\wiresep}{.3cm}
\matrix[row sep=.3cm,column sep=.5cm] {
\coordinate(start1); \node[minimum width=\wiresep]{}; & & \node(TOP)[inner sep=0pt]{$\times$}; & & \node[minimum width=\wiresep]{}; \coordinate(end1); \\ [.3cm]
\coordinate(start2); & \node(GT)[minimum width=\gatewidth]{}; & \node(BOT)[inner sep=0pt]{$\times$}; & \node(GIT)[minimum width=\gatewidth]{}; & \coordinate(end2); \\
\coordinate(start3); & & & \node(GINV)[minimum width=1cm]{};& \coordinate(end3); \\
 \node[dots, xshift=.5cm]{$\vdots$}; & & \node[dots]{$\vdots$};& & \node[dots, xshift=-.5cm]{$\vdots$};\\
\coordinate(start4); & \node(GB)[minimum width=\gatewidth]{}; & &\node(GIB)[minimum width=\gatewidth]{};  & \coordinate(end4); \\
};
\node[gate, fit={(GT.west) (GB.east)}] {$G$};
\node[gate, fit={(GIT.west) (GIB.east)}] {$G^{\dag}$};
\begin{pgfonlayer}{background}
\path[draw] 
(start1) edge (end1)
(start2) edge (end2)
(start3) edge (end3)
(start4) edge (end4)
(TOP.center) edge (BOT.center);
\end{pgfonlayer}
\end{tikzpicture}
\end{center}
\caption{Universal Construction $\mathfrak{C}(G)$}
\label{fig:universal_construction}
\end{figure}

The intuition is that since we apply $G$ and $G^{\dag}$ back-to-back on all but one of the same qubits, they ``mostly'' cancel and the result depends on the influence of the swapped qubit. In fact, we will show that the universal construction is equivalent to a circuit derived from the column of the tableau associated with that qubit. If that part of the tableau violates a particular invariant, then this circuit contains some gate \emph{also} violating that invariant, and in the next section we show how to extract it. For now, we focus on proving our claims about the universal construction.

Our main tool for analyzing $\mathfrak{C}(G)$ is a novel canonical form for Clifford circuits. For all $k \geq 1$, let $O_{2k-1}$ be a $(2k-1)$-qubit gate of the form
$$
\Tstar(O_{2k-1}) = 
\begin{pmatrix} 
I & I & I & \cdots & I \\
I & \alpha & \alpha^{*} & \cdots & \alpha^{*} \\
I & \alpha^{*} & \alpha & & \alpha^{*} \\
\vdots & \vdots & & \ddots & \vdots \\
I & \alpha^{*} & \alpha^{*} & \cdots & \alpha
\end{pmatrix}
$$
where $\alpha = (\begin{smallmatrix} 0 & 1 \\ 1 & 1 \end{smallmatrix}), I = (\begin{smallmatrix} 1 & 0 \\ 0 & 1 \end{smallmatrix}) \in \ring$. That is, $I$ in the first row and column, otherwise $\alpha$ on the diagonal and $\alpha^{*}$ off-diagonal. We note that the three-qubit gate $O_{3}$ generates $\class{\tfour, \Gamma, \P}$ (see discussion in Appendix~\ref{app:gta} for a similar gate up to single-qubit gates).

\newcommand{\nogate}[1]{*+<1em,.9em>{#1} \qw}
\newcommand{\nogatenowire}[1]{*+<1em,0em>{#1}}
\begin{figure}
    \centering
    \mbox{
    \Qcircuit @C=1em @R=1em {
    & \gate{\G(r_1)} & \multigate{3}{O_{2k-1}} & \gate{r_k \vphantom{r_k^{*}}} \qwx[4] & \nogate{\cdots} & \gate{r_{n} \vphantom{r_n^{*}}} \qwx[6] & \gate{P} & \qw \\
    & \gate{\G(r_2)} & \ghost{O_{2k-1}} & \qw & \qw & \qw & \multigate{5}{\;\;\;D\;\;\;} & \qw \\
    & \nogatenowire{\vdots} & & & \\
    & \gate{\G(r_{2k-1})} & \ghost{O_{2k-1}} & \qw & \qw & \qw & \ghost{\;\;\;D\;\;\;} & \qw \\
    & \qw & \qw & \gate{r_k^{*}} & \qw & \qw & \ghost{\;\;\;D\;\;\;} & \qw \\
    & &  &  & \nogatenowire{\ddots} &  &  \\
    & \qw & \qw & \qw & \qw & \gate{r_{n}^{*}} & \ghost{\;\;\;D\;\;\;} & \qw  \\
    }
    }
    \caption{A diagram of the decomposition in Lemma~\ref{lem:decomp}. For convenience, $S = \{1, \ldots, 2k-1\}$ and $i = 1$, so $O_{2k-1}$ is on the first $2k-1$ qubits and no swap is necessary.}
    \label{fig:decomposition}
\end{figure}

\begin{lemma}
    \label{lem:decomp}
    Let $G$ be an arbitrary $n$-qubit Clifford gate and suppose there are $2k-1$ invertible elements in the first column of $G$. Then $G$ can be decomposed into the following sequence of gates (see Figure~\ref{fig:decomposition}):
    \begin{enumerate}
        \item single-qubit gates on a subset $S$ of the qubits,
        \item an $O_{2k-1}$ gate on the same subset $S$,
        \item a generalized CNOT from a fixed $i \in S$ to each qubit outside $S$,
        \item a $\SWAP$ on qubit $1$ and qubit $i$ (or no swap if $i = 1$),
        \item a Pauli $P$ on qubit $1$,
        \item an $(n-1)$-qubit Clifford gate $D$ on qubits $2, \ldots, n$. 
    \end{enumerate}
\end{lemma}
\begin{proof}
    Our goal is to show that $G$ decomposes into these layers of gates:
    $$
    G = (P \otimes D) \circ \mathrm{SWAP}(1,i) \circ \left( \prod \CNOT(r_j) \right) \circ O_{2k-1} \circ \left( \bigotimes \G(r_j) \right).
    $$
    If we look at the matrix part only and isolate $P \otimes D$, we have
    $$
    \Tstar(\mathrm{SWAP}(1,i)) \Tstar \left( \prod \CNOT(r_j) \right)  \Tstar(O_{2k-1}^{\dag})  \Tstar\left( \bigotimes \G(r_j)^{\dag} \right)  \Tstar(G) = \Tstar(P \otimes D).
    $$
    To prove the decomposition, we start with the first column of $\Tstar(G)$ and show that this sequence of gates/matrices simplifies it. This process looks a bit like one step of Gaussian elimination: we pick a pivot row $i$, perform operations (i.e., gates) to zero out all other entries of the column, and swap the pivot row with the first row. Let $(r_1, \ldots, r_n)$ be the first column of $\Tstar(G)$. Take $S = \{ j \in \{ 1, \ldots, n \} : \text{$r_j$ is invertible in $\ring$} \}$ to be the indices of the invertible elements, and let the \emph{pivot} $i \in S$ be an arbitrary index (e.g., take the minimum element of $S$ as the canonical choice). We note that $S$ must have an odd number of elements because the tableau is symplectic. 
    
    For each invertible $r_j \in \ring$, there exists a gate $\G(r_j)$ (unique up to Paulis/phase) which maps $I$ to $r_j$. Thus, when we apply $\G(r_j)^{\dag}$ to qubit $j$, it maps $r_j$  to $I$. That is, the layer of single-qubit gates multiplying the rows such that the invertible elements of the column are sent to $I$. Next, we apply $O_{2k-1}$ to the invertible qubits $S$ to further simplify the column. Notice that $O_{2k-1}$ is not completely symmetric---the first qubit is special, but all the others are interchangeable. Let the pivot row, $i$, be the special qubit when we apply $O_{2k-1}$. We leave it as an exercise to check that $O_{2k-1}$ maps $(I,0,\ldots,0)$ to $(I,I,\ldots,I)$, and therefore $O_{2k-1}^{\dag}$ will simplify the column to $(I,0,\ldots,0)$ on indices $S$. 
    
    Next, for each $r_j \notin S$, there exists a generalized CNOT gate $\CNOT(r_j)$ that maps $(I, r_j)$ to $(I, 0)$. Since the $i$th entry of the column is now $I$, and the $j$th entry is $r_j$, applying $\CNOT(r_j)$ zeros out another entry of the column. After all generalized CNOTs have been applied, the column vector is all $0$'s except for an $I$ in the pivot row. We apply a SWAP on qubit $1$ and $i$ to move this pivot row to the top, so the column vector must be $(I,0,\ldots,0)$. By the symplectic conditions, one can check that the first row must also be $(I,0,\ldots,0)$, and hence the tableau is of the form
    $$
    \left(\begin{array}{c|ccc} 
    I      & 0 & \cdots & 0 \\ \hline
    0      &   &        & \\
    \vdots &   &   A    &  \\
    0      &   &        & 
    \end{array}\right)
    $$
    In other words, we are left with a Pauli on qubit $1$ (since there may be sign bits) and a Clifford gate on qubits $2$ through $n$. We choose $P$ and $D$ to be these components of the tableau, and that completes the decomposition. 
\end{proof}

\begin{corollary}
\label{cor:universal_construction_form}
    Let $G$ be a Clifford gate on $n$ qubits where the first column of the tableau is $(a_2,\ldots,a_{2k},b_1,\ldots,b_\ell)$ for invertible $a_2, \ldots, a_{2k} \in \ring$ and noninvertible $b_1, \ldots, b_\ell \in \ring$. Then $\mathfrak{C}(G)$ is as shown in Figure~\ref{fig:universal_decomposition}: a $\tgate_{2k}$ on qubits $1$ through $2k$, conjugated by single-qubit gates (except on qubit $1$) derived from $a_2, \ldots, a_{2k}$, an array of generalized CNOT gates derived from $b_1, \ldots, b_{\ell}$ on either side between qubit $1$ and qubits $2k+1$ through $n+1$, and conjugated by a Pauli $P$ on qubit $1$.  
\end{corollary}
\begin{proof}
    Apply the decomposition (Lemma~\ref{lem:decomp}) to $G$. This decomposition ends with an $(n-1)$-qubit gate $D$; in the universal construction, $D$ appears innermost in the circuit, next to the SWAP, acting on qubits $3$ through $n+1$. Since $D$ and the SWAP act on disjoint qubits, they commute, so $D$ cancels with $D^{\dag}$ on the other side (from the inverted decomposition for $G^{\dag}$). 
    
    After cancelling $D$ with $D^{\dag}$, the next innermost gates are the Pauli $P$ and generalized CNOT gates acting on qubit $2$. Let us push these gates through the SWAP (from both sides) so that they all act on qubit $1$ instead of $2$, i.e., as they appear in Figure~\ref{fig:universal_decomposition}. At this point the SWAP is flanked by $O_{2k-1}$ and $O_{2k-1}^{\dag}$. It is a straightforward calculation to check that $\mathfrak{C}(O_{2k-1}) = \tgate_{2k}$, so we replace SWAP, $O_{2k-1}$, and $O_{2k-1}^{\dag}$ with $\tgate_{2k}$. Finally, the decomposition has single-qubit gates $\G(a_2), \ldots, \G(a_{2k})$, which are matched by their inverses $\G(a_2)^{\dag}, \ldots, \G(a_{2k})^{\dag}$ in $G^{\dag}$, i.e., $\tgate_{2k}$ is conjugated by the appropriate single-qubit gates on qubits $2$ through $2k$.  
\end{proof}
\newcommand{\dotshift}{3pt}
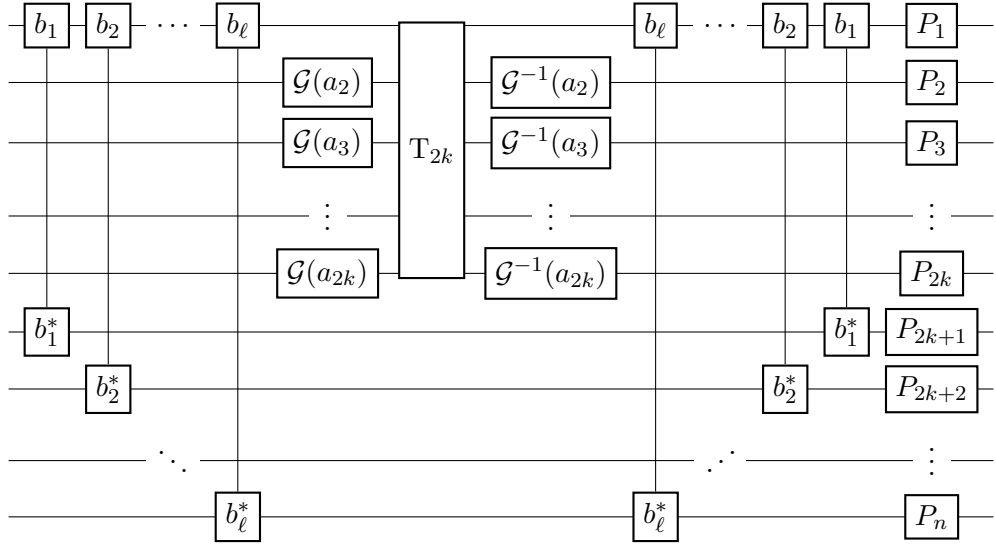
\begin{figure}
\begin{center}
\begin{tikzpicture}[>=latex]
\tikzstyle{gate}=[rectangle, thick, fill=white, minimum size=0.5cm, draw, align=center]
\tikzstyle{dots}=[rectangle, fill=white, minimum size=0.5cm, align=center]
\matrix[row sep=.1cm,column sep=.2cm] {
\coordinate(start1); & \node[gate]{$P$}; &  \node(LTB1)[gate]{$b_1$}; & \node(LTB2)[gate]{$b_2$}; & \node [dots] {$\ldots$}; & \node (LTBL) [gate]{$b_\ell$}; & &   & & \node (RTBL) [gate]{$b_\ell$};  & \node [dots] {$\ldots$}; & \node (RTB2) [gate] {$b_2$}; & \node (RTB1) [gate]{$b_1$}; & \node[gate]{$P$};& \coordinate(end1); \\
\coordinate(start2); & & & & & & \node[gate]{$\G(a_2)$}; &    & \node[gate]{$\G(a_2)^{\dag}$}; & & & & & & \coordinate(end2); \\
\coordinate(start3); & & & & & & \node[gate]{$\G(a_3)$}; &  \node(T)[minimum width=1cm]{};   & \node[gate]{$\G(a_3)^{\dag}$}; & & & & &  &\coordinate(end3); \\[-.1cm]
\coordinate(start4); & & & & & & \node[dots, yshift=\dotshift]{$\vdots$}; &     & \node[dots, yshift=\dotshift]{$\vdots$};  & & & & & & \coordinate(end4); \\
\coordinate(start5); & & & & & & \node[gate]{$\G(a_{2k})$}; &    & \node[gate]{$\G(a_{2k})^\dag$};  & & & & & &\coordinate(end5); \\
\coordinate(start6); & & \node(LBB1)[gate]{$b_1^*$};  & & & & &    & & & & & \node(RBB1)[gate]{$b_1^*$}; & &\coordinate(end6); \\
\coordinate(start7); & & &  \node(LBB2)[gate]{$b_2^*$}; & & & &   & & & & \node(RBB2)[gate]{$b_2^*$}; & & &\coordinate(end7); \\
\coordinate(start8); & & & & \node[dots, yshift=\dotshift]{$\ddots$}; & & &   & & & \node[dots, yshift=\dotshift]{$\iddots$}; & & & & \coordinate(end8); \\
\coordinate(start9); & & & & & \node(LBBL)[gate]{$b_\ell^*$}; & &    & & \node(RBBL)[gate]{$b_\ell^*$}; & & & & &\coordinate(end9); \\
};
\begin{pgfonlayer}{background}
\path[draw] 
(start1) edge (end1)
(start2) edge (end2)
(start3) edge (end3)
(start4) edge (end4)
(start5) edge (end5)
(start6) edge (end6)
(start7) edge (end7)
(start8) edge (end8)
(start9) edge (end9)
(LTB1) edge (LBB1)
(LTB2) edge (LBB2)
(LTBL) edge (LBBL)
(RTBL) edge (RBBL)
(RTB2) edge (RBB2)
(RTB1) edge (RBB1);
\node[gate, minimum height=3.7cm, yshift=-.1cm] at (T){$\operatorname{T}_{2k}$};
\end{pgfonlayer}
\end{tikzpicture}
\end{center}
\caption{Decomposition of $\mathfrak{C}(G)$.}
\label{fig:universal_decomposition}
\end{figure}

We are finally ready to prove the main theorem of this section.

\begin{theorem}
\label{thm:gate_extraction}
Let $G$ be a Clifford gate on $n$ qubits.  Furthermore, let  $(a_2, a_3, \ldots, a_{2k}, b_1, \ldots, b_\ell) \in \ring^{n}$ be some column of $\Tstar(G)$, where each $a_i$ is invertible and each $b_i$ is noninvertible. Then $G$ generates a gate $\G_i$ such that $\Tstar(\G_i) = a_i$ for each $i \in \{2, \ldots, 2k\}$, generates $\CNOT(b_j)$ for all $j \in \{1, \ldots, \ell\}$, and generates a $\operatorname{T}_{2k}$ gate.
\end{theorem}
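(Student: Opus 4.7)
The plan is to apply the universal construction $\mathfrak{C}(G)$ and invoke its explicit decomposition via Theorems~\ref{thm:Ivv} and~\ref{thm:nicecircuit}, then extract each named elementary gate by the ancilla rule with carefully chosen stabilizer states. Since SWAP gates are always available and $G$ generates $G^{-1}$, I would first permute qubits (and pass to $G^{-1}$ or another related gate if needed) so that the row $v$ of $\Tstar(G)$ under consideration corresponds to the first column of the tableau of our constructed circuit. Theorem~\ref{thm:Ivv} then gives $\Tstar(\mathfrak{C}(G)) = I_{n+1} + \binom{1}{v}(1, v^*)$, and Theorem~\ref{thm:nicecircuit} expresses $\mathfrak{C}(G)$ as the circuit of Figure~\ref{fig:universal_decomposition}: a $\operatorname{T}_{2k}$ gate on the first $2k$ qubits, conjugated by single-qubit $\G(a_i, i)$'s on qubits $i \in \{2,\ldots,2k\}$ and by generalized $\CNOT(b_j, 1, 2k+j)$ gates for $j \in \{1,\ldots,\ell\}$, followed by a final layer of Paulis.

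With this decomposition in hand, the strategy for each extraction is to choose ancilla states that trivialize every other part of $\mathfrak{C}(G)$. To extract $\CNOT(b_j)$, I would place $+1$-eigenstate ancillas of the appropriate Pauli on qubits $2k+j'$ for $j' \neq j$ (so those generalized $\CNOT$s act as identity) and use entangled Bell-type ancillas on the qubits $\{2, \ldots, 2k\}$, generalizing the construction in the proof of Lemma~\ref{lem:t4_generates_paulis}, to make $\operatorname{T}_{2k}$ act as identity (or as an absorbable Pauli) on qubit $1$; Lemma~\ref{lem:fuck_paulis} then handles the final Pauli layer, leaving $\CNOT(b_j, 1, 2k+j)$. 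To extract $\operatorname{T}_{2k}$, I would place $+1$-eigenstate ancillas on qubits $2k+1,\ldots,n+1$ so that the generalized $\CNOT$s all vanish; what remains on the first $2k$ qubits is $\operatorname{T}_{2k}$ conjugated by single-qubit $\G(a_i,i)$'s (plus Paulis), and these conjugations can be stripped externally by composing with the $\G(a_i,i)^{-1}$ gates obtained from the $\G(a_i)$ extraction.

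The main obstacle will be extracting $\G(a_i)$ itself, because $\G(a_i,i)$ and $\G(a_i,i)^{-1}$ flank $\operatorname{T}_{2k}$ in a conjugation pattern. A direct tableau-matrix calculation shows that any ancilla choice which forces $\operatorname{T}_{2k}$ to act as a single-qubit Pauli $P$ on qubit $i$ yields the residual single-qubit tableau $a_i \cdot I \cdot a_i^{-1} = I$, so only a Pauli survives. To get around this I would use entangled ancillas that cause $\operatorname{T}_{2k}$ to act as an effective \emph{permutation} of qubits (as $T_4$ does in the proof of Lemma~\ref{lem:t4_generates_paulis}, where together with a Bell-state ancilla it effectively swaps two qubits), routing the $\G(a_i)$ factor from qubit $i$ onto a different qubit that is not followed by its inverse, thereby breaking the conjugation pattern so that a gate with tableau matrix $a_i$ survives on the target qubit. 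Lemma~\ref{lem:fuck_paulis} is invoked throughout to absorb the ambient Pauli factors without losing track of the underlying single-qubit tableau matrix.
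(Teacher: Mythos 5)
Your overall strategy is the paper's: build $\mathfrak{C}(G)$, decompose it via Theorems~\ref{thm:Ivv} and~\ref{thm:nicecircuit}, and peel off the elementary gates with carefully chosen ancillas. Your treatment of the $\G(a_i)$ extraction (using the residual $\operatorname{T}_2 = \operatorname{SWAP}$ to route $\G(a_i)$ onto a qubit not followed by its inverse) and of the $\operatorname{T}_{2k}$ extraction matches the paper. However, your $\CNOT(b_j)$ step has a genuine gap. In the decomposition of Figure~\ref{fig:universal_decomposition} the gate $\CNOT(b_j,1,2k+j)$ appears \emph{twice}, once on each side of the conjugated $\operatorname{T}_{2k}$ block, because the whole circuit is a conjugation. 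If you choose ancillas so that the middle block acts as the identity or as an ``absorbable Pauli'' $P$ on qubit $1$ (with the ancillas restored), then what survives on qubits $1$ and $2k+j$ is $\CNOT(b_j)\circ P_1\circ\CNOT(b_j)$; since generalized CNOT gates are involutions that normalize the Pauli group, this product is itself just a Pauli string. The generalized CNOT you are trying to isolate cancels against its mirror image, and you are left with nothing.

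The fix---which is what the paper does---is to \emph{not} trivialize the middle block on qubit $1$: reduce $\operatorname{T}_{2k}$ only down to $\operatorname{T}_2=\operatorname{SWAP}$ between qubit $1$ and some qubit $i\in\{2,\ldots,2k\}$, so that the state of qubit $1$ is moved out from between the two copies of $\CNOT(b_j)$ and they no longer cancel. One then cleans up the residual three-qubit circuit using the gates $P_i\circ\G^{-1}(a_i)$ and $P_1\circ\G^{-1}(a_i)$ extracted in the single-qubit stage (so the order of extraction matters: the single-qubit gates must come first, whereas your write-up treats the $\CNOT(b_j)$ extraction as independent of them), and finally removes qubit $1$ with the ancilla $\G^{-1}(a_i)\ket{b_j}$, leaving $\CNOT(b_j)$ on the remaining two qubits. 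With this correction the rest of your argument goes through.
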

\begin{proof}
From Corollary~\ref{cor:universal_construction_form}, the universal construction $\mathfrak{C}(G)$ can be decomposed as shown in Figure~\ref{fig:universal_decomposition}.  The proof will proceed in the following manner.  Starting with the decomposition of  $\mathfrak{C}(G)$, we show that it generates some elementary gate.  We then use that gate to further simplify the original decomposition of $\mathfrak{C}(G)$ until we have generated all the gates specified in the theorem.  

First notice that for each input $i \in \{2k+1, \ldots, n\}$, there exists a single-qubit Clifford state $\ket{b_i}$ that eliminates the generalized $\CNOT(b_i)$ gate (e.g., $\ket{0}$ on the control of a CNOT gate). Therefore, we can generate the following gate:
\begin{center}
\begin{tikzpicture}[>=latex]
\tikzstyle{gate}=[rectangle, thick, fill=white, minimum size=0.5cm, draw, align=center]
\tikzstyle{dots}=[rectangle, fill=white, minimum size=0.5cm, align=center]
\matrix[row sep=.1cm,column sep=.2cm] {
\coordinate(start1); & \node[gate]{$P$}; &   & \node[gate]{$P$}; & \coordinate(end1); \\
\coordinate(start2);  & \node[gate]{$\G(a_2)$}; &    & \node[gate]{$\G(a_2)^\dag$}; & \coordinate(end2); \\
\coordinate(start3);  & \node[gate]{$\G(a_3)$}; &  \node(T)[minimum width=1cm]{};   & \node[gate]{$\G(a_3)^{\dag}$}; & \coordinate(end3); \\[-.1cm]
\coordinate(start4);  & \node[dots, yshift=\dotshift]{$\vdots$}; &     & \node[dots, yshift=\dotshift]{$\vdots$};  & \coordinate(end4); \\
\coordinate(start5);  & \node[gate]{$\G(a_{2k})$}; &    & \node[gate]{$\G(a_{2k})^{\dag}$};  &\coordinate(end5); \\
};
\begin{pgfonlayer}{background}
\path[draw] 
(start1) edge (end1)
(start2) edge (end2)
(start3) edge (end3)
(start4) edge (end4)
(start5) edge (end5);
\node[gate, minimum height=3.7cm, yshift=-.1cm] at (T){$\operatorname{T}_{2k}$};
\end{pgfonlayer}
\end{tikzpicture}
\end{center}
Now let $\ket{\phi} = \frac{\ket{00} + \ket{11}}{\sqrt{2}}$ be the Bell state on two qubits. Notice that we can use $\ket{\phi}$ as an ancilla to remove two bits from $\tgate_{2\ell}$ (i.e., leaving a $\tgate_{2\ell-2}$). Therefore, the state $\G(a_i)^\dag \otimes \G(a_j)^\dag \ket{\phi}$ on bits $i$ and $j$ (for $i,j \neq 1$) is fixed by the circuit and simplifies the $\tgate_{2k}$ gate.  We can iterate this procedure on the circuit above until it has been reduced to just two qubits. In particular, the $\tgate_{2k}$ gate in the middle is now a $\tgate_2$, otherwise known as the $\SWAP$ gate. This remaining circuit is
\begin{center}
$\Qcircuit @C=1.8em @R=1em {
& \gate{P} & \qswap & \gate{P} & \qw  \\
& \gate{\G(a_i)} & \qswap \qwx & \gate{\G(a_i)^{\dag}} & \qw}$
\end{center}
Commute the first layer of gates through the swap and apply the swap rule.  This leaves a tensor product of single-qubit gates, i.e., $P \circ \G(a_i)$ and its inverse, each of which we can isolate using the ancilla rule.

Now let us repeat the procedure above starting from $\mathfrak{C}(G)$, but stop short of applying the ancilla rule to qubit  $2k+j$.  The result is the first circuit depicted below, which is then simplified by swapping the first two qubits, and adding gates $P \circ \G(a_i), \G(a_i)^{\dag} \circ P$:
$$
\vcenter{\Qcircuit @C=1em @R=1em {
& \gate{P} & \gate{b_j} & \qw & \qswap & \qw & \gate{b_j} & \gate{P} & \qw  \\
& \qw & \qw & \gate{\G(a_i)} & \qswap \qwx & \gate{\G(a_i)^{\dag}}  & \qw & \qw &  \qw \\
& \qw & \gate{b_j^*} \qwx[-2] & \qw & \qw & \qw & \gate{b_j^*} \qwx[-2] & \qw & \qw 
}} \;\;\; \rightarrow \;\;\;
\vcenter{\Qcircuit @C=1em @R=1em {
& \qw & \qw & \gate{\G(a_i)}  & \gate{b_j} & \gate{\G(a_i)^{\dag}} & \qw  \\
& \gate{P} & \gate{b_j} & \gate{P} & \qw & \qw & \qw \\
& \qw & \gate{b_j^*} \qwx[-1] & \qw & \gate{b_j^*} \qwx[-2]  & \qw & \qw 
}}$$

Notice that for the circuit on the right, we can apply the ancilla rule using the state $\G(a_i)^{\dag}\ket{b_j}$ on the topmost qubit, leaving $\CNOT(b_j)$ conjugated by $P \otimes I$.
Let us assume that $\CNOT(b_j)$ is a $C(Q,R)$ gate for some Paulis $Q$ and $R$.  If $Q = P$, then conjugation by $P \otimes I$ does nothing, and we obtain a $\CNOT(b_j)$ gate. Otherwise, the conjugation results in the gate $C(Q,R) \circ (I \otimes R)$ from which we can generate a Pauli $R$ using the ancilla rule and the state stabilized by $Q$ on the first qubit.  In either case, we eventually generate a $\CNOT(b_j)$ gate. 

Finally, to generate the $\tgate_{2k}$ we exploit the identity
$$\tgate_{2k} (P \otimes I^{\otimes 2k-1}) \tgate_{2k} = I \otimes P^{\otimes 2k-1},$$
which holds up to a global phase.  We have the following chain of consequences:

\begin{center}
\hspace{-20pt}
\begin{tikzpicture}[>=latex,scale=0.8, every node/.style={scale=0.8}]
\tikzstyle{gate}=[rectangle, thick, fill=white, minimum size=0.5cm, draw, align=center]
\tikzstyle{dots}=[rectangle, fill=white, minimum size=0.4cm, align=center]
\tikzstyle{phantom}=[minimum height=.65cm,minimum width=1cm]
\matrix[row sep=.1cm,column sep=.2cm] {
\coordinate(start11); & \node[gate]{$P$}; &  & \node[gate]{$P$}; & \coordinate(end11); & &  
\coordinate(start21); & & &  & & & \coordinate(end21); & &
\coordinate(start31); & & \coordinate(end31); \\
\coordinate(start12);  & \node[gate]{$\G(a_2)$}; & & \node[gate]{$\G(a_2)^\dag$}; & \coordinate(end12); & &
\coordinate(start22);  & \node[gate]{$\G(a_2)$}; &  \node[gate]{$P$}; & & \node[gate]{$P$}; & \node[gate]{$\G(a_2)^\dag$}; & \coordinate(end22); & & 
\coordinate(start32);  & & \coordinate(end32); \\
\coordinate(start13);  & \node[gate]{$\G(a_3)$}; &  \node(T1)[phantom]{};   & \node[gate]{$\G(a_3)^\dag$}; & \coordinate(end13); & \node{$\rightarrow$}; & 
\coordinate(start23);  & \node[gate]{$\G(a_3)$}; &  \node[gate]{$P$}; & \node(T2)[phantom]{};   & \node[gate]{$P$}; & \node[gate]{$\G(a_3)^\dag$}; & \coordinate(end23); & \node{$\rightarrow$}; &
\coordinate(start33);  & \node(T3)[phantom]{};   &\coordinate(end33); \\[-.1cm]
\coordinate(start14);  & \node[dots, yshift=\dotshift]{$\vdots$}; & & \node[dots, yshift=\dotshift]{$\vdots$};  & \coordinate(end14); & &
\coordinate(start24);  & \node[dots, yshift=\dotshift]{$\vdots$}; &  \node[dots, yshift=\dotshift]{$\vdots$}; & & \node[dots, yshift=\dotshift]{$\vdots$}; & \node[dots, yshift=\dotshift]{$\vdots$}; & \coordinate(end24); & &
\coordinate(start34);  &  & \coordinate(end34); \\
\coordinate(start15);  & \node[gate]{$\G(a_{2k})$}; & & \node[gate]{$\G(a_{2k})^\dag$}; & \coordinate(end15); & & 
\coordinate(start25);  & \node[gate]{$\G(a_{2k})$}; &  \node[gate]{$P$}; & & \node[gate]{$P$}; & \node[gate]{$\G(a_{2k})^\dag$}; & \coordinate(end25); & &
\coordinate(start35);  & &\coordinate(end35); \\
};
\begin{pgfonlayer}{background}
\path[draw] 
(start11) edge (end11)
(start12) edge (end12)
(start13) edge (end13)
(start14) edge (end14)
(start15) edge (end15)
(start21) edge (end21)
(start22) edge (end22)
(start23) edge (end23)
(start24) edge (end24)
(start25) edge (end25)
(start31) edge (end31)
(start32) edge (end32)
(start33) edge (end33)
(start34) edge (end34)
(start35) edge (end35);
\node[gate, minimum height=3.7cm, yshift=-.05cm] at (T1){$\operatorname{T}_{2k}$};
\node[gate, minimum height=3.7cm, yshift=-.05cm] at (T2){$\operatorname{T}_{2k}$};
\node[gate, minimum height=3.7cm, yshift=-.05cm] at (T3){$\operatorname{T}_{2k}$};
\end{pgfonlayer}
\end{tikzpicture}
\end{center}
where the last implication follows by applying the gates $P \circ \G(a_i)$ and $\G(a_i)^\dag \circ P$, which we generated previously.
\end{proof}


\section{Completing the Classification}
\label{sec:completing}
The final step in the classification is to demonstrate that the classes we have defined are in fact the only classes that exist. First, we give a simple consequence of Theorem~\ref{thm:gate_extraction} that will make it easier to talk about gate sets that violate some invariant.

\begin{lemma}
\label{lem:invariant_to_gate}
Given that a gate set $G$ violates some invariant $I$ (either, a subring invariant or a permutation invariant), there is either a single-qubit gate, a generalized CNOT gate, or a $\tfour$ gate in $\class{G}$ that also violates the invariant.
\end{lemma}
\begin{proof}
Let $g \in G$ be the gate violating the invariant. There are two cases:
\begin{description}
\item[$I$ is a subring invariant:] Since $g$ violates the invariant, there is some entry $\Tstar(g)_{ij}$ of the tableau outside the subring. Apply Theorem~\ref{thm:gate_extraction} to this column of $g$ and it will extract either a generalized CNOT (if $\Tstar(g)_{ij}$ is non-invertible) or single-qubit gate (if $\Tstar(g)_{ij}$ is invertible) having the same entry, and thus violating the invariant. 
\item[$I$ is a permutation invariant:] The gate $g$ could violate the invariant in one of two ways. The may be more than one invertible entry in some column of $g$, in which case applying Theorem~\ref{thm:gate_extraction} extracts a $\tgate_{2k}$ for $k \geq 2$, which generates a $\tfour$, and $\tfour$ violates all permutations invariants. Otherwise, there is some entry of $\Tstar(g)_{ij}$ outside the allowed subring of elements, in which case we generate a single-qubit gate or generalized CNOT as in the other case.
\end{description}
\end{proof}

\begin{theorem}
\label{thm:final}
Let $G$ be any set of Clifford gates. If $\mathcal{S}$ is the smallest (with respect to lattice order) class of our classification containing $G$, then $\class{G} = \mathcal{S}$. 
\end{theorem}
\begin{proof}
There is a very general strategy for proving that the class $\class{G}$ is equal to the smallest $\mathcal{S}$ containing it.  Since $\class{G} \not \subseteq \mathcal{S}'$ for all $\mathcal{S'} \subsetneq \mathcal{S}$, there exists a gate $g \in G$ such that $g \notin \mathcal{S}'$.  That is, for each such class $\mathcal{S}'$, there is an invariant (described in Section~\ref{sec:invariants}) that $\class{G}$ fails to preserve.  Using the universal construction, we can generate a simple gate in $\class{G}$ which also fails to satisfy that invariant by Lemma~\ref{lem:invariant_to_gate}. For the purposes of this proof let us call this set of gates (i.e., the single-qubit gates, generalized CNOT gates, and $\tfour$ gates) the \emph{elementary gates}. Finally, we will show that these elementary gates generate $\mathcal{S}$ itself (sometimes requiring the identities from Section~\ref{sec:circuit_identities}).

We now give a complete sequence of tests to identify the class $\class{G}$.  It will be simpler to address the degenerate and non-degenerate gate classes separately, so let us assume for now that $\class{G}$ is non-degenerate; we will tackle the degenerate classes at the end. The rest of the case analysis depends on $\mathcal S$, the smallest class containing $G$.  At the highest level, we separate these classes based on which of the $X$-, $Y$-, $Z$-preserving invariants hold for $\mathcal{S}$. 

Suppose first that $\mathcal S$ is $X$-, $Y$-, and $Z$-preserving. There is only one non-degenerate class with these properties, so $\mathcal{S} = \class{\tfour,\P}$. Since $\class{G}$ is non-degenerate, there is some generalized CNOT gate or $\tfour$ gate in $\class{G}$ by Lemma~\ref{lem:invariant_to_gate}.  However, all generalized CNOT gates fail to preserve at least one of the bases and $\class{G} \subseteq \mathcal S$ is $X$-, $Y$-, and $Z$-preserving.  Therefore, the non-degenerate gate must be a $\tfour$. Since $\tfour$ generates the Pauli group (Lemma~\ref{lem:t4_generates_paulis}), we have that $\class{G} \supseteq \mathcal S$, and so $\class{G} = \mathcal S$.

Suppose now that $\mathcal S$ is $P$- and $Q$-preserving but not $R$-preserving for distinct Pauli operations $P$, $Q$, and $R$. Again, there is only one class with these properties, so $\mathcal{S} = \class{C(P,Q), \P}$. Therefore, there exists some gate $g \in G$ which is not in $\mathcal{S}' = \class{\tfour, \P}$.  In other words, $g$ fails to be $R$-preserving.  Note now there is no single-qubit gate which is both $P$- and $Q$-preserving but not $R$-preserving.  Furthermore, $C(P,Q)$ is the only generalized CNOT gate with this property. Therefore, by Lemma~\ref{lem:invariant_to_gate}, $g$ generates a $C(P,Q)$ gate. Clearly, $C(P,Q)$ generates both Pauli $P$ and $Q$, and therefore the whole Pauli group. Hence, $\class{G} = \mathcal{S}$.

Suppose now that $\mathcal S$ is $P$-preserving but not $Q$- and $R$-preserving. This is the most involved case and will require several more subdivisions. First suppose $\mathcal S = \class{C(P,Q), \R_P, \P}$.  We will use that $G$ generates an elementary gate which does not preserve the $Q$-basis ($\mathcal S' = \class{C(P,Q), \P}$), a gate which does not preserve the $R$-basis ($\mathcal S' = \class{C(P,R), \P}$), and a gate that is not $P$-orthogonal ($\mathcal S' = \class{\tfour, \R_P, \P}$).  Notice that all $P$-preserving single-qubit gates are also $P$-orthogonal, and recall that $C(P,P)$ is $P$-orthogonal.  Therefore, by Lemma~\ref{lem:invariant_to_gate}, $G$ generates $C(P,Q)$ or $C(P,R)$, both of which fail to be $P$-orthogonal.  Assume without loss generality it is $C(P,Q)$.  Let us now consider the gates that fail to be $Q$-preserving: if $G$ generates a $C(P,R)$ or $C(P,P)$ gate, then $G$ also generates a $\R_P$ gate by Lemma~\ref{lem:coalesce_cnots}; otherwise, $G$ generates a non-$Q$-preserving single-qubit gate, which must be $\R_P$ up to multiplication by Pauli elements.  Since $C(P,Q)$ generates the Pauli group, we get that $G$ generates an $\R_P$ gate, and so $\class{G} = \class{C(P,Q), \R_P, \P} = \mathcal S$. 

Suppose now that $\mathcal S$ is the $P$-orthogonal class $\class{\tfour, \R_P, \P}$.  First, $G$ generates a gate which is not $P$-degenerate ($\mathcal S' = \class{C(P,P), \P, \R_P}$). Since the only $P$-orthogonal single-qubit gates are also $P$-degenerate, we have that $G$ generates a $\tfour$ by Lemma~\ref{lem:invariant_to_gate} ($C(P,P)$ is $P$-degenerate and neither $C(P,Q)$ nor $C(P,R)$ are $P$-orthogonal).  Next, $G$ generates a gate which is not both $R$- and $Q$-preserving ($\mathcal S' = \class{\tfour, \P}$).   This is either a $C(P,P)$ gate or an $\R_P$ gate up to Pauli operations. Since $\tfour$ generates the Pauli group (Lemma~\ref{lem:t4_generates_paulis}) and $\tfour$ plus $C(P,P)$ generate an $\R_Z$ gate (Lemma~\ref{lem:tfour_cpp_to_rp}), then in either case $\R_P \in \class{G}$, implying that $\class{G} = \class{\tfour, \R_P, \P}$.

Suppose now that $\mathcal S$ is $P$-degenerate. Let us handle all five possible $P$-degenerate classes together.  First, since $\class{G}$ is non-degenerate, it must contain a $C(P,P)$ gate.  There are five $P$-degenerate classes correspond to the five $P$-preserving single-qubit classes containing $P$.  Unlike previous cases, such a diversity of classes exists because $C(P,P)$ does not suffice to generate the Pauli group on its own. We now use the normal form for $P$-degenerate gates given in Lemma~\ref{lem:semidegenerate}; that is, each gate can be expressed as a layer of single-qubit gates, a layer of $C(P,P)$ gates, and a permutation.  We can extract all the single-qubit gates (let's call them $G_1$) by canceling the $C(P,P)$ gates (with more $C(P,P)$ gates), canceling the SWAP gates (with more SWAP gates), and using eigenstate ancillas. We have that $\class{G} \subseteq \class{C(P,P), G_1}$ since every gate in $G$ is composed of such gates, and $\class{G} \supseteq \class{C(P,P), G_1}$ since we have shown how to generate all such gates.  Therefore, $\class{G} = \class{C(P,P), G_1}$.

Suppose now that $\mathcal S$ is the egalitarian class $\class{\tfour, \Gamma, \P}$, which is neither $P$-,$Q$-, nor $R$-preserving. Since $\tfour$ is the only egalitarian non-degenerate elementary gate, it can be generated in $\class{G}$. Furthermore, it must contain a single-qubit gate that is egalitarian, but not $P$-,$Q$-, or $R$-preserving ($\mathcal S' = \class{\tfour,\P}$).  The only such single-qubit gates are $\Gamma$ and $\Gamma^\dag$ up to Pauli operations. Therefore $\class{G} = \class{\tfour, \Gamma, \P}$ since $\tfour$ generates the Pauli group (Lemma~\ref{lem:t4_generates_paulis}).

Finally, suppose that $\mathcal{S}$ is the class of all Clifford gates, violating all invariants.  Let us organize the proof by which non-degenerate gates we generate from Lemma~\ref{lem:invariant_to_gate}. The first case is that we get no generalized CNOT gates, only a $\tfour$ gate. The immediate subclasses of $\mathcal{S}$ satisfy the $X$-, $Y$-, $Z$-preserving and egalitarian invariants, so $G$ generates a gate violating each of these invariants. Since this cannot be $\tfour$ (it satisfies all invariants), and we are assuming Lemma~\ref{lem:invariant_to_gate} gives us no generalized CNOT gates, these must be single-qubit gates. Clearly these single-qubit gates generate \emph{all} single-qubit gates, so by Lemma~\ref{lem:t4_plus_single_qubits_ftw}, $\class{G} = \ALL$.  

Another possibility is that Lemma~\ref{lem:invariant_to_gate} produces one or more generalized CNOT gates. Our main tool is Lemma~\ref{lem:cpq_plus_to_all}---if we have $C(P,Q)$, a non-$P$-preserving single-qubit gate, and a non-$Q$-preserving single-qubit gate (note: $P$ and $Q$ may be the same in this lemma!), then $\class{G} = \ALL$. In some cases, the generalized CNOT gates themselves can provide the single-qubit gates, so any collection of generalized CNOTs containing one of the following subsets generates $\ALL$.
\begin{itemize}
    \item $\{C(P,P),C(Q,Q)\}$: These gates generate $\theta_{P+Q}$ by Lemma~\ref{lem:coalesce_cnots}, which preserves neither the $P$ nor $Q$ basis. 
    \item $\{C(P,P),C(Q,R)\}$: These gates generate $\Gamma$ by Lemma~\ref{lem:coalesce_cnots}, which does not preserve any basis. 
    \item $\{C(P,Q),C(P,R),C(Q,R)\}$: These gates generate $\R_P$, $\R_Q$, and $\R_R$ by Lemma~\ref{lem:coalesce_cnots}, which together do not preserve any basis.
\end{itemize}
In fact, the only exceptions are collections of generalized CNOT gates such that all gates are $P$-preserving (for at least one Pauli basis $P$), so let us suppose Lemma~\ref{lem:invariant_to_gate} gives us only $P$-preserving generalized CNOT gates and a non-$P$-preserving single-qubit gate. We are done if one of the generalized CNOT gates is $C(P,P)$ by Lemma~\ref{lem:cpq_plus_to_all}---otherwise, we have one of the following sets:
\begin{itemize}
    \item $\{C(P,Q)\}$:  $G$ generates a single-qubit gate which is not $P$-preserving and a single-qubit gate which is not $Q$-preserving by Lemma~\ref{lem:invariant_to_gate}. 
    \item $\{C(P,Q),C(P,R)\}$: $G$ generates a single-qubit gate that is not $P$-preserving by Lemma~\ref{lem:invariant_to_gate}. We note that any single-qubit gate which fails to be $P$-preserving must also fail to be either $Q$- or $R$-preserving. 
\end{itemize}
This finishes the $\mathcal{S} = \ALL$ case analysis. 

Let us now return to the degenerate classes where the argument will be similar to the $P$-degenerate gate classes.  We will associate every degenerate class with a subgroup of the single-qubit Clifford gates.  Explicitly, we decompose every degenerate gate into a circuit of single-qubit gates and SWAP gates by Lemma~\ref{lem:degenerate}, and then extract all the single-qubit gates by canceling the SWAP gates (with more SWAP gates), and using eigenstate ancillas.  Therefore, $\class{G}$ is associated with a class of single-qubit gates, for which the classification is well known (see Figure~\ref{fig:degeneratelattice}).
\end{proof}

\begin{corollary}
Given any set of gates $G$, there is a subset $S \subseteq G$ of at most three gates such that $\class{S} = \class{G}$.
\end{corollary}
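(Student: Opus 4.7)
The plan is to reduce the claim to Theorem~\ref{thm:final}: if $S \subseteq G$ is not contained in any maximal proper subclass of $\CC := \class{G}$, then $\class{S} = \CC$. Hence it suffices to construct such an $S$ of size at most three.

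First I would enumerate, for each class $\CC$ in the lattice, its maximal proper subclasses $\CC_1,\ldots,\CC_k$. The hypothesis $\class{G}=\CC\not\subseteq\CC_i$ yields some $g_i\in G\setminus\CC_i$ for each $i$, and the naive choice $S=\{g_1,\ldots,g_k\}$ works but has size up to $k$. Direct inspection of Figure~\ref{fig:nondegeneratelattice} shows that every non-degenerate class other than $\ALL$ has at most three maximal proper subclasses, so the trivial construction already yields $|S|\le 3$ in those cases.

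For $\CC = \ALL$, whose four maximal subclasses correspond to the $X$-, $Y$-, $Z$-preserving, and egalitarian invariants, I would exploit the fact that these invariants are not independent. The lattice inclusion $\class{\tfour,\P}\subseteq\class{\tfour,\P,\Gamma}$ means that every simultaneously $X$-, $Y$-, $Z$-preserving gate is egalitarian; combining this with the $\Gamma$-conjugation symmetry (which cycles the three basis-preserving invariants on egalitarian gates) implies that any egalitarian gate either preserves all three bases or none. Contrapositively, any non-egalitarian $g_1 \in G$, which must exist since $\class{G}=\ALL$, already fails at least one basis-preserving invariant, covering two of the four maximal subclasses at once. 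At most two further gates from $G$ then handle the remaining basis-preserving invariants, producing $|S|\le 3$.

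For the degenerate classes of Figure~\ref{fig:degeneratelattice}, the claim reduces to a group-theoretic statement about subgroups of the single-qubit Clifford group: every generating set of such a subgroup contains a generating subset of size at most three. The only degenerate classes with more than three maximal subclasses are $\top$ (eight), $\class{\P,\Gamma}$ (five), and the four $\theta_{\pm\pm\pm}$ classes (four each), all of which have concrete descriptions as symmetry subgroups of the cube. Each admits a short covering argument using the $X/Y/Z$ cycling symmetry to collapse maximal subclasses into a few orbits. The main obstacle is $\top$: with eight maximal subclasses one must verify that no pathological $G$ forces more than three representatives. The key observation is that a single vertex-rotation like $\Gamma$ lies outside all three $\class{\P,\R_P}$ face-stabilizers and three of the four vertex-stabilizer classes, covering six maximal subclasses simultaneously, so at most two further gates from $G$ suffice to cover the remaining two.
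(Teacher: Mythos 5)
Your reduction to Theorem~\ref{thm:final} via maximal proper subclasses is sound in principle, and it is a genuinely different route from the paper's (the paper takes an irredundant generating subset, notes that it yields a strictly ascending chain in the lattice, bounds the longest chain by four, and kills the length-four case by hand; it carries this out only for the degenerate classes). The problem is that your counts of maximal proper subclasses are wrong, because you read them off Figure~\ref{fig:nondegeneratelattice}, which omits the degenerate subclasses. The class $\ALL$ has a \emph{fifth} maximal proper subclass, namely $\top$: no class lies strictly between the single-qubit gates and $\ALL$ (by Lemmas~\ref{lem:t4_plus_single_qubits_ftw} and~\ref{lem:cpq_plus_to_all}), and $\top$ is contained in none of the four non-degenerate children since it contains gates violating each of the $X$-, $Y$-, $Z$-preserving and egalitarian invariants. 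Your covering argument for $\ALL$ never forces $S$ to contain a non-degenerate gate, so it genuinely fails: for $G = \{\tfour, \R_X, \R_Y, \R_Z\}$ it may return $S = \{\R_X, \R_Y\}$, which generates only $\top$. Likewise, each class $\class{C(P,P), \P, \R_P}$ has \emph{four} maximal proper subclasses, not three: the three $P$-degenerate children visible in Figure~\ref{fig:nondegeneratelattice} plus the degenerate class $\class{\P, \R_P}$, which is contained in none of them. So the claim that only $\ALL$ needs a nontrivial covering argument is false.

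Your sketch for $\top$ also has a gap: it hinges on $G$ containing a vertex rotation $\Gamma_{\pm\pm\pm}$, which is not guaranteed (a generating set of $\top$ may consist entirely of $\theta$ and $\R$ gates, e.g.\ three edge rotations corresponding to the transpositions $(12),(13),(14)$ of the four diagonals). None of these gaps is fatal to the strategy --- a non-egalitarian gate necessarily violates some basis-preserving invariant, an $\R$-type element already avoids seven of the eight maximal subgroups of $\top$, and escaping the three maximal subgroups of $\class{\P,\R_P}$ (a $D_4$) needs only two elements --- but each requires an explicit case analysis that is absent, and as written the proof asserts false counts and produces incorrect outputs on concrete inputs.
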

\begin{proof}
Suppose to the contrary that there exists a set of gates $G$ such that $\class{G} \neq \class{S}$ for any subset $S \subseteq G$ of size $\leq 3$. Without loss of generality, let $G$ be a minimal set of gates with this property. If there are subsets $A, B \subseteq G$ such that $\class{A} \subseteq \class{B}$ but $A \not \subseteq B$ then we could delete $A \backslash B$ from $G$ and it will generate the same class, but have strictly fewer subsets of size $3$. It follows that in a minimal $G$, all subsets of $G$ generate distinct classes. But then we can further shrink $G$ to any $4$-element subset, say $G = \{ g_1, g_2, g_3, g_4 \}$, since that is distinct from all size $\leq 3$ subsets. 

Now we think of $\class{\cdot}$ as a map embedding the power set lattice for $4$ elements into our class lattice. We rule this out by case analysis. The first case is when $\class{G}$ is degenerate. Notice that an ascending chain like 
$$
\varnothing \subseteq \{ g_1 \} \subseteq \{ g_1, g_2 \} \subseteq \{ g_1, g_2, g_3 \} \subseteq \{ g_1, g_2, g_3, g_4 \}
$$
must map to a similar ascending chain in the degenerate class lattice (Figure~\ref{fig:degeneratelattice}). All such ascending chains go through $\P + \Gamma$, $\P + \R_X$, $\P + \R_Y$, or $\P + \R_Z$, and end at $\top$, so the 4 distinct classes $\class{g_1, g_2, g_3}, \ldots, \class{g_2, g_3, g_4}$ must correspond to those 4 cases. Let's say $\class{g_1, g_2, g_3} = \class{\P + \Gamma}$. However, there are $6$ ascending chains through $\{ g_1, g_2, g_3 \}$, and only $3$ chains going through $\P + \Gamma$, so two subsets of $G$ map to identical classes, a contradiction. 

The more involved case is when $\class{G}$ is non-degenerate. One of the gates must generate a non-degenerate class (otherwise $\class{G}$ would be degenerate). Again, we consider ascending chains in the lattice. For instance, $\class{g_1}, \ldots, \class{g_4}$ cannot contain $C(P,Q)$ (for Paulis $P \neq Q$) because the chains above those classes are insufficiently long (i.e., at best two hops up to $\textsf{ALL}$---we need three). With the exception of $\textsf{ALL}$, no class containing $C(P,Q)$ is the join of two classes \emph{not} containing $C(P,Q)$. Therefore no proper subset of $G$ generates any $C(P,Q)$ gate.

Now imagine removing classes (except $\textsf{ALL}$) which contain $C(P,Q)$. With those classes gone, the ascending chains above the classes containing $\tfour$ are too short for any of them to be a generator, just like we argued for $C(P,Q)$. Similarly, these $\tfour$-containing classes (excluding $\textsf{ALL}$) are not the join of any classes \emph{not} containing $\tfour$. Once we remove \emph{those}, the only remaining non-degenerate classes contain $C(P,P)$ for some $P$, so suppose $C(P,P) \in \class{g_1}$. If $\class{g_2}$ is not also $P$-preserving, then we see from the lattice that $\class{g_1, g_2} = \textsf{ALL} = \class{g_1, g_2, g_3}$, a contradiction. Hence, all of the classes are $P$-preserving. But there are only $5$ $P$-preserving classes left in the lattice, and $7$ distinct classes generated by $g_1$ and proper subsets of $\{ g_2, g_3, g_4 \}$. Hence, $\class{G}$ cannot be non-degenerate either. 

We conclude that for any gate set $G$, there is a subset of at most $3$ gates that generates it. This is tight, since $C(X,Y)$, $C(Y,Z)$, and $C(Z,X)$ generate all Clifford gates, but any pair of them cannot generate the third. 
\end{proof}


\section{Open Problems}
\label{sec:open_problems}

Our classification of Clifford gates resolves an open problem of Aaronson et al.\ \cite{ags:2015}, but leaves their central question, the classification of arbitrary quantum gates, completely open. It is unclear whether there is another piece of the full quantum gate classification that can be peeled off. Other discrete quantum gate sets are known, but none are known to have the rich structure and entanglement of Clifford gates (aside from conjugated Clifford gates). So we ask: are there other interesting discrete gate sets, and can they be classified like Clifford gates?

Another source of open problems is the choice of ancilla rule. As discussed, we permit ancillas initialized to arbitrary quantum states. We have determined that the classification continues to hold under a stabilizer state ancilla model if the following conjecture holds:
\begin{conjecture}
\label{conj:single_qubit_stabilizers}
For any single-qubit Clifford gate $g$, there exists a stabilizer state $\ket{\psi}$ and circuit of $\operatorname{SWAP}$ gates $\pi$ such that $g \circ \pi \ket{\psi} = \ket{\psi}$. 
\end{conjecture}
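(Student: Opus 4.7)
The plan is to attack the conjecture by conjugacy class of $g$ in the single-qubit Clifford group, handle the easy classes directly, and outline a strategy for the hard ones. For $g = I$ or a Pauli $\pm P$, the conjecture holds trivially on $n=1$ qubit with $\pi = I$ and $\ket\psi$ the corresponding Pauli basis eigenstate. For every other single-qubit Clifford, the unitary matrix $g$ (in the paper's conventions) either has no $+1$ eigenvalue at all---the face rotations $\R_P^{\pm 1}$ have eigenvalues $e^{\pm i\pi/4}$ and the vertex rotations $\Gamma_{\pm\pm\pm}$ have eigenvalues $e^{\pm i\pi/3}$---or its $+1$ eigenvector lies along a non-stabilizer direction of the Bloch sphere (as for the edge rotations $\theta_{P\pm Q}$). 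So multi-qubit constructions are genuinely required for the three remaining conjugacy classes, and by Clifford conjugation it suffices to verify the conjecture for one representative of each, say $\R_Z$, $\theta_{X+Z}$, and $\Gamma$.

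For each representative, set $G := (g \otimes I^{\otimes(n-1)})\pi$ and seek a stabilizer state $\ket\psi$ in the $+1$ eigenspace of $G$. Equivalently, I would seek a maximal abelian subgroup $S$ of signed Paulis on $n$ qubits (of order $2^n$ and not containing $-I$) such that $G$ normalizes $S$ pointwise, and such that the induced global phase of $G$ on the corresponding $+1$-eigenvector of $S$ is trivial. A natural first attempt is the two-qubit SWAP ($n = 2$, $\pi = \SWAP$), but direct computation shows this fails for each representative: for $\R_Z$, $G$ has order $8$ and a unique $+1$-eigenvector $(\ket{01} + e^{i\pi/4}\ket{10})/\sqrt 2$, which is not a stabilizer state; the Hadamard $\theta_{X+Z}$ behaves analogously; and the naive $n = 3$ cyclic SWAP for $\Gamma$ gives $G^3 = \Gamma^{\otimes 3} = -I$, so $G$ has no $+1$ eigenvalue at all.

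The next step is to enlarge the construction. Two levers are available: (i) tensoring the candidate $\ket\psi$ with additional stabilizer factors (typically Bell pairs) on fresh qubits together with SWAPs that fix them, and (ii) choosing $\pi$ with richer cycle structure so that $G$ has order compatible with stabilizer phases. The aim is to drive the spectrum of $G$ into $\{1, -1, i, -i\}$ and force at least one $+1$-eigenvector into the stabilizer polytope. Concretely, I would pick $\pi$ so that $G^k = I$ for $k$ a small power of $2$, then decompose the $+1$-eigenspace of $G$ using the joint eigenbasis of a candidate $S$ and verify that at least one $S$-eigenvector lies in $\ker(G - I)$. Small-case computer search over choices of $(n, \pi, S)$ up to some modest size should either yield explicit constructions for each representative or strongly suggest what obstruction is present.

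The main obstacle is exactly the phase mismatch: stabilizer-state phases lie in $\{1, -1, i, -i\}$, whereas the Clifford phases introduced by a single $\R_P$ or $\Gamma$ factor are $e^{\pm i\pi/4}$ and $e^{\pm i\pi/3}$. Even after arranging that $G$ normalizes an abelian Pauli group $S$ setwise, the induced action of $G$ on the $+1$-eigenvector of $S$ may multiply by such a nontrivial root of unity, and no amount of SWAP dressing can absorb an irrational phase. The crux of the conjecture is to show that sufficient freedom in the choice of $n$, $\pi$, and $S$ always allows the residual phase to collapse to $+1$. I expect this to be resolved either by an explicit construction (discovered computationally) for $\Gamma$---the hardest conjugacy class---or by a representation-theoretic argument on the cyclic group $\langle G\rangle$ acting on $\P_n$, showing that the cohomological obstruction to lifting a set-wise fixed point to a pointwise fixed state can always be trivialized by enlarging $n$.
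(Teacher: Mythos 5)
The first thing to say is that this statement is a \emph{conjecture}: the paper does not prove it, and explicitly leaves the case of $\Gamma$ and its seven conjugates open, so there is no complete proof to compare yours against. Your proposal is likewise a research plan rather than a proof, and as written it establishes the statement for neither of the two genuinely nontrivial conjugacy classes (edge rotations and vertex rotations). The concrete gap relative to the paper's own partial progress is the edge-rotation case: the paper exhibits an explicit nine-qubit stabilizer state (due to Han-Hsuan Lin), specified by twelve commuting signed Pauli strings of which nine are independent, that is fixed by $\theta_{X+Z}$ on qubit $1$ composed with a cyclic permutation of qubits $2$ through $9$; your proposal only gestures at a computer search that might find such a state. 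Your general framework---seek a maximal abelian subgroup $S$ of signed Paulis normalized by $G = (g \otimes I^{\otimes(n-1)})\pi$, then worry about the induced phase on the joint $+1$ eigenvector---is exactly how the paper verifies Lin's state (one checks that conjugating each generator by $\theta_{X+Z}\circ\pi$ yields another element of the stabilizer group), but the framework alone proves nothing; the content is in producing $S$, and for $\Gamma$ nobody, including the authors, has done so.

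Two smaller points. First, you classify the face rotations $\R_P^{\pm 1}$ as requiring a multi-qubit construction because the matrix $\frac{I - iZ}{\sqrt{2}}$ has spectrum $\{e^{\pm i\pi/4}\}$. The paper instead counts $\R_Z$ among the trivial cases, stabilized by $\ket{0}$ on a single qubit with $\pi$ the identity: the offending $e^{-i\pi/4}$ is a global phase of the chosen $\mathrm{SU}(2)$ representative (the conventional phase gate $\mathrm{diag}(1,i)$ fixes $\ket{0}$ exactly), and the ancilla rule that motivates the conjecture only requires stabilization up to global phase. Reading the conjecture so literally that $\R_Z$ becomes a hard case misreads its intent and manufactures unnecessary work; the genuinely hard cases are exactly those whose rotation axis is not a stabilizer direction of the Bloch sphere, namely $\theta_{P\pm Q}$ and $\Gamma_{\pm\pm\pm}$. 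Second, your claim that $G^3 = \Gamma^{\otimes 3} = -I$ for the three-qubit cyclic attempt is wrong: $\Gamma^3 = -I$ as a $2\times 2$ matrix, but $\Gamma^{\otimes 3} \neq -I$; its spectrum is $\{-1, e^{i\pi/3}, e^{-i\pi/3}\}$, which still contains no $+1$, so your conclusion that the naive construction fails survives the slip.
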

This is sufficient to remove single-qubit gates in situations where we would otherwise use an eigenstate.

For many single-qubit gates, there is a trivial stabilizer eigenstate. For instance, $X$ stabilizes $\ket{+}$, $\R_Z$ stabilizes $\ket{0}$, and many other single-qubit Clifford gates are conjugate to one of these cases. Now consider the gate $\theta_{X+Z}$, whose eigenstates (unnormalized) $(1 \pm \sqrt{2}) \ket{0} + \ket{1}$ are \emph{not} stabilizer states. How then, given the gate $\theta_{X+Z} \otimes \theta_{X+Z}$, does one generate the gate $\theta_{X+Z}$ which acts only on one qubit?  Han-Hsuan Lin discovered the first explicit nine qubit stabilizer state for this task.

Let $\pi$ be a circuit that cyclically permutes qubits 2 through 9, and suppose $\theta_{X+Z}$ is applied to qubit 1. Let $\ket{\psi}$ be the state stabilized by the following commuting Pauli strings,
\begin{center}
\begin{tabular}{l l l l}
$X XZXZIIII$,  & $Z IXZXZIII$, &  $X IIXZXZII$, & $Z IIIXZXZI$, \\
$X IIIIXZXZ$, & $Z ZIIIIXZX$, & $X XZIIIIXZ$, & $Z ZXZIIIIX$, \\
$Y YIIIYIII$,  & $-Y IYIIIYII$, & $Y IIYIIIYI$, & $-Y IIIYIIIY$,
\end{tabular}
\end{center}
9 of which are independent.  One can check that conjugating each generator by $\theta_{X+Z} \circ \pi$ yields another element of the stabilizer group, so $(\theta_{X+Z} \circ \pi )\ket{\psi} = \ket{\psi}$. In other words, Conjecture~\ref{conj:single_qubit_stabilizers} holds for $\theta_{X+Z}$, and for all conjugates $\theta_{P+Q}$ by symmetry.

To verify the conjecture, it suffices to find a stabilizer state $\ket{\psi}$ and circuit $C$, constructed of SWAP gates and a single $\Gamma$ gate, such that $C \ket{\psi} = \ket{\psi}$.

\section{Acknowledgments}
Much on this work was completed at MIT, where both authors were graduate students. DG additionally acknowledges the support an NSF Graduate Research Fellowship under Grant No.~1122374. We would like to thank Scott Aaronson for his guidance throughout this project.  Also, thanks to Han-Hsuan Lin and Adam Bouland for useful discussions. Finally, we thank the many anonymous reviewers that have helped to improve the manuscript. In particular, we thank the reviewer that pointed out an error in an attempted proof of Conjecture~\ref{conj:single_qubit_stabilizers}.

\bibliographystyle{quantum}
\bibliography{bibliography}

\appendix

\section{Enumeration}
\label{app:enumeration}

\begin{theorem}
\label{thm:enumeration}
Let $\# \class{\cdot}_n$ denote the number of $n$-qubit gates in a class. Then 
\begin{align*}
\# \class{G}_n &= \abs{G}^n n! && \text{for $G$ a group of single-qubit gates}, \\
\# \class{C(Z,Z), G}_n &= \abs{G}^n 2^{n(n-1)/2} n! && \text{for $\class{Z}_1 \subseteq G \subseteq \class{\P, \R_Z}_1$ a group}, \\
\# \class{C(Z,X), \P}_n &= 4^n 2^{n(n-1)/2} \prod_{i=1}^{n} (2^i - 1), \\
\# \class{C(Z,X), \P, \R_Z}_n &= 8^n 2^{n(n-1)} \prod_{i=1}^{n} (2^i - 1), \\
\# \class{\tfour, \P}_n &= 4^{n} a(n), \\
\# \class{\tfour, \P, \R_Z}_n &= 8^{n} 2^{n(n-1)/2} a(n), \\
\# \class{\tfour, \P, \Gamma}_n &= 4^n 2^{n(n-1)/2} \prod_{i=1}^{n} (2^{i} - (-1)^{i}), \\
\# \class{\ALL}_n &= 4^n 2^{n^2} \prod_{i=1}^{n} (4^i - 1),
\end{align*}
where
$$
a(n) = \begin{cases}
2^{m^2} \prod_{i=1}^{m-1} (2^{2i} - 1), & \text{if $n = 2m$,} \\
2^{m^2} \prod_{i=1}^{m} (2^{2i} - 1), & \text{if $n = 2m+1$.} \\
\end{cases}
$$
\end{theorem}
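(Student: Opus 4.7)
The plan is to exploit the canonical forms from Section~\ref{sec:equivalence} to set up, for each class $\mathcal{C}$, a bijection between gates in $\mathcal{C}$ and a tuple of independent combinatorial data, whose cardinalities multiply to give the displayed count. The components fall into a few simple types: permutations ($n!$), subsets of qubit pairs for commuting $C(P,P)$ layers ($2^{n(n-1)/2}$), single-qubit tensors from a subgroup $G$ ($|G|^{n}$), Pauli phase bits ($4^{n}$), and matrices in a classical matrix group over $\ftwo$ determined by the relevant subring of $\ring$ together with the $^{*}$-involution.

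First I would handle the two degenerate families. By Lemma~\ref{lem:degenerate}, a gate in $\class{G}$ factors uniquely as a permutation of qubits (read off from the positions of the invertible tableau blocks) times a tensor of single-qubit gates in $G$ (read off from the diagonal blocks plus their phase bits), yielding $|G|^{n}\,n!$. Lemma~\ref{lem:semidegenerate} adds an arbitrary subset of the $\binom{n}{2}$ commuting $C(Z,Z)$ placements, contributing a further factor $2^{n(n-1)/2}$ to $\class{C(Z,Z),G}$.

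For the larger classes I would identify the matrix part of the tableau with a well-known matrix group over $\ftwo$. For $\class{C(Z,X),\P}$ the subring $\subring_{3}$ forces diagonal blocks, so the tableau encodes two binary matrices $A$ (from $X$-rows) and $B$ (from $Z$-rows); the condition $MM^{*}=I$ decouples as $AB^{T}=I$, hence $B$ is determined by $A\in GL_{n}(\ftwo)$, giving $2^{n(n-1)/2}\prod_{i=1}^{n}(2^{i}-1)$ choices. For $\class{\tfour,\P}$ the subring is $\subring_{1}=\{0,I\}$ with $^{*}$ trivial, so $MM^{*}=I$ is literally $MM^{T}=I$ over $\ftwo$, of which there are $a(n)$ solutions. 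For $\class{\tfour,\P,\Gamma}$ the egalitarian subring is isomorphic to $\mathbb{F}_{4}$ with $^{*}$ acting as the Frobenius automorphism $x\mapsto x^{2}$ (one checks $\gamma^{*}=\gamma+I=\gamma^{2}$), so $MM^{*}=I$ becomes Hermitian unitarity over $\mathbb{F}_{4}$ and the matrix part ranges over $U_{n}(\mathbb{F}_{4})$ of order $2^{n(n-1)/2}\prod_{i=1}^{n}(2^{i}-(-1)^{i})$. For $\ALL$, when each $2\times 2$ block of $\ring^{n\times n}$ is unfolded into binary coordinates on the Pauli basis $(X_{i},Z_{i})$, the unitarity condition translates into preservation of the $2n\times 2n$ symplectic form encoding the anticommutation of $X_{i}$ and $Z_{i}$, yielding $Sp_{2n}(\ftwo)$ of order $2^{n^{2}}\prod_{i=1}^{n}(4^{i}-1)$. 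In each of these four classes the Pauli subgroup sets the $2n$ phase bits independently, contributing $4^{n}$.

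The two mixed classes $\class{C(Z,X),\P,\R_{Z}}$ and $\class{\tfour,\P,\R_{Z}}$ are handled by combining the above with Lemma~\ref{lem:eliminate}, which factors each gate into a linear layer (a product of $\CNOT$s, resp.\ $\tfour$s) realizing an arbitrary $GL_{n}(\ftwo)$-element, resp.\ $MM^{T}=I$ solution, followed by a $Z$-degenerate factor with identity matrix part contributing $|\class{\P,\R_{Z}}|^{n}\cdot 2^{n(n-1)/2}=8^{n}\cdot 2^{n(n-1)/2}$ by Lemma~\ref{lem:semidegenerate}. The main obstacle is uniqueness together with the precise identification of the matrix groups: one must verify that the $^{*}$-involution on each subring really corresponds to transposition over $\subring_{1}$, Frobenius-transpose over the egalitarian subring, and the symplectic involution induced by Pauli commutation over the full ring $\ring$; and in the mixed cases one must ensure that the linear and $Z$-degenerate factors do not each absorb a copy of the symmetric group, which is arranged by insisting that Gaussian elimination produces a matrix with identity on the diagonal so the $Z$-degenerate factor carries no residual permutation.
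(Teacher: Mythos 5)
Your proposal is correct and takes essentially the same route as the paper: both rest on the canonical-form lemmas (Lemmas~\ref{lem:degenerate}, \ref{lem:semidegenerate}, \ref{lem:eliminate}) to factor each gate into independent layers, argue uniqueness of the factorization, and then multiply the layer counts using the orders of the classical matrix groups ($GL_n(\ftwo)$, the orthogonal group, $U_n(\mathbb{F}_4)$, and $Sp_{2n}(\ftwo)$). Your presentation is somewhat more explicit about identifying the $^*$-involution on each subring with transposition, Frobenius-transpose, or the symplectic form, details the paper delegates to the cited literature, but this is a difference of exposition rather than of method.
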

\begin{proof}
Most of these numbers follow from the lemmas above. For example, consider the class $\class{C(Z,X), \P, \R_Z}$. It follows from Lemma~\ref{lem:eliminate} that any gate in this class has a circuit consisting of a layer of $C(Z,X)$ gates, then a layer of $C(Z,Z)$ gates, then a layer of single-qubit gates in $G$. 

We would like to count the number of possible gates by multiplying the number of possibilities for each layer, but we must be careful that there is no gate with two circuit representations. Suppose for a contradiction that $g_1$ and $g_2$ generate the same gate, but some layer of $g_1$ differs from $g_2$. Then $g_1^{-1} g_2$ is the identity, since $g_1$ and $g_2$ generate the same transformation.

On the other hand, the $C(Z,X)$ layers of $g_1$ and $g_2$ meet in the middle of the circuit for $g_1^{-1} g_2$. If those layers do not generate the same linear transformation, then the combination is some non-trivial linear transformation which is, in particular, not $Z$-degenerate. The other layers of $g_1$ and $g_2$ \emph{are} $Z$-degenerate, so we conclude that $g_1^{-1} g_2$ is not $Z$-degenerate (if it were, we could invert the outer layers to show that the two middle layers are $Z$-degenerate). But $g_1^{-1} g_2 = I$ is clearly $Z$-degenerate, therefore the $C(Z,X)$ layers of $g_1$ and $g_2$ must generate the same linear transformation.

The $C(Z,X)$ layers of $g_1$ and $g_2$ cancel (since we have shown they are equivalent), so they effectively disappear, and we make a similar argument about the $C(Z,Z)$ layers, and then the single-qubit layers. That is, if the $C(Z,Z)$ layers do not contain the same set of $C(Z,Z)$ gates, then we obtain a contradiction because they produce a non-degenerate layer in the middle, implying that $g_1^{-1} g_2 = I$ is non-degenerate. Once we remove the $C(Z,Z)$ layers, the single-qubit layers must be the same or they would leave behind a non-trivial single-qubit gate. We conclude that all layers of $g_1$ and $g_2$ are actually the same, so the number of gates is the product of the number of choices for each layer. 

Now the problem is to count the number of choices for each layer. For the single-qubit layer, this is clearly just $n$ independent choices of single-qubit gate from $\class{\P, \R_Z}$, or $8^n$. For the $C(Z,Z)$ layer, there is a choice whether or not to place a $C(Z,Z)$ gate in each of the $\binom{n}{2}$ possible positions, so $2^{n(n-1)/2}$ choices for the layer. For the $C(Z,X)$ layer, observe that $C(Z,X)$ generate precisely the set of invertible linear transformations, of which there are 
$$
2^{n(n-1)/2} \prod_{i=1}^{n} (2^i - 1)
$$
by a classical argument. Multiplying the three layers, we have a total of
$$
\# \class{C(Z,X),\P,\R_Z}_n = 8^n 2^{n(n-1)} \prod_{i=1}^{n} (2^i - 1)
$$
$n$-qubit transformations generated by $C(Z,X)$, $\P$, and $\R_Z$. 

The numbers for $\class{G}$, $\class{C(Z,Z), G}$, $\class{C(Z,X),\P}$, $\class{\tfour, \P}$, and $\class{\tfour, \P, \R_Z}$ follow by a similar argument, although for the last two classes we need the fact that $\tfour$ generates 
$$
a(n) = \begin{cases}
2^{m^2} \prod_{i=1}^{m-1} (2^{2i} - 1), & \text{if $n = 2m$,} \\
2^{m^2} \prod_{i=1}^{m} (2^{2i} - 1), & \text{if $n = 2m+1$.}
\end{cases}
$$ 
orthogonal transformations on $n$ qubits. 

For the final two classes, we use known expressions (see, e.g., equations (2.6.1) and (2.6.2) from \cite{wall:1963}) for the number of $n \times n$ unitary matrices over $\mathbb{F}_4$ (in the case of $\class{\tfour, \P, \Gamma}$) and for the number of $2n \times 2n$ symplectic matrices over $\ftwo$ (in the case of $\ALL$). We multiply by $4^{n}$ in both cases to account for the phase bits, which are completely independent of the matrix part. 
\end{proof}

\begin{theorem}
The asymptotic size of each class is as follows.
\begin{align*}
\log_2 \# \class{G}_n &= n \log_2(\abs{G}) + n\log_{2}n- {n}{\log_2 e}+\frac{1}{2}\log_{2}2\pi+O\left(  \frac{1}{n}\right), \\
\log_2 \# \class{C(Z,Z), G}_n &= n \log_2(\abs{G}) + \frac{n(n-1)}{2} + n\log_{2}n- {n}{\log_2 e}+\frac{1}{2}\log_{2}2\pi+O\left(  \frac{1}{n}\right), \\
\log_2 \# \class{C(Z,X), \P}_n &= n^2 + 2n -\alpha + O(2^{-n}), \\
\log_2 \# \class{C(Z,X), \P, \R_Z}_n &= \frac{3}{2}n^2 + \frac{5}{2}n -\alpha + O(2^{-n}), \\
\log_2 \# \class{\tfour, \P}_n &= \frac{1}{2}n^2 +\frac{3}{2}n -\beta + O(2^{-n}), \\
\log_2 \# \class{\tfour, \P, \R_Z}_n &=  n^2 + 3n -\beta + O(2^{-n}), \\
\log_2 \# \class{\tfour, \P, \Gamma}_n &= n^2 + 2n + \gamma  + O(2^{-n}),\\
\log_2 \# \class{\ALL}_n &= 2n^2 + 3n - \beta + O(4^{-n}).
\end{align*}
where $G$ is the same as in Theorem~\ref{thm:enumeration}, and 
\begin{align*}
\alpha &  =-\sum_{i=1}^{\infty}\log_{2}(1-2^{-i})\approx1.7919,\\
\beta &  =-\sum_{i=1}^{\infty}\log_{2}(1-4^{-i})\approx0.53839, \\
\gamma & = \sum_{i=1}^{\infty}\log_{2}(1-(-2)^{-i}) \approx 0.27587.
\end{align*}
\end{theorem}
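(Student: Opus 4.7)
The plan is to take the binary logarithm of each exact formula from Theorem~\ref{thm:enumeration} and apply two asymptotic tools: Stirling's formula for the factorial terms and a $q$-Pochhammer-style factorisation for the products of the form $\prod_i (b^i - c)$ that appear in the remaining formulas. No step requires more than routine calculation; the entire difficulty is bookkeeping.

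For the two families involving $n!$, namely $\class{G}$ and $\class{C(Z,Z),G}$, I would apply Stirling in the form $\log_2(n!) = n\log_2 n - n\log_2 e + \tfrac{1}{2}\log_2(2\pi n) + O(1/n)$ and add the exact contributions $\log_2|G|^n$ and $\log_2 2^{n(n-1)/2}$ as appropriate. The extra $\tfrac{1}{2}\log_2 n$ produced by Stirling is absorbed into the error in the stated expression.

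For the other six classes, the dominant structural contribution is a product $\prod_{i=1}^n (b^i - c)$ with $(b,c) \in \{(2,1),(4,1),(-2,-1)\}$ (using $(-2)^{-i} = (-1)^i 2^{-i}$ to handle the $\Gamma$-class). I would factor such a product as
\[ \prod_{i=1}^n (b^i - c) \;=\; b^{n(n+1)/2} \prod_{i=1}^n \bigl(1 - c\,b^{-i}\bigr). \]
The first factor contributes an exact polynomial in $n$ to the logarithm, and the second factor converges absolutely as $n\to\infty$ because $\bigl|\log_2(1 - c\,b^{-i})\bigr| = \Theta(|b|^{-i})$. The limit is exactly $-\alpha$, $-\beta$, or $\gamma$ depending on $(b,c)$, by the very definitions of those constants; the truncation error is bounded by a geometric tail, $\sum_{i>n}|b|^{-i} = O(|b|^{-n})$, giving the claimed $O(2^{-n})$ or $O(4^{-n})$ remainder. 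Combining with the Pauli/phase prefactors $4^n$, $8^n$, $2^{n^2}$ and the triangular factors $2^{n(n-1)/2}$ or $2^{n(n-1)}$ then yields the stated asymptotics for $\class{C(Z,X),\P}$, $\class{C(Z,X),\P,\R_Z}$, $\class{\tfour,\P,\Gamma}$, and $\class{\ALL}$ after elementary algebra.

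The two classes involving $a(n)$ require one extra step. Writing $n=2m$, the same factorisation gives $\log_2 a(n) = m^2 + \sum_{i=1}^{m-1}(2i + \log_2(1 - 4^{-i})) = 2m^2 - m - \beta + O(4^{-m})$, which in terms of $n$ is $n^2/2 - n/2 - \beta + O(2^{-n})$. The case $n=2m+1$ has one additional factor of $(4^m - 1)$, and the upper limit in the product shifts by one, but these two effects cancel precisely to yield the same closed form. Adding the Pauli/phase and triangular prefactors then finishes $\class{\tfour,\P}$ and $\class{\tfour,\P,\R_Z}$. The only obstacle throughout is verifying that the three polynomial contributions---Pauli/phase factor, triangular factor, and the $q$-series limit---combine correctly and that the error terms really are dominated by the advertised $O(|b|^{-n})$; since $\alpha$, $\beta$, $\gamma$ are defined as the absolutely convergent limits of the log-sums in question, no genuine analytic estimate is required beyond the geometric tail bound.
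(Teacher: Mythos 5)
Your method is exactly the paper's (the paper gives only a four-sentence sketch of it): Stirling for the factorial layers, and for each product $\prod_{i=1}^n(b^i-c)$ the factorization $b^{n(n+1)/2}\prod_i(1-c\,b^{-i})$, whose logarithm converges to $-\alpha$, $-\beta$, or $\gamma$ with a geometric tail. Your handling of the $q$-type products, of the sign alternation via $(-2)^{-i}$ for the $\Gamma$ class, and of the two parity cases of $a(n)$ is all correct.

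Two concrete problems remain, both traceable to the theorem statement itself but not resolvable the way you resolve them. First, the $\tfrac12\log_2 n$ term that Stirling produces is \emph{not} $O(1/n)$ and cannot be ``absorbed into the error'' as you assert; the formulas for $\#\class{G}_n$ and $\#\class{C(Z,Z),G}_n$ are only correct if the constant term is read as $\tfrac12\log_2(2\pi n)$ rather than the printed $\tfrac12\log_2 2\pi$, and a proof has to say that rather than make a false claim about the error term. Second, if you actually carry out the arithmetic you defer for $\class{\tfour,\P,\R_Z}$, the exact count $8^n\,2^{n(n-1)/2}a(n)$ from Theorem~\ref{thm:enumeration} gives $3n+\tfrac{n(n-1)}{2}+\tfrac{n(n-1)}{2}-\beta+O(2^{-n})=n^2+2n-\beta+O(2^{-n})$, not the stated $n^2+3n-\beta+O(2^{-n})$; so ``adding the prefactors'' does not finish that case as claimed, and the discrepancy needs to be flagged rather than passed over. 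All the other seven lines check out against the exact formulas.
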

\begin{proof}
We take the logarithm of each class size, which we can separate into the logarithm of each layer comprising that class, as in Theorem~\ref{thm:enumeration}.  For most layers this is straightforward, except for the layer of permutations, orthogonal transformations, or general linear transformations.  The first we handle with Stirling's approximation.  For the other two, we factor out powers of two leaving a partial sum of a convergent series, which we analyze with a Taylor expansion.  The classes $\class{\tfour, \P, \Gamma}_n$ and $\class{\ALL}_n$ follow by similar techniques.
\end{proof}

\begin{corollary}
Let $\mathcal{C}$ be any class, and let $G$\ be an $n$-qubit
gate chosen uniformly at random from $\mathcal{C}$.  Then%
\[
\Pr\left[G \text{ generates }\mathcal{C}\right]  =1-O(2^{-n}).
\]
\end{corollary}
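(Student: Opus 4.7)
The plan is to reduce the statement to a union bound over maximal strict subclasses, and then to bound each term using the enumeration formulas from the preceding theorem. Specifically, by Theorem~\ref{thm:final}, a gate $G \in \mathcal{C}$ fails to generate $\mathcal{C}$ if and only if $G$ lies in some strict subclass $\mathcal{C}' \subsetneq \mathcal{C}$ of the lattice. Equivalently, $G$ must lie in one of the (finitely many, in fact at most a constant number of) maximal strict subclasses of $\mathcal{C}$ in the classification. Hence
\begin{equation*}
\Pr[G \text{ does not generate } \mathcal{C}] \;\leq\; \sum_{\mathcal{C}' \lessdot \mathcal{C}} \frac{\#\mathcal{C}'_n}{\#\mathcal{C}_n},
\end{equation*}
where the sum is over maximal strict subclasses. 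Since there are at most $57$ classes total, it suffices to show that each individual ratio is $O(2^{-n})$.

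Next I would verify this ratio case-by-case using the asymptotic formulas from the preceding theorem. For every pair $\mathcal{C}' \lessdot \mathcal{C}$ where $\mathcal{C}$ is non-degenerate, inspection of the $\log_2$ expressions shows that the leading term of the exponent strictly decreases by $\Theta(n^2)$ (e.g., $\class{\tfour,\P} \lessdot \class{\tfour,\P,\R_Z}$ drops from $\tfrac{1}{2}n^2$ to $n^2$, and $\class{\tfour,\P} \lessdot \class{C(Z,X),\P}$ drops from $\tfrac{1}{2}n^2$ to $n^2$), so the ratio is $2^{-\Omega(n^2)}$, vastly smaller than $2^{-n}$. The only potentially tight pairs therefore occur among the degenerate classes $\class{G}$ and their immediate successors $\class{C(P,P),G}$.

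For these tight cases I would use the exact formulas. If $\mathcal{C} = \class{G}$ and $\mathcal{C}' = \class{G'}$ with $G' \lessdot G$ a maximal proper subgroup, then the subgroup index is at least $2$, so
\begin{equation*}
\frac{\#\class{G'}_n}{\#\class{G}_n} \;=\; \left(\frac{|G'|}{|G|}\right)^{\!n} \leq\, 2^{-n}.
\end{equation*}
A degenerate class can also be a maximal strict subclass of a non-degenerate one (e.g., $\class{G} \lessdot \class{C(Z,Z),G}$), in which case the ratio is $2^{-n(n-1)/2}$, again $O(2^{-n})$. No other kind of edge in the lattice needs a separate argument, since anything involving a jump in the $n^2$-coefficient of $\log_2 \#\mathcal{C}_n$ is already handled.

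Summing the constantly many $O(2^{-n})$ contributions gives the claimed $O(2^{-n})$ bound. The main (minor) obstacle is simply confirming that every covering relation in the lattice satisfies the $2^{-n}$ gap; the worst case is the degenerate-to-degenerate edges, which is exactly where a maximal proper subgroup saturates the bound with equality up to a constant.
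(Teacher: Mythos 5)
Your proof is correct and is essentially the argument the paper intends: the corollary is stated without an explicit proof precisely because it follows from Theorem~\ref{thm:final} (so that $\class{G}\neq\mathcal{C}$ iff $G$ lies in one of the finitely many maximal strict subclasses) together with a union bound using the enumeration formulas, which is exactly what you do. Your case analysis correctly isolates the tight edges as those where only the single-qubit subgroup shrinks (index at least $2$, giving the $2^{-n}$ rate, and the same cancellation handles $\class{C(P,P),G'}\lessdot\class{C(P,P),G}$), with all other covering relations contributing $2^{-\Omega(n^2)}$.
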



\section{Classical reversible gates with quantum ancillas}
\label{app:classical_quantum}
In this section we describe what the classical reversible gate lattice of Aaronson et al.\ \cite{ags:2015} would look like under quantum ancillas.  We extend all classical gates discussed in that paper to the quantum setting in the most natural way.  Figure~\ref{fig:classical_quantum_lattice} shows the new (dramatically simpler) lattice.

Some of the collapses in the lattice are immediate.  For instance, the class $\class{\NOT \otimes \NOT}$ collapses with the class $\class{\NOT}$ because $\NOT \ket{+} = \ket{+}$.  A similar collapse occurs been all classes where the parity issue arises, such as between the classes $\class{\CNOTNOT}$ and $\class{\CNOT}$.

A more interesting collapse occurs between all mod-$k$-preserving classes for $k \ge 2$.  Consider the following gate $G : \{0,1\}^k \rightarrow \{0,1\}^k$ of order 2 which preserves Hamming weight mod $k$:
\begin{align*}
G(0^k) &= 1^k \\
G(1^k) &= 0^k \\
G(1^a 0^b 0) &= 1^{a-1} 0^{b+1} 1 \\
G(1^a 0^b 1) &= 1^{a+1} 0^{b-1} 0
\end{align*}
where $G$ acts as the identity on all other inputs.  Since $G$ preserves the Hamming weight mod $k$, it must appear in the class.  We will show that $G$ can generate a $\NOT$ gate.  Let 
$$\ket{\psi_k} := \frac{1}{\sqrt{k}} \sum_{i=0}^{k-1} \ket{1^i 0^{k-i-1}}$$
so, for example
$$\ket{\psi_4} = \frac{\ket{000} + \ket{100} + \ket{110} + \ket{111}}{2}.$$
Now, for $b \in \{0,1\}$, $G(\ket{\psi_k}\ket{b}) = \ket{\psi_k}\ket{b\oplus 1}$.  Therefore, each mod-$k$-preserving class for $k \ge 2$ collapses to the $\class{\Fredkin, \NOT}$ class.  Furthermore,  Figure~\ref{fig:CNOT_from_Fredkin_NOT} shows that the Fredkin and NOT gates are sufficient to generate a $\CNOT$. Therefore, every non-conservative non-affine class generates all classical reversible transformations.  

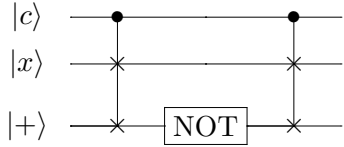
\begin{figure}
\centering
\begin{minipage}[c]{.3\textwidth}
\Qcircuit @C=1.5em @R=1.5em {
\ket{c} & & \ctrl{2} & \qw & \ctrl{2} & \qw \\
\ket{x} & & \qswap &  \qw & \qswap & \qw \\
\ket{+} & & \qswap \qw & \gate{\NOT} & \qswap \qw & \qw
}
\end{minipage}
\caption{Generating CNOT from Fredkin and NOT gates}
\label{fig:CNOT_from_Fredkin_NOT}
\end{figure}

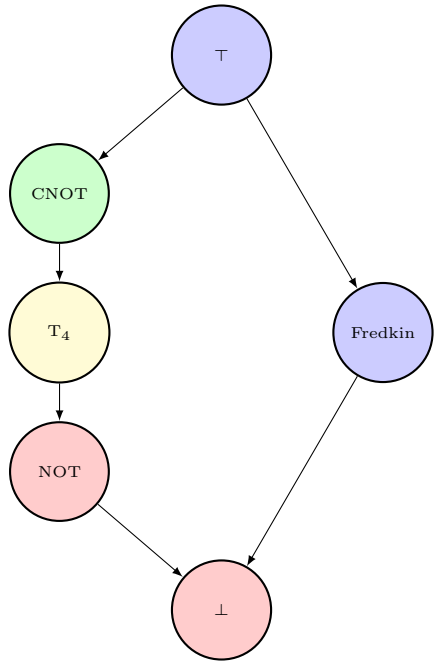
\begin{figure}
\begin{center}
\begin{tikzpicture}[>=latex]
\tikzstyle{class}=[circle, thick, minimum size=1.2cm, text width=1.0cm, align=center, draw, font=\tiny]
\tikzstyle{nonaffine}=[class, fill=blue!20]
\tikzstyle{affine}=[class, fill=green!20]
\tikzstyle{orthogonal}=[class, fill=yellow!20]
\tikzstyle{none}=[class,fill=red!20]
\matrix[row sep=0.5cm,column sep=0.8cm] {
& \node (ALL) [nonaffine]{$\top$}; &\\
\node (CNOT) [affine]{$\CNOT$}; & & \\
\node (T4) [orthogonal]{$\tfour$}; & & \node (FREDKIN) [nonaffine]{$\Fredkin$}; \\
\node (NOT) [none]{$\operatorname{NOT}$}; & & \\
& \node (NONE) [none]{$\bot$}; & \\
};
\path[draw,->] 
(CNOT) edge (T4)
(T4) edge (NOT)
(NOT) edge (NONE)
(ALL) edge (CNOT)
(ALL) edge (FREDKIN)
(FREDKIN) edge (NONE);
\end{tikzpicture}
\end{center}
\caption{The inclusion lattice of classical gates using quantum ancillas.}
\label{fig:classical_quantum_lattice} 
\end{figure}

We now only need to prove that the classes appearing in Figure~\ref{fig:classical_quantum_lattice} are distinct.  Notice, however, that the classes $\class{\CNOT}$, $\class{\tfour}$, and $\class{\NOT}$ all have Clifford gate generators, which by the results of this paper, generate distinct classes.  We only need to show that the $\class{\Fredkin}$ class is distinct from the remaining classes.  However, the invariant in \cite{ags:2015} more or less functions to prove this separation.  Namely, Fredkin conserves the Hamming weight of its input.  Therefore the sum of the Hamming weights of the computational basis states of the input state is conserved.  However, the NOT gate necessarily changes this sum, witnessing that $\NOT \notin \class{\Fredkin}$, and therefore that the lattice is complete.


\section{Three-qubit generator for \texorpdfstring{$\class{\tfour, \Gamma, \P}$}{<T4,Gamma,P>}}
\label{app:gta}

The $\tfour$ gate is a minimal generator for a class of orthogonal gates, both in our classification (i.e., $\class{\tfour, \P}$), and in the Aaronson et al.\ \cite{ags:2015} classification. Surprisingly, when we add $\Gamma$ gates to this class, it has a \emph{three}-qubit generator. This is most easily seen by counting (using the enumeration results from Appendix~\ref{app:enumeration}):
\begin{align*}
&\# \class{\tfour, \P, \Gamma}_3 = 2^{3(3-1)/2+2(3)} \prod_{i=1}^{3} (2^{i} - (-1)^{i}) = 41472, \\
&\# \class{\Gamma, \P}_3 = 12^3 3! = 10368, \\
&\# \class{\tfour, \P}_3 = 4^{3} 2^{1^2} \prod_{i=1}^{1} (2^{2i} - 1) = 384, \\
&\# \class{\P}_3 = 4^3 3! = 384.
\end{align*}
We see that $\class{\tfour, \P}$ and $\class{\P}$ have the same number of gates on three qubits, but $\class{\tfour, \P, \Gamma}$ has substantially more gates than $\class{\Gamma,\P}$. Notice that there are 4 cosets of $\class{\Gamma, \P}_3$ in $\class{\tfour, \P, \Gamma}_3$ by Lagrange's Theorem, corresponding to 4 gates that are nonequivalent up to applications of elements in $\class{\Gamma, \P}_3$.  If we let $\alpha = \left( \begin{smallmatrix}0 & 1 \\ 1 & 1 \end{smallmatrix} \right) \in \ring$, then one such gate is described by the following tableau
$$\left(\begin{array}{ccc|c} \alpha & I  & I  & 0\\ I & \alpha & I & 0\\ I & I & \alpha & 0 \end{array}\right) .$$
This is equal to the gate $O_3$ (up to single-qubit gates) that appears in Section~\ref{sec:universal_construction}. By Theorem~\ref{thm:final}, this gate generates all of $\class{\tfour, \P, \Gamma}$.


\section{Canonical form from Section~\ref{sec:universal_construction}}
\label{app:canonical_formTODO}

Recall the decomposition for an $n$-qubit Clifford gate $G$ in Lemma~\ref{lem:decomp}: 
\begin{equation}
    G = (P \otimes D) \circ \mathrm{SWAP}(1,i) \circ \left( \prod \CNOT(r_j) \right) \circ O_{2k-1} \circ \left( \bigotimes \G(r_j) \right).
    \label{eqn:decomp}
\end{equation}
It decomposes as single-qubit gates, an $O_{2k-1}$ gate, generalized CNOT gates, an optional SWAP, a Pauli on the first qubit, and an arbitrary Clifford gate $D$ on the remaining $(n-1)$ qubits. Conceptually, the decomposition is a mapping $\mu \colon \CC_n \to \CC_{n-1} \times Q_n$, taking an $n$-qubit Clifford to an $(n-1)$-qubit Clifford (i.e., $D$) and some subset of Clifford gates $Q_n$ (i.e., all gates except $D$ in the decomposition). In this appendix we will argue that $\mu$ is actually a bijection, leading to a nice canonical form for Clifford gates. 

First, $\mu$ is clearly one-to-one since we can multiply the decomposition out to recover the original Clifford $G$. To show that $\mu$ is bijective, we simply need to argue that the domain and codomain have the same size. We conveniently have an expression for the number of Clifford operations from Appendix~\ref{app:enumeration}, and can compute the ratio 
$$
\frac{|\CC_{n}|}{|\CC_{n-1}|} = \frac{4^n 2^{n^2} \prod_{i=1}^{n} (4^i - 1)}{4^{n-1} 2^{(n-1)^2} \prod_{i=1}^{n-1} (4^i - 1)} = 2^{2n+1} (4^{n} - 1).
$$
\begin{lemma}
The decomposition produces $|Q_n|$ possible $n$-qubit Clifford circuits (excluding $D$) where 
$$
|Q_n| = 2^{2n+1} (4^{n} - 1).
$$
\end{lemma}
\begin{proof} We will count the number of configurations for the gates in each layer of the decomposition.  A key parameter is the number of invertible elements in the first column of $\Tstar(G)$, which determines the number of single-qubit gates we apply.  Let $S$ be the set of indices for these invertible elements with $s := |S|$.

There are $6$ single-qubit gates (corresponding to the $6$ invertible elements of $\ring$), and we apply one such gate to each qubit in $S$ for a total of $6^s$ possible single-qubit layers given $S$. Next, we consider the non-invertible elements in the column, corresponding to the $10$ choices of generalized CNOT gate (including the identity gate).  Since we apply a CNOT targeting each qubit \emph{not} in $S$, this gives $10^{n-s}$ choices for this layer. Finally, we have four choices for a Pauli $P$. The remaining degrees of freedom in the circuit all have a canonical choice: take $i$ as the first invertible element in $S$, perform the generalized CNOT gates (which may not commute) in the order of their target (not $i$), and likewise for the single-qubit gates (except they \emph{do} commute). The targets of the SWAP are determined by $i$, and we omit it if and only if $i=1$. The targets of the $O_{s}$ gate are $S$, and the orientation is fixed by $i$.

Now we sum over $s$ and count the number of choices for $S$ (i.e., $\binom{n}{s}$), the single-qubit gates ($6^{s}$), and generalized CNOT gates ($10^{n-s})$).
\begin{align*}
    |Q_n| &= 4 \cdot \sum_{\text{$s$ odd}} \binom{n}{s} 6^{s} 10^{n-s} \\
    &= 2 \cdot \left( \sum_{s=0}^{n} \binom{n}{s} 6^{s} 10^{n-s} - \sum_{s=0}^{n} \binom{n}{s} (-6)^{s} 10^{n-s} \right) \\
    &= 2 \left( (6 + 10)^{n} - (10 - 6)^{n} \right) \\
    &= 2^{2n+1} (4^{n} - 1).
\end{align*}
\end{proof}
It follows that $|\CC_{n}|/|\CC_{n-1}| = 2^{2n+1} (4^n - 1) = |Q_n|$, so the decomposition is a bijection.


\section{Canonical form for 2-qubit circuits}
\label{app:canonical_form}

In this section, we describe a very clean canonical form for 2-qubit Clifford circuits.  

\begin{theorem}
Let $G$ be any Clifford circuit on two qubits.  Then, $G$ is equivalent to a circuit of at most depth 3 composed of the following sequence of gates
\begin{enumerate} \itemsep 0px \parskip 0px
\item a $\SWAP$, and
\item a tensor product of single-qubit gates, and 
\item a generalized CNOT gate,
\end{enumerate}
where we can choose at each step whether or not to include the gate.  That is, $G$ is of the form of the circuit depicted in Figure~\ref{fig:2_qubit_canonical}.
\end{theorem}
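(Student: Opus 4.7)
The plan is to reduce the theorem to a tableau-level factorization. Writing $M = \Tstar(C) = \begin{pmatrix} a & b \\ c & d \end{pmatrix}$ with $a, b, c, d \in \ring$, and using $\Tstar(\CNOT(e)) = \begin{pmatrix} I & e^* \\ e & I \end{pmatrix}$ together with the block-diagonal form of $\Tstar(g_1 \otimes g_2)$, one sees that the matrix part of the desired canonical form is
$$M_{\mathrm{target}} = \Tstar(\SWAP)^s \begin{pmatrix} \Tstar(g_1) & \Tstar(g_1)\, e^* \\ \Tstar(g_2)\, e & \Tstar(g_2) \end{pmatrix},$$
so the goal is to factor an arbitrary unitary tableau $M$ in this form and then match phase bits.

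I would first use the unitarity of $M$ to pin down its invertibility pattern. Since $xx^* = \det(x)\,I$ in $\ring$, the diagonal of $MM^* = I$ gives $\det(a) + \det(b) = 1$ and $\det(c) + \det(d) = 1$ in $\ftwo$, while $M^*M = I$ yields the analogous column equations. Hence exactly one entry in each row and each column of $M$ is invertible, so the invertibility pattern is a $2 \times 2$ permutation matrix. In the off-diagonal case (with $b, c$ invertible), multiplying $M$ on the left by $\Tstar(\SWAP)$ swaps its two block-rows and puts it in the diagonal pattern; factoring this new matrix as below and then re-prepending $\Tstar(\SWAP)$ recovers the canonical form with $s = 1$. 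So WLOG assume $a, d$ are invertible and $b, c$ are non-invertible.

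Next I define $\Tstar(g_1) = a$, $\Tstar(g_2) = d$, and $e = d^{-1} c$, which is non-invertible as a product of an invertible and a non-invertible element of $\ring$. The one nontrivial check is $e^* = a^{-1} b$, equivalently $c a^* = d b^*$; applying $^*$ to the off-diagonal identity $ac^* + b d^* = 0$ arising from $MM^* = I$ gives precisely $c a^* = d b^*$. A direct computation then confirms that the matrix part of $M$ equals that of the canonical form.

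Finally I would match the phase bits. The generalized CNOT contributes zero phase bits, and by Table~\ref{table:invert} each invertible element of $\ring$ is realized by four different single-qubit gates, one per choice of the associated sign-bit pair; thus $g_1$ and $g_2$ may be multiplied by any Pauli on their respective qubits without disturbing the matrix factorization. Since applying $X$ or $Z$ to qubit $j$ toggles exactly one of the four phase bits of the full tableau (as noted in Section~\ref{sec:tableaux}), the phase bits of $M$ can be matched independently. The main obstacle is establishing the identity $c a^* = d b^*$ from unitarity so that $e$ is well-defined on the nose; once this is in hand, the rest of the construction is routine bookkeeping.
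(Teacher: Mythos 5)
Your proof is correct, but it takes a genuinely different route from the one in the paper. The paper's argument is a circuit-rewriting one: decompose $C$ into CNOT, $\theta_{X+Z}$, and $\R_Z$ gates, push all single-qubit gates to the left (using the fact that conjugating a generalized CNOT by single-qubit gates yields another generalized CNOT), and then coalesce the resulting string of generalized CNOTs into a single one via the explicit identities of Table~\ref{table:coalescing_rules}. You instead work entirely at the tableau level: the row and column relations $\det(a)+\det(b)=\det(c)+\det(d)=\det(a)+\det(c)=\det(b)+\det(d)=1$ coming from $MM^*=M^*M=I$ force a permutation pattern of invertible entries, after which the factorization $M=\Tstar(\SWAP)^s\,\mathrm{diag}(a,d)\left(\begin{smallmatrix} I & e^* \\ e & I\end{smallmatrix}\right)$ with $e=d^{-1}c$ follows from the single unitarity identity $ac^*=bd^*$. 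This is a clean argument, and it buys something the paper's proof does not make explicit: it shows the decomposition is essentially forced by the tableau, which ties directly into the uniqueness/counting arguments of Appendix~\ref{app:enumeration}; the paper's rewriting proof, in exchange, produces the reusable coalescing identities that Lemma~\ref{lem:coalesce_cnots} depends on. One small imprecision in your final step: the rule ``applying $X$ to qubit $j$ negates $v_{2j}$'' from Section~\ref{sec:tableaux} describes a Pauli \emph{post-composed} with the circuit, whereas your Paulis sit in the middle layer, before the generalized CNOT. The fix is immediate --- the residual gate $C\circ D^{-1}$ has identity matrix part and is therefore a Pauli string, which can be commuted through the generalized CNOT (conjugation by a Clifford preserves the Pauli group) and absorbed into $g_1\otimes g_2$ --- but as written the toggling claim is applied to the wrong layer.
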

\begin{proof}
Since $G$ is a Clifford circuit, it can be written as a product of CNOT, $\theta_{X+Z}$ (Hadamard), and $\R_Z$ gates.  Recall that conjugating a generalized CNOT gate by a single-qubit gate is simply another generalized CNOT gate.  Therefore, we can push all the single-qubit gates left and all the generalized CNOT gates right.  All that remains to show is that we can coalesce the generalized CNOT gates into a single CNOT gate.  We refer to Table~\ref{table:coalescing_rules} for those equivalences, and note that identical generalized CNOT gates cancel.  Eventually, what remains is a circuit composed of single-qubit gates, SWAP gates, and at most one generalized CNOT gate.  We can push the SWAP gates to the left (they collapse to either a single SWAP gate or the identity) and combine the single-qubit gates, which completes the proof.   
\end{proof}

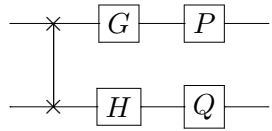
\begin{figure}
$$\Qcircuit @C=1.5em @R=1.5em {
 & \qswap  & \gate{G} & \gate{P}& \qw \\
 & \qswap \qwx  \qwx & \gate{H} & \gate{Q} \qwx & \qw}$$
\caption{Canonical form of a 2-qubit circuit: optional SWAP gate, single-qubit gates $G$ and $H$, and $C(P,Q)$ gate.}
\label{fig:2_qubit_canonical}
\end{figure}

\begin{table}
\centering
\newcommand{\raiseq}{-15px}
\newcommand{\cnotcolumn}{1.5em}
\newcommand{\cnotrow}{1.5em}
\begin{align*}
\Qcircuit @C=\cnotcolumn @R=\cnotrow {
& \gate{P} & \gate{P} & \qw  \\
& \gate{R} \qwx & \gate{Q} \qwx & \qw}
   &\raisebox{\raiseq}{=}
\Qcircuit @C=\cnotcolumn @R=\cnotrow {
& \gate{\R_{P}} & \gate{P} & \qw  \\
& \qw & \gate{P} \qwx & \qw } \\ 
\Qcircuit @C=\cnotcolumn @R=\cnotrow {
& \gate{P} & \gate{Q}& \qw  \\
& \gate{P} \qwx & \gate{R} \qwx &  \qw}
   &\raisebox{\raiseq}{=}
\Qcircuit @C=\cnotcolumn @R=\cnotrow {
& \qswap  & \gate{\Gamma}  & \gate{P} & \qw \\
& \qswap \qwx  & \gate{\Gamma^\dag} & \gate{P} \qwx & \qw} \\ 
\Qcircuit @C=\cnotcolumn @R=\cnotrow {
& \gate{P}  & \gate{Q}   & \qw  \\
& \gate{P} \qwx & \gate{Q} \qwx & \qw}
   &\raisebox{\raiseq}{=}
\Qcircuit @C=\cnotcolumn @R=\cnotrow {
& \qswap  & \gate{\theta_{P+Q}}  & \gate{P}& \qw  \\
& \qswap \qwx & \gate{\theta_{P+Q}} & \gate{P} \qwx & \qw} \\ 
\Qcircuit @C=\cnotcolumn @R=\cnotrow {
& \gate{P}  & \gate{Q}   & \qw  \\
& \gate{Q} \qwx & \gate{R} \qwx & \qw}
   &\raisebox{\raiseq}{=}
\Qcircuit @C=\cnotcolumn @R=\cnotrow {
& \qswap  & \gate{\R_Q}  & \gate{P}& \qw  \\
& \qswap \qwx & \gate{\R^\dag_Q} & \gate{Q} \qwx & \qw} \\ 
\Qcircuit @C=\cnotcolumn @R=\cnotrow {
& \gate{P}  & \gate{Q}   & \qw  \\
& \gate{Q} \qwx & \gate{P} \qwx & \qw}
   &\raisebox{\raiseq}{=}
\Qcircuit @C=\cnotcolumn @R=\cnotrow {
& \qswap  & \gate{P}& \qw  \\
& \qswap \qwx & \gate{Q} \qwx & \qw} 
\end{align*}
\caption{Rules for coalescing generalized CNOT gates, assuming $\Gamma P \Gamma^\dag =  Q$ and $\Gamma Q \Gamma^\dag =  R$.}
\label{table:coalescing_rules}
\end{table}

\end{document}